
\documentclass{article}

\def\priv{\eps}
\def\privdelta{\delta}
\def\ab{B}
\def\acc{\alpha}
\def\failp{\beta}
\def\pmech{\calM}

\newcommand{\bEE}[1]{ \mathbb{E}\left[ #1 \right] }

\def\calD{\mathcal{D}}

\def\calF{\mathcal{F}}

\def\calM{\mathcal{M}}

\def\E{\mathbb{E}}
\def\R{\mathbb{R}}

\def\tm{\tilde{m}}

\def\Pr{\textnormal{Pr}}

\renewcommand{\epsilon}{\varepsilon}
\newcommand{\eps}{\varepsilon}

\usepackage{microtype}
\usepackage{graphicx}
\usepackage{booktabs} %
\usepackage{comment}
\usepackage{subcaption}

\usepackage{hyperref}

\usepackage[accepted]{icml2025_arxiv}

\usepackage{amsmath}
\usepackage{amssymb}
\usepackage{mathtools}
\usepackage{amsthm}
\usepackage{thmtools}
\usepackage{thm-restate}

\usepackage[capitalize,noabbrev]{cleveref}

\theoremstyle{plain}
\newtheorem{theorem}{Theorem}[section]

\newtheorem{lemma}[theorem]{Lemma}

\theoremstyle{definition}
\newtheorem{definition}[theorem]{Definition}

\theoremstyle{remark}

\usepackage[textsize=tiny]{todonotes}

\icmltitlerunning{Lightweight Protocols for Distributed Private Quantile Estimation}

\begin{document}

\twocolumn[
\icmltitle{Lightweight Protocols for Distributed Private Quantile Estimation}

\icmlsetsymbol{equal}{*}
\icmlsetsymbol{atBARC}{**}

\begin{icmlauthorlist}
\icmlauthor{Anders Aamand}{equal,BARC}
\icmlauthor{Fabrizio Boninsegna}{atBARC,Padova}
\icmlauthor{Abigail Gentle}{comp}
\icmlauthor{Jacob Imola}{BARC}
\icmlauthor{Rasmus Pagh}{BARC}

\end{icmlauthorlist}

\icmlaffiliation{BARC}{BARC and Department of Computer Science, University of Copenhagen, Copenhagen, Denmark}
\icmlaffiliation{Padova}{Department of Information Engineering, University of Padova, Padova, Italy}
\icmlaffiliation{comp}{School of Computer Science, University of Sydney, Sydney, Australia}

\icmlkeywords{Machine Learning, ICML}

\vskip 0.3in
]

\printAffiliationsAndNotice{\icmlEqualContribution\atBARC} 

\begin{abstract}
Distributed data analysis is a large and growing field driven by a massive proliferation of user devices, and by privacy concerns surrounding the centralised storage of data. 
We consider two \emph{adaptive} algorithms for estimating one quantile (e.g.~the median) when each user holds a single data point lying in a domain $[B]$ that can be queried once through a private mechanism; one under local differential privacy (LDP) and another for shuffle differential privacy (shuffle-DP). 
In the adaptive setting we present an $\eps$-LDP algorithm which can estimate any quantile within error $\alpha$ only requiring $O(\frac{\log B}{\eps^2\alpha^2})$ users, and an $(\priv,\privdelta)$-shuffle DP algorithm requiring only $\widetilde{O}((\frac{1}{\varepsilon^2}+\frac{1}{\alpha^2})\log B)$ users. Prior (nonadaptive) algorithms require more users by several logarithmic factors in $B$. We further provide a matching lower bound for adaptive protocols, showing that our LDP algorithm is optimal in the low-$\priv$ regime. Additionally, we establish lower bounds against non-adaptive protocols which paired with our understanding of the adaptive case, proves a fundamental separation between these models.
\end{abstract}

\section{Introduction}

A strong trend in recent years has been towards \emph{federated} computations~\cite{Kairouz2021advances} in which algorithms are run on a distributed dataset rather than by collecting data and performing the computation in a centralized manner.
This trend is motivated by the wish to protect individuals' data, as well as organizations' wish steer clear of liability issues stemming from collecting and handling private data.
The leading approach to federated computations with \emph{formal} privacy guarantees is to use differential privacy~\cite{dwork2006calibrating} which limits the amount of information that can be inferred about a given user's input by selecting the output from a suitable probability distribution defined by the inputs.
A particularly simple and appealing setup is \emph{local differential privacy} (LDP), in which each user individually sends the output of a differentially private algorithm to a central ``analyzer'', who in turn uses all the user outputs to approximate a function of the inputs.
Though LDP is not the only approach to federated computations with differential privacy, it has been influential. 
For example, LDP has been used in industrial deployments of differential privacy~\cite{cormode2018privacy,Erlingsson2014rappor,apple2017differential}, and there is a rich theory showing both upper and lower bounds on the privacy-utility trade-offs that are possible under LDP.
When giving privacy guarantees under LDP it is common to consider \emph{pure} differential privacy since it is known that any non-adaptive protocol satisfying approximate differential privacy can be converted into an equivalent one satisfying pure differential privacy~\cite{bun2019heavy}.
An interesting aspect of LDP algorithms is that they can be used as building blocks of more sophisticated algorithms offering better privacy utility trade-offs (with stronger trust assumptions), for example in the~\emph{shuffle model}~\cite{bittau2017prochlo, cheu2018distributed,erlingsson2019amplification}.

{\bf Quantile estimation.}
In this paper we study the problem of \emph{quantile estimation} under local differential privacy constraints: In quantile estimation, we are given a dataset $X$ consisting of $n$ datapoints $x_1,\dots,x_n \in [B]:=\{1,\dots,\ab\}$, where $\ab$ is an integer parameter\footnote{As we will see, even with the weaker privacy guarantee of central DP, it is provably impossible to obtain meaningful error guarantees in the continuous setting, where user data is from e.g., $[0,1]$. However, in~\cref{sec:continuous}, we will discuss how our algorithms can be brought to work in the continuous setting as well only requiring mild assumptions on the distribution of the input data.}. 
Given this dataset, we define the empirical CDF $F_X:[B]\to [0,1]$ by 
\begin{equation}
\label{eq: empirical cdf}
F_X(i)=\frac{1}{n}|\{j\in [n]\mid x_j\leq i\}|,
\end{equation}
that is, $F_X(i)$ is the fraction of elements in the dataset that are smaller than or equal to $i$.
Given $q\in (0,1)$ we would ideally like to output an approximate $q$th quantile of the dataset, that is, a value $m\in [B]$ such that $F_X(m)$ is approximately equal to $q$.
Such a value $m$ may not always exist, for example if all $x_i$ are equal. Instead, we measure the approximation guarantee in terms of a parameter $\alpha \in (0,1)$ and we are happy to report a value $m$ such that  $q$ is contained in the interval $[F_X(m)-\alpha,F_X(m+1)+\alpha]$.

{\bf Adaptive Local Differential Privacy.}
In this paper we consider LDP algorithms that work in rounds, making \emph{adaptive} choices of what information should be released in each round.
An adaptive LDP protocol involves $n$ users indexed by $i=1,\dots,n$, with user $i$ holding a data item $x_i$, and an \emph{aggregator} that coordinates the protocol.
In round $t$ the aggregator \emph{queries} a set $I_t\subseteq \{1,\dots,n\}$ of one or more users, asking them to run a differentially private mechanism $\mathcal{M}_t$ on their data.
The output of $\mathcal{M}_t(x_i)$ is then sent to the aggregator for each $i\in I_t$.
Protocols in this model can be adaptive in the sense that the choice of mechanism $\mathcal{M}_t$ can depend on the results of mechanisms in rounds $1,\dots,t-1$.
We consider \emph{sequentially adaptive} protocols in which the query sets $I_1, I_2,\dots$ are disjoint, such that the privacy guarantee for each user is simply determined by the privacy guarantee of the mechanism that was used for the LDP report on their data (if any).
In contrast \emph{non-adaptive} LDP protocols can be run in a single round. The private mechanism $\mathcal{M}_t$ is predetermined and does not depend on the outputs of $\mathcal{M}_1,\dots, \mathcal{M}_{t-1}$. It is often the case that all $\mathcal{M}_t$ are the same. 
Adaptive mechanisms often offer much improved utility/privacy tradeoffs compared to their non-adaptive counterparts but they are harder to coordinate and thus less desirable from a practical perspective.

For quantile estimation, each user $i$ holds the datapoint $x_i\in[B]$ and our goal is to estimate a quantile, with error described as above, such that $\calM_t$ satisfies LDP.
It is not hard to see that an algorithm for estimating the median, that is $q=1/2$, can be used to estimate any quantile of the dataset with only a constant factor increase of the approximation guarantee.
This is because we can reduce the general case to the median by introducing $n$ additional, virtual users holding data, $(1-q)n$ users each holding the value $1$ and $qn$ users holding the value $\ab$ (see Lemma \ref{appendix: padding argument}).
Thus, for our algorithm we focus on estimating the median. We refer to this problem with desired accuracy $\alpha$ and LDP privacy parameter $\eps$ as $\texttt{LDPemp-median}(\{x_i\}_{i=1}^n, \alpha, \epsilon)$ (see Section~\ref{sec:preliminaries} for the formal definition). 
We derive a sequentially adaptive algorithm with the following guarantee:

\begin{theorem}\label{thm:main-emp}
For all $\alpha \in (0,\frac{1}{4})$, and $\eps\in(0,1)$, there exists a sequentially adaptive $\eps$-LDP protocol solving \texttt{LDPemp-median}$(\{x_i\}_{i=1}^n,\alpha,\eps)$ with probability at least $1-\frac{1}{\ab}$ for any dataset with $n\geq c\frac{\log \ab}{\eps^2\alpha^2}$ for a universal constant $c$.
\end{theorem}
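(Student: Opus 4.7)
The plan is a binary search on $[\ab]$ driven by $\eps$-randomized response, made $O(\log \ab)$-efficient by viewing each comparison as a noisy query in the sense of Karp--Kleinberg/Burnashev--Zigangirov. Starting from $[\ell_0, r_0] = [1, \ab]$, in round $k$ I would query the midpoint $m_k = \lfloor (\ell_k + r_k)/2 \rfloor$ by drawing a fresh batch $I_k$ of $n_0 = \Theta(1/(\eps^2 \alpha^2))$ users and asking each $i \in I_k$ to send $\eps$-randomized response applied to the bit $\mathbf{1}[x_i \leq m_k]$. Debiasing yields an unbiased estimator $\hat F_k$ of $F_X(m_k)$ of variance $O(1/(n_0 \eps^2))$, so Hoeffding gives $|\hat F_k - F_X(m_k)| \leq \alpha/4$ with constant probability $\geq 3/4$, independent of $\ab$.

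Next, I would recast the search as a noisy search with tolerance. Let $S := \{j \in [\ab] : F_X(j) \in [1/2 - \alpha, 1/2 + \alpha]\}$ be the acceptable set; by monotonicity of $F_X$, $S$ is a contiguous interval. If $|F_X(m_k) - 1/2| \geq \alpha/2$, then a ``good'' estimate correctly identifies the half of the current interval that meets $S$; otherwise $m_k$ itself already lies in $S$ and any decision preserves correctness. Hence each round supplies a comparison oracle that is correct with probability $\geq 3/4$ whenever it matters.

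To avoid the $\log \ab$ penalty incurred by union bounding over a naive binary search, I would embed these oracles into a Karp--Kleinberg/Burnashev--Zigangirov noisy-binary-search procedure: with $T = c \log \ab$ queries, each required to be correct only with constant probability, the procedure outputs an element of $S$ with probability at least $1 - 1/\ab$. Since the batches $I_k$ are disjoint and each user participates in a single $\eps$-LDP mechanism, the protocol is sequentially adaptive and $\eps$-LDP, and the total sample size is $T \cdot n_0 = O(\log \ab / (\eps^2 \alpha^2))$, matching the theorem. Reducing an arbitrary quantile $q$ to the median is handled by the padding argument of Lemma~\ref{appendix: padding argument}.

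The main obstacle is the noisy-binary-search step: we need the output to land \emph{anywhere} inside the interval-valued set $S$ (not at a fixed target) within $O(\log \ab)$ queries and with confidence $1 - 1/\ab$. One route is a direct Bayesian analysis in the Karp--Kleinberg style---maintain a posterior over the median, update it via each noisy comparison, and show that after $T = \Theta(\log \ab)$ rounds the posterior mass on $S$ is at least $1 - 1/\ab$. A simpler self-contained alternative exploits that $S$ is a contiguous interval, so the task reduces to localizing the interval containing $S$ rather than a single point; here a bespoke tournament-style search suffices, whose correctness follows from a Chernoff bound on the number of ``wrong-side'' queries across the $O(\log \ab)$ levels.
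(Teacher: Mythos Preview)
Your batch-based approach is a plausible alternative to the paper's single-user-per-round protocol, but there is a genuine gap in the claim that $\hat F_k$ is an unbiased estimator of $F_X(m_k)$. For any \emph{fixed} threshold $m$, a batch drawn uniformly from the full dataset would indeed give an unbiased estimate of $F_X(m)$; but here the batches $I_1,\dots,I_T$ are disjoint (sampled without replacement) and the query $m_k$ is chosen adaptively from $y_1,\dots,y_{k-1}$, so $I_k$ and $m_k$ are correlated through the earlier draws. Conditional on the history, $\hat F_k$ is unbiased only for the CDF of the \emph{remaining} users at $m_k$, not for $F_X(m_k)$. By round $k$ a $(k-1)/T$ fraction of the data has been removed, and without further argument the remaining CDF can drift by $\Theta(1)$ from $F_X$. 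Your Hoeffding step therefore does not yield the claimed $\alpha/4$-closeness to $F_X(m_k)$ with probability $3/4$, and the Karp--Kleinberg/Burnashev--Zigangirov analysis---which needs each comparison to be \emph{conditionally} correct with constant probability---is left unsupported.

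This drift is exactly the obstruction the paper isolates. Lemma~\ref{lemma:CDF-bound} shows, via a martingale argument on the random ordering, that the suffix CDF stays uniformly within $\alpha$ of $F_X$ throughout the first $n/2$ removals, with high probability in $\ab$. The paper then processes one user at a time and proves (Theorem~\ref{thm:NBS-changing-probabilities}) that the Gretta--Price \texttt{BayeSS} algorithm still succeeds when each coin's bias is adversarially perturbed by up to $c\alpha$ per round---precisely the drift that Lemma~\ref{lemma:CDF-bound} permits. Your scheme can be repaired along the same lines: once Lemma~\ref{lemma:CDF-bound} is in hand, each fresh batch is drawn from a pool whose CDF is within, say, $\alpha/8$ of $F_X$, so the comparison becomes conditionally correct with constant probability and the noisy-binary-search argument goes through. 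But the drift bound is the missing key idea here, not the interval-target issue you flag as the main obstacle.
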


The algorithm queries one user at a time (so each $|I_t| = 1$) and proceeds for $n$ rounds. In terms of communication and run time, our algorithm is efficient: each user communicates just $1$ bit to the server, and each round has update time $O(\log B)$. %
In addition, we show that the error of our protocol is \emph{optimal} up to constant factors under (sequentially-adaptive) LDP:

\begin{theorem}\label{thm:main-lower}
Suppose that $B$ is sufficiently large, $\alpha \leq \frac{1}{2}$, and $\epsilon < 1$. Any sequentially adaptive LDP protocol solving %
\texttt{LDPemp-median}$(\{x_i\}_{i=1}^n,\alpha,\eps)$ with probability at least $3/4$ for any dataset of size $n \geq n_0$ must have $n_0=\Omega\left(\frac{\log \ab}{\eps^2\alpha^2}\right)$.
\end{theorem}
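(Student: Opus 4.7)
The plan is to establish the lower bound through a Bayesian hypothesis-testing reduction and Fano's inequality, leveraging the sharp ``squared-TV'' KL bound for sequentially interactive LDP protocols due to Duchi, Jordan, and Wainwright.

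\textbf{Hard family.} For each $v \in V := \{2,\dots,B-1\}$ define a three-point distribution $D_v$ on $[B]$ by $D_v(1) = D_v(B) = \tfrac{1}{2} - \tfrac{3\alpha}{2}$ and $D_v(v) = 3\alpha$. Then $F_{D_v}(v-1) = \tfrac{1}{2} - \tfrac{3\alpha}{2}$ and $F_{D_v}(v) = \tfrac{1}{2} + \tfrac{3\alpha}{2}$, so a short case check shows that $m = v-1$ is the \emph{unique} $\alpha$-approximate median of $D_v$. Crucially, $d_{TV}(D_v, D_{v'}) = 3\alpha$ for every pair $v \neq v'$ in $V$, since changing $v$ only shifts a single lump of mass.

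\textbf{Bayesian reduction to recovery of $v$.} Let $\Pi$ be a sequentially adaptive $\eps$-LDP protocol that succeeds on every dataset of size $n$ with probability at least $3/4$. Draw $v \sim \mathrm{Unif}(V)$ and then $X \sim D_v^n$. By DKW, for $n \geq C\log B/\alpha^2$ the empirical CDF $F_X$ is within $\alpha/4$ of $F_{D_v}$ uniformly except with probability $o(1)$; on this event the unique $\alpha$-approximate empirical median of $X$ is again $v-1$. Hence the output of $\Pi$ induces an estimator $\hat v$ with $\Pr[\hat v = v] \geq \tfrac{3}{4} - o(1)$.

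\textbf{Information bound and conclusion.} For any sequentially interactive $\eps$-LDP protocol one has
\[
\mathrm{KL}\bigl(\Pi(D_v^n)\,\big\|\,\Pi(D_{v'}^n)\bigr) \;\leq\; c\, n\, \eps^2\, d_{TV}(D_v, D_{v'})^2 \;=\; O(n\eps^2 \alpha^2).
\]
Convexity of KL in its second argument gives $I(v ; \Pi) \leq O(n\eps^2 \alpha^2)$, while Fano's inequality combined with the near-perfect recovery of $v$ forces $I(v ; \Pi) = \Omega(\log|V|) = \Omega(\log B)$. Comparing the two bounds yields $n = \Omega(\log B /(\eps^2 \alpha^2))$, completing the proof.

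\textbf{Main obstacle.} The crucial ingredient is the \emph{squared} TV dependence in the LDP KL inequality. A naive per-user KL accounting on deterministic datasets, which would differ in the $\Theta(\alpha n)$ positions where the median mass lives, only gives $\Omega(\log B/(\eps^2 \alpha))$, off by a factor of $\alpha$. Recovering the extra factor of $\alpha$ requires working with i.i.d.\ samples from $D_v$ and exploiting the stronger tensorization valid for sequentially adaptive LDP. The DKW precondition $n \gtrsim \log B/\alpha^2$ is subsumed by the target lower bound whenever $\eps \leq 1$, so it entails no loss.
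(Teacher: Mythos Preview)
Your proposal is correct and follows essentially the same approach as the paper: both construct a hard family of three-point distributions indexed by the location of the middle atom, invoke Fano's inequality together with the Duchi--Jordan--Wainwright squared-TV information bound for sequentially interactive $\eps$-LDP protocols, and then transfer the statistical lower bound to the empirical problem via concentration of the sampled CDF. The only cosmetic differences are the mass placed at the middle atom ($3\alpha$ versus the paper's $4\alpha$) and your use of DKW rather than a pointwise Hoeffding bound in the reduction step; note that for the three-point family only two CDF values matter, so Hoeffding with threshold $O(1/\alpha^2)$ already suffices and avoids any worry about constants when checking that the concentration precondition is subsumed by the target bound for all $\eps<1$.
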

\emph{Remark:} The above theorem is stated for the median, but as we will see, the same lower bound holds for estimating any quantile $q\in (2\alpha,1-2\alpha)$. 
Note that for $q\leq \alpha$ or $q\geq 1-\alpha$, there is a trivial protocol that outputs either $1$ or $B$. 
Combined with the above observation for reducing a general quantile to the median, our results therefore show that $\Theta(\frac{\log B}{\eps^2\alpha^2})$ is essentially the correct bound for quantile estimation under sequentially-adaptive LDP in the high privacy regime. 

 \paragraph{Non-Adaptive Protocols.}
\cref{thm:main-emp,thm:main-lower} settle the optimal privacy/utility tradeoffs for adaptive LDP protocols in the high privacy regime. 
As we will discuss in Section~\ref{sec:prior-work}, all non-adaptive mechanisms that we are aware of require a $\text{polylog}(\ab)$ factor more users to solve \texttt{LDPemp-median}, which can be significant as $B$ is typically a large parameter such as $2^{32}$. 
Our privacy/utility tradeoff in the adaptive case is therefore much better than for known non-adaptive protocols, but as discussed non-adaptive protocols more practical appealing. A natural question is thus if the gap is inherent. We settle this question in the positive essentially showing that~\emph{any} non-adaptive protocol must incur an additional logarithmic factor in $B$ in the number of users required for a desired accuracy. Thus, non-adaptivity, while practically desirable, comes at a significant price in utility. Our result is as follows.
\begin{theorem}\label{thm:intro-lower-non-interactive}
    Suppose $\ab$ is sufficiently large, $B^{-\Omega(1)} \leq \acc \leq c $ for a universal constant $c$, and $\priv \leq \frac{1}{\log(1/\acc)}$.
    Suppose that there exists a non-adaptive $\priv$-LDP algorithm solving \texttt{LDPemp-median}$(\{x_i\}_{i=1}^n, \acc, \priv)$ with probability at least $\frac{3}{4}$ for any dataset of size $n \geq n_0$. 
    Then $n_0 = \Omega\left(\frac{\log^2(\ab)}{\acc^2\priv^2 \log(1/\acc)^4}\right)$. 
\end{theorem}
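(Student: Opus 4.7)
The plan is to prove the lower bound via a packing argument together with the KL-contraction bound for non-adaptive local randomizers (\`a la Duchi--Jordan--Wainwright), wrapped in Assouad's lemma. Set $k := \Theta(\log \ab / \log^2(1/\acc))$ and pick $k$ test points $p_1 < \cdots < p_k$ in $[\ab]$ with geometrically growing gaps. For each $\sigma \in \{0,1\}^k$, define a distribution $P_\sigma$ on $[\ab]$ consisting of a fixed ``base'' carrying $(1-\acc)$ of the mass plus $k$ equal perturbation masses of weight $\acc/k$, each located near $p_j$ in a way controlled by $\sigma_j$. Tune the base so that the median of $P_\sigma$ lies in a $\sigma$-dependent window: any $\acc$-accurate median estimator for $P_\sigma$ must recover $\Omega(k)$ coordinates of $\sigma$ in expectation, and a standard Chernoff plus union bound transfers this property to the empirical median of $n$ iid samples from $P_\sigma$.

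Next, reduce to a multi-coordinate hypothesis test and apply Assouad's lemma. Letting $J \sim \mathrm{Unif}(\{0,1\}^k)$ and writing $T_{J_j=b}$ for the transcript distribution averaged over the coordinates of $J$ other than $j$, Assouad's lemma forces $\sum_{j=1}^k \mathrm{TV}(T_{J_j=0}, T_{J_j=1}) = \Omega(k)$. Because the protocol is non-adaptive, the transcript factorises across users and KL tensorises: $\mathrm{KL}(T_{J_j=0} \,\|\, T_{J_j=1}) = \sum_{i=1}^n \mathrm{KL}(\calM_i(P_{j,0}) \,\|\, \calM_i(P_{j,1}))$, where $P_{j,b}$ is the single-user marginal conditioned on $J_j=b$. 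The Duchi--Jordan--Wainwright bound---applicable in the small-$\priv$ regime guaranteed by $\priv \le 1/\log(1/\acc)$---yields $\mathrm{KL}(\calM_i(P_{j,0}) \,\|\, \calM_i(P_{j,1})) \le O(\priv^2) \cdot \mathrm{TV}(P_{j,0}, P_{j,1})^2$, and since flipping one coordinate only alters a single mass of weight $\acc/k$, $\mathrm{TV}(P_{j,0}, P_{j,1}) = O(\acc/k)$. Summing over $j$ and $i$ gives $\sum_j \mathrm{KL}(T_{J_j=0} \,\|\, T_{J_j=1}) \le O(n \priv^2 \acc^2 / k)$.

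Putting the two sides together via Pinsker (coordinatewise) and Cauchy--Schwarz, $\Omega(k) \le \sum_j \mathrm{TV}(T_{J_j=0}, T_{J_j=1}) \le \sqrt{k \cdot \sum_j \mathrm{KL}_j} \le O(\sqrt{n \priv^2 \acc^2})$, which rearranges to $n = \Omega(k^2 / (\priv^2 \acc^2)) = \Omega(\log^2 \ab / (\priv^2 \acc^2 \log^4(1/\acc)))$, matching the claimed bound. The quantitative non-adaptive penalty is captured by the squared TV in the contraction bound: the randomizer must serve all $k$ coordinate tasks simultaneously, so the aggregate $\sum_j \mathrm{TV}(P_{j,0}, P_{j,1})^2$ is a factor $k$ smaller than if the mechanism could concentrate its budget on one coordinate. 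The tensorisation step is precisely what fails in the adaptive case, preventing the argument from contradicting \Cref{thm:main-lower}.

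The main obstacle will be the packing construction of Step 1: producing $\{P_\sigma\}$ such that (i) the median is a sufficiently Lipschitz function of $\sigma$ that an $\acc$-accurate estimator must recover $\Omega(k)$ coordinates, while simultaneously (ii) a single-bit flip only changes one user's marginal by $\mathrm{TV} = O(\acc/k)$. The geometric spacing of the $p_j$ is what decouples the coordinates and yields (i); the gap between $k = \log \ab / \log^2(1/\acc)$ and the more naive $\log \ab / \log(1/\acc)$ absorbs the overhead from union-bounding the Chernoff concentration of the $\acc/k$ perturbations across all $2^k$ hypotheses at once, so that the empirical median of $n$ samples inherits the intended $\sigma$-dependence uniformly over $\sigma$.
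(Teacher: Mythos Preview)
Your Assouad-based plan has a genuine gap in Step~1, and I do not see how to repair it. The tension is between requirements (i) and (ii): you ask that each single-coordinate flip $\sigma \mapsto \sigma \oplus e_j$ move the user distribution by only $\mathrm{TV}=O(\acc/k)$, while simultaneously an $\acc$-accurate median of $P_\sigma$ must pin down $\Omega(k)$ coordinates of $\sigma$. But $\mathrm{TV}(P_\sigma,P_{\sigma\oplus e_j})\le \acc/k$ forces $\|F_{P_\sigma}-F_{P_{\sigma\oplus e_j}}\|_\infty\le \acc/k$, so the set of $\acc$-approximate medians of $P_\sigma$ and of $P_{\sigma\oplus e_j}$ coincide up to an $(\acc+\acc/k)$-relaxation. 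Concretely, in your construction with base mass $1-\acc$ and $k$ perturbations of mass $\acc/k$ near $p_1<\cdots<p_k$, the CDF crosses $1/2$ near $p_{k/2}$ no matter what $\sigma$ is; only the one coordinate whose perturbation sits at the crossing can influence the median, and the remaining $k-1$ coordinates are invisible to any median estimator. The geometric spacing separates the $p_j$ in value space but does nothing to make the median depend on more than $O(1)$ of the $\sigma_j$. In short, a single scalar output with quantile error $\acc$ cannot carry $\Omega(k)$ Hamming-separable bits when each bit flip only moves the CDF by $\acc/k$.

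The paper circumvents this by not trying to read $k$ bits from one median. Instead it shows that a non-adaptive median algorithm, combined with the padding trick, can be post-processed into an estimator for \emph{every} quantile $q\in(0,1)$ using the \emph{same} user responses (only the simulated dummy responses change). Evaluating at $q\in\{\acc,2\acc,\dots\}$ gives a CDF estimate with $\ell_\infty$-error $O(\acc)$; boosting the per-quantile success probability to $1-\acc^2$ via $O(\log(1/\acc))$ independent repetitions costs a factor $\log(1/\acc)$ in $\priv$. One then invokes the non-adaptive CDF lower bound $\Omega(\log^2 \ab/(\priv^2\acc^2\log^2(1/\acc)))$ of Edmonds--Nikolov--Ullman, which \emph{is} an Assouad-style argument of exactly the shape you sketch---but for CDF learning, where $k$ test points really are $k$ separate estimation tasks. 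Your Steps 2--3 are essentially what lives inside that black box; what is missing is the reduction turning one median algorithm into $\Theta(1/\acc)$ quantile estimates, and that reduction is where non-adaptivity is genuinely exploited.
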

In particular, when $\eps,\alpha=\Theta(1)$, the number of users must be $\Omega(\log^2 B)$ whereas our previous theorems show that $O(\log B)$ suffices for adaptive protocols. The authors believe that the high logarithmic dependence on $1/\alpha$ is an artifact of the proof.

\paragraph{The Shuffle Model}

Shuffle differential privacy (shuffle-DP)~\cite{bittau2017prochlo,Cheu19DDPS} captures the idea that using a random permutation to shuffle a large enough set of somewhat private user messages, thus making their origins indistinguishable, boosts the privacy guarantee for each user. 
More precisely, in shuffle-DP, each user applies a LDP protocol to their data and then sends the output to a trusted \emph{shuffler} whose only task is to randomly permute the users data before forwarding it to a central data curator.
The privacy boost achieved from shuffling was analysed in~\cite{feldman21shuffle} (see~\cref{theorem: amplification by shuffling}).
To gain the privacy boost, the batch of users shuffled can not be too small. This
makes it fundamentally incompatible with highly adaptive protocols having $n$ rounds of adaptivity, and each batch of size one. To bypass this, we consider protocols which run in a bounded number of rounds, shuffling the users queried in each round, simultaneously obtaining both the benefits of adaptivity and the boosted privacy from shuffling.

We refer to the problem of estimating the median of $n$ users within accuracy $\alpha$ using $r$ adaptive rounds of shuffling under $(\eps,\delta)$-DP as $\texttt{shuffle-emp-median}(\{x_i\}_{i=1}^n,\alpha,\eps,\delta,r)$ (see~\cref{sec:preliminaries} for a formal definition).
We provide a protocol for this problem with $r=\log_2\ab$ and $n=(\log B)\cdot\widetilde{O}\left(\frac{1}{\eps^2}+\frac{1}{\alpha^2}\right)$.
\begin{theorem}
\label{thm:main-shuffle}
    Let $r=\log_2 B$ and $\eps,\alpha<1$. There exists a protocol for \texttt{shuffle-emp-median}$(\{x_i\}_{i=1}^n,\alpha,\eps,\delta,r)$ with success probability $1-\failp$ in the sequentially interactive model, provided that
    \[
    n=O\left( \left(\frac{1}{\acc^2} +\frac{1}{\priv^2}\right)\log\ab\sqrt{\log(1/\privdelta)\log\frac{\log\ab}{\failp}} \right).
    \]
    The protocol queries shuffled batches of $n/\log_2(\ab)$ users. 
\end{theorem}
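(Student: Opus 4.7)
The plan is to implement a private noisy binary search on $[B]$. I partition the $n$ users into $r=\log_2 B$ disjoint batches of size $n_b=n/r$, and maintain an interval $[\ell_t,r_t]\subseteq[B]$ intended to contain an $\alpha$-approximate median, initialized to $[1,B]$. In round $t$, the curator computes the midpoint $m_t=\lfloor(\ell_t+r_t)/2\rfloor$ and queries batch $t$: each user $i$ produces one randomized-response bit reporting $\mathbf{1}[x_i\le m_t]$ with local privacy parameter $\eps_0$. The bits are passed through the shuffler and aggregated into an unbiased estimate $\hat p_t$ of $F_X(m_t)$. If $\hat p_t\ge 1/2$ the curator recurses on $[\ell_t,m_t]$, otherwise on $[m_t,r_t]$; after $r$ rounds, the remaining endpoint is output.

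\textbf{Privacy.} Because the batches are disjoint, each user is queried in exactly one round, and no composition across rounds is required -- the per-user privacy cost equals the privacy of one round's shuffled mechanism. Within a round, $n_b$ outputs of an $\eps_0$-LDP mechanism are permuted, and by \cref{theorem: amplification by shuffling} the resulting view is $(\eps,\delta)$-DP provided $\eps_0\le c\cdot\eps\sqrt{n_b/\log(1/\delta)}$ for an absolute constant $c$. I would pick $\eps_0$ at this ceiling, capped by $O(1)$ (a further increase buys no additional signal from randomized response).

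\textbf{Utility and correctness.} The debiased randomized-response estimator is bounded, with variance $O\bigl(1/(\min(1,\eps_0)^2 n_b)\bigr)$, so Hoeffding gives $|\hat p_t-F_X(m_t)|\le\alpha$ with probability $1-\beta/r$ provided $n_b=\Omega\bigl(\log(r/\beta)/(\alpha\min(1,\eps_0))^2\bigr)$. Substituting $\eps_0=\Theta(\min(1,\eps\sqrt{n_b/\log(1/\delta)}))$ and solving the coupled inequality for $n_b$ yields the additive form $n_b=O\bigl((1/\alpha^2+1/\eps^2)\sqrt{\log(1/\delta)\log(r/\beta)}\bigr)$, which multiplied by $r=\log_2 B$ matches the stated bound. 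Correctness then reduces to the standard noisy binary-search invariant: conditional on $|\hat p_t-F_X(m_t)|\le\alpha$ for every $t$ (which, by a union bound, fails with probability $\le\beta$), the interval $[\ell_t,r_t]$ always contains some $m$ with $1/2\in[F_X(m)-\alpha,F_X(m+1)+\alpha]$, so the output inherits this guarantee.

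The main obstacle I anticipate is the coupled parameter tuning: the amplification theorem bounds $\eps_0$ from above by a function of $n_b$, while the Hoeffding bound requires $n_b$ to be large in terms of $1/\eps_0$, so the two constraints must be juggled to arrive at the clean additive $1/\alpha^2+1/\eps^2$ scaling (rather than a product or a max). A secondary but minor subtlety is preserving the binary-search invariant when $F_X(m_t)$ lies within $\alpha$ of $1/2$; this is handled by tracking any approximate median rather than a fixed target.
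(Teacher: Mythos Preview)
Your plan matches the paper's: naive binary search over $r=\log_2 B$ rounds with disjoint shuffled batches of $n/r$ users, randomized response per user, amplification by shuffling, per-round concentration, and a union bound over the $r$ rounds.

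The one substantive deviation is your cap $\eps_0=O(1)$ (``a further increase buys no additional signal''). The paper instead works in the \emph{large} local-privacy regime: it sets $\eps_L=\log\bigl(\eps^2 n_b/(C\log(1/\delta))\bigr)$ via \cref{lemma: amplification by shuffling}, so that $e^{\eps_L}=\Theta(\eps^2 n_b/\log(1/\delta))$, and then uses Bernstein's inequality (\cref{lem:empirical-coin-learn-rr}) to learn the \emph{batch} sample mean with $n_b=O\bigl(\tfrac{1}{\alpha\eps}\sqrt{\log(1/\delta)\log(r/\beta)}\bigr)$ users. The gap between the batch CDF and the full CDF is handled separately by invoking \cref{lemma:CDF-bound}, which contributes the side constraint $n\gtrsim\log B/\alpha^2$; together with the amplification side constraint $n_b\gtrsim\log(1/\delta)/\eps^2$ (needed for $\eps_L>0$), the paper then crudely replaces $1/(\alpha\eps)$ by $1/\alpha^2+1/\eps^2$ to obtain the stated bound. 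Your capped route, by contrast, folds the sampling variance into the Hoeffding step and solves to $n_b=O\bigl(\log(r/\beta)/\alpha^2+\log(1/\delta)/\eps^2\bigr)$, which is \emph{incomparable} to the theorem's $\sqrt{\log(1/\delta)\log(r/\beta)}$ form (for instance it is larger when $\log(1/\delta)\gg\log(r/\beta)$ and $\eps\ll\alpha$). So the ``juggling'' you anticipate is exactly where the two arguments differ: to land on the stated expression you should drop the cap and let $\eps_0$ scale logarithmically with $n_b$, as the paper does.
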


We believe that the framework of combining shuffling with rounds of adaptivity might be of interest for many other problems. On the one hand, we could use a non-adaptive protocol with shuffling, getting better dependence on $\eps$ and $\acc$, but this would incur additional logarithmic factors in $\ab$. On the other hand, we could use a sequentially adaptive algorithm like in~\cref{thm:main-emp}, but then we lose the benefits of shuffling since each batch has size $1$.~\cref{thm:main-shuffle} demonstrates that protocols having several adaptive rounds using shuffling of each batch, can provide the best of both worlds.

\textbf{Experiments} In Section~\ref{sec:experiments}, we demonstrate that the algorithm in Theorem~\ref{thm:main-emp} performs favorably compared to known non-adaptive mechanisms as well as a more naive noisy binary search mechanism.

\subsection{Related Work}\label{sec:prior-work}

\paragraph{Differential Privacy} Differential privacy is considered the gold standard in private data analysis due to its rigorous guarantees, e.g., immunity to side information, and other useful properties \cite{dwork2006calibrating, dwork2014algorithmic}. 
A number of mechanisms exist for releasing medians and general quantiles for centralized DP. First, one may instantiate mechanisms based on local sensitivity~\cite{nissim2007smooth, dwork2009differential, asi2020near}, since quantiles often have low local sensitivity for many datasets. More recently, specialized mechanisms for medians \cite{ tzamos2020optimal, drechsler2022nonparametric,aliakbarpour2023differentially} and quantiles \cite{wilson2019differentially, gillenwater2021differentially, alabi2022bounded} have been proposed to obtain even lower error but they require certain mild assumptions on the distribution of the data points.
The case where data points can be arbitrary from some discrete domain $[B]$, like for us, has been well studied in the central setting. The sequence of works,~\cite{BeimelNS16twotologstar,Bun2015logstar,Kaplan2020closinggap} gradually reduced the number of users needed for accuracy $\alpha$ to $\tilde O(\frac{1}{\eps}\log^{1.5} (1/\delta)\cdot (\log^*B)^{1.5})
)$ for $(\eps,\delta)$-privacy. This almost matches the the $\Omega(\log^*(B))$ lower bound from~\cite{AlonLMM19}. A corollary of this lower bound is that even with central DP, no algorithm can achieve $o(1)$ quantile error in the continuous setting regardless of how many users there are.
\paragraph{Local Differential Privacy} There is increasing interest in local differential privacy (LDP), where the central aggregator is not trusted, and each user applies a DP mechanism to their data before broadcasting it. LDP mechanisms for many problems and accompanying lower bounds were shown in \cite{duchi2013local}. A ubiquitous LDP protocol that we will utilize is \emph{randomized response} (See~\cref{def: binary rr}), where answers to a binary query are flipped with probability $\frac{1}{1+e^\varepsilon} \approx \frac{1}{2}-\epsilon$. For the median problem, an LDP algorithm was found in \cite{duchi2018minimax} under a different loss function, the difference between the estimate and the median in the \emph{data domain}. This loss function is subject to strong lower bounds (a linear dependence on the domain size). %
The most relevant work to our setting is the so-called \emph{hierarchical mechanism}~\cite{kulkarni2019answering}.

\paragraph{Hierarchical mechanism} 
The hierarchical mechanism uses the $b$-adic decomposition of the interval $[0,\ab]$ (which is a $b$-ary tree of depth $\Theta(\log_b(\ab))$ whose nodes at level $\ell$ correspond to intervals of length $\frac{\ab}{b^\ell}$). Each participant uniformly selects a level $\ell$ at random and employs standard frequency LDP oracles \cite{bassily2015local,wang2017locally} to disclose which node at level $\ell$ their data belongs to. The central aggregator may then combine the frequency oracles at each level to answer any range query. 
A particular use of range queries with relative error $\alpha$ is for constructing an $\alpha$-approximate CDF of the data set, which in turn can be used to approximate every quantile within error $\alpha$. Unfortunately, dividing the user data among levels worsens the dependence on $\log(\ab)$. In Appendix~\ref{app:hierarchical-mech}, we demonstrate that the hierarchical mechanism can be used to solve \texttt{LDPemp-median} with $n =O(\frac{\log^3 \ab}{\epsilon^2 \alpha^2})$ users. %
In terms of the polynomial dependence on $\log B$, there is still a multiplicative $\log B$ gap between this upper bound and the lower bound of~\cref{thm:intro-lower-non-interactive} which would be very interesting to close.

\paragraph{Shuffle Differential Privacy}
The central model of DP requires that data be collected non-privately by the curator, which
results in extremely accurate protocols. %
On the other hand, in the local model users do not trust anyone, and the response to any query must be privatized before it is broadcast by the user. In shuffle-DP, each user applies a LDP protocol to their data and then sends the output to a trusted \emph{shuffler} whose only task is to randomly permute the users data before forwarding it to a central data curator. This places shuffling as a middle ground between these two models in terms of both trust and accuracy. 

Understanding the separation between the local, shuffle, and central models of privacy, and therefore the trade-offs between trust and accuracy, is of both theoretical, and practical interest. For a survey of such separations, see~\cite{cheu2021differential}. 

\paragraph{Noisy binary search and threshold queries} Consider an algorithm that sequentially picks a \textit{threshold query} $m\in[\ab]$, then samples a user from the database $X$ and receives the bit $y=[x\leq m]$. Since $\Pr[y = 1] = F_X(m)$, finding an integer $m$ such that $F_X(m) \approx q$ reduces to the \textit{noisy binary search} problem. This search over a CDF with \textit{threshold query} sample access, exactly mirrors searching over a monotonically increase sequence of coins. Noisy binary search was introduced by~\cite{karp2007noisy} with a tight bound of $\Theta(\log(\ab)/\acc^2)$, later improved by constant factors by \cite{gretta2023sharp}, which holds for the non-private median when samples are accessed via threshold queries in the statistical setting.

\paragraph{Structure of the Paper}
In~\cref{sec:preliminaries}, we introduce necessary preliminaries for our theoretical analyses. In~\cref{sec:tech-contributions}, we provide an overview of our main ideas and technical contributions.~\cref{sec:proof-of-main-adaptive-up} is dedicated to proving~\cref{thm:main-emp}. In~\cref{sec:lower-bound}, we prove the lower bounds of~\cref{thm:main-lower,thm:intro-lower-non-interactive}. In~\cref{sec:naive-shuffle}, we provide the proof of~\cref{thm:main-shuffle}. Finally, in~\cref{sec:experiments} we present our experimental results.

\section{Preliminaries}\label{sec:preliminaries}
In local differential privacy, we assume that each of $n$ users hold a data point $x$ in the discrete and ordered domain $[B]= \{1, 2, \ldots, B\}$ for a positive integer $B$.
Each user will communicate to a central (untrusted) aggregator using a differentially private mechanism. We consider sequentially adaptive protocols: In round $t$ the aggregator \emph{queries} a set $I_t\subseteq \{1,\dots,n\}$ of one or more parties, asking them to run a differentially private mechanism $\mathcal{M}_t$ on their data.
The output of $\mathcal{M}_t(x_i)$ is then sent to the aggregator for each $i\in I_t$. In general, any sequentially adaptive protocol may be implemented by querying one new user over $n$ rounds\footnote{Rounds which query multiple new users may be split into many rounds, each querying one user.}. Let us label the users $1,\dots,n$ in the order in which the protocol queries them and denote the data of user $i$ by $x_i\in [B]$. Also denote the the private mechanism that user $i$ uses by $\mathcal{M}_i$ and the output $y_i=\mathcal{M}_i(x_i)$

Given the outputs $\{y_i\}_{i=1}^n$ where $y_i = \calM_i(x_i)$, the data aggregator makes an estimate of the $q$th quantile with a post-processing function $\calF$:
\[
    \tm_q = \calF(y_1, \ldots, y_n).
\]
We require that each $\calM_i$ satisfy local differential privacy:
\begin{definition}\label{def:dp}
    We say $\calM_i$ satisfies $(\epsilon,\delta)$-local DP if for all $x, x' \in [B]$, and all outputs $y$, we have
    \[
        \Pr[\calM_i(x) = y] \leq e^\epsilon \Pr[\calM_i(x') = y].
    \]
    We say that $\calM_i$ satisfies $\eps$-local DP if it satisfies $(\eps,0)$-DP.
\end{definition}
In the adaptive setting, we allow $\calM_i$ to depend on $y_1, \ldots, y_{i-1}$; i.e.
\begin{equation}\label{eq:sequential-interaction}
y_i = \calM_i(x_i, y_1, \ldots, y_{i-1}).
\end{equation}

where each $\calM_i$ satisfies Definition~\ref{def:dp} in $x_i$ (for any fixed choice of $y_1, \ldots, y_{i-1}$). In contrast, in a \emph{non-adaptive} protocol, each $\calM_i$ is fixed in advance (and usually all $\calM_i$ are the same mechanism).

To measure the utility of $\tm_q$, we use the \emph{quantile error}. For a given data set $X = (x_i)_{i=1}^n$ 
define $F_X$ %
as in \autoref{eq: empirical cdf}.
We say $\tm_q$ is an $\alpha$-approximate quantile estimate on $X$ if
\[
    \Pr[ [F_X(\tm_q), F_X(\tm_q + 1)] \cap (q - \alpha, q + \alpha) \neq \emptyset] \geq 1-\failp,
\]
where the above probability is over the randomness in $\tm_q$.
We are typically interested in the high-probability setting, where $\failp = \frac{1}{\text{poly}(\ab)}$. 

Now, we formally define the LDP median problems in both the statistical and empirical settings:

\begin{definition}[\texttt{LDPstat-median}]\label{def:med-stat}
In \emph{\texttt{LDPstat-median}}$(\mathcal{D},n,\alpha,\eps)$, $\mathcal{D}$ is an unknown distribution over $[\ab]$. Users $1,\dots,n$ sample $x_1,\dots, x_n$ according to $\mathcal{D}$. Each user $i$ outputs $y_i=\mathcal{M}_i(x_i,y_1,\dots,y_{i-1})$ where the $\mathcal{M}_i$'s are $\eps$-LDP mechanisms. The goal is to output an $\tilde m=\tilde m(y_1,\dots, y_n)\in [\ab]$ such that $\tilde m$ is an $\alpha$-approximate median of $\mathcal{D}$.
\end{definition}
\begin{definition}[\texttt{LDPemp-median}]\label{def:med-emp}
In \emph{\texttt{LDPemp-median}}$(\{x_i\}_{i=1}^n,\alpha,\eps)$, there are users $1,\dots, n$ (where the ordering is chosen by the protocol) with data points $(x_i)_{i=1}^n\in [\ab]^n$. User $i$ outputs $y_i=\mathcal{M}_i(x_i,y_1,\dots,y_{i-1})$ where the $\mathcal{M}_i$'s are $\eps$-LDP mechanisms. The goal is to output an $\tilde m=\tilde m(y_1,\dots, y_n)\in [\ab]$ such that $\tilde m$ is an $\alpha$-approximate empirical median of $\{x_i\}_{i=1}^n$.
\end{definition}

For shuffle DP, we assume that the protocol partitions the users $\{1,\dots, n\}$ into $r$ disjoint subsets $I_1,\dots, I_r$ and that each user $i\in I_t$ applies the same mechanism $\mathcal{M}_t$ to their data $x_i$ where $\mathcal{M}_t$ may be chosen adaptively based on $(\mathcal{M}_j(x_i))_{1\leq j\leq t-1,i\in I_j}$. We assume that $\pi_t:I_t\to I_t$ is a uniformly random permutation for each $t\in [r]$. Given the outputs $(y_t)_{t\in [r]}$
where,
$y_t=(\mathcal{M}_t(x_{\pi_t(i)}))_{i\in I_t}$ (in shuffled order), the data aggregator outputs
\[
    \tm_q = \calF(y_1, \dots, y_t),
\]
for a post-processing function $\calF$. We say that the protocol satisfies $(\eps,\delta)$-shuffle DP if for any $t\in [r]$, any $(x_i)_{i\in I_t},(x_i')_{i\in I_t}$ differing only in a single coordinate, and any set $S$,
\begin{align*}  &\Pr[(\mathcal{M}_t(x_{\pi_t(i)}))_{i\in I_t} \in S]
\\
\leq &e^\epsilon \Pr[(\mathcal{M}_t(x'_{\pi_t(i)}))_{i\in I_t} \in S]+\delta.
\end{align*}
    
\begin{definition}[\texttt{shuffle-emp-median}]\label{def:med-emp-shuffle}
In \emph{\texttt{shuffle-emp-median}}$(\{x_i\}_{i=1}^n,\alpha,\eps,\delta,r)$, there are users $1,\dots, n$ (where the ordering is chosen by the protocol) with data points $(x_i)_{i=1}^n\in [\ab]^n$. Using an $(\eps,\delta)$-shuffle DP mechanism with $r$ rounds of adaptivity, the goal is to output an $\tilde m=\tilde m(y_1,\dots, y_r)\in [\ab]$ such that $\tilde m$ is an $\alpha$-approximate empirical median of $\{x_i\}_{i=1}^n$.
\end{definition}

\section{Technical Contribution}\label{sec:tech-contributions}
In this section we give a high-level discussion our technical contribution for designing algorithms and proving lower bounds. For simplicity, we focus on the high privacy setting $\eps\leq 1$.
\subsection{Adaptive LDP Median Estimation via Noisy Binary Search (\cref{thm:main-emp})}\label{sec:tech-contributions-1}
At the heart of our LDP median protocol of~\cref{thm:main-emp} is an algorithm for the noisy binary search problem from~\cite{karp2007noisy}: Given an ordered set of $\ab$ coins with unknown head probabilities $\{p_i\}_{i=1}^\ab$ such that $p_1\leq \cdots \leq p_\ab$, a target $\tau \in (0,1)$, and an error $\alpha>0$, our goal is to find any coin $i$ such that 
\begin{equation}\label{eq:good-coin}
[p_i, p_{i+1}]\cap (\tau-\alpha, \tau+\alpha)\neq \emptyset,
\end{equation}
which intuitively means that the desired probability $\tau$ lies between coin $i$ and $i+1$ (up to error $\alpha$).  
We refer to a coin satisfying the above property as \emph{$(\tau, \alpha)$-good}. At each round, we may query a coin with index $j$, and we receive the result of the flipped coin.
This problem generalizes classic binary search, where for the query $t$, one would have $p_i = 0$ for all $i \leq t$ and $p_i = 1$ for all $i > t$. We will denote the general problem as \texttt{MonotonicNBS}$(\{p_i\}_{i=1}^n, \tau, \alpha)$ (omitting $\{p_i\}_{i=1}^n$ when they are clear from context).
The state-of-the-art algorithm for \texttt{MonotonicNBS} is the \emph{Bayesian Screening Search} (\texttt{BayeSS}) due to \cite{gretta2023sharp}. Their algorithm finds a $(\tau,\alpha)$-good coin using $O(\frac{\log \ab}{\alpha^2})$ samples with high probability in $\ab$ \footnote{In fact, they obtain stronger guarantees. For any $\alpha,\tau$, their algorithm uses $\frac{1}{C_{\tau,\alpha}}\left(\log \ab+O(\log^{2/3} \ab \log^{1/3}(\frac{1}{\delta})+\log(\frac{1}{\delta}))\right)$ where $C_{\tau,\alpha} = \Theta\left(\frac{\alpha^2}{\tau(1-\tau)}\right)$ for sufficiently small $\alpha$.  Moreover, by information theoretic lower bounds, any algorithm must use $\frac{1}{C_{\tau,\alpha}}\log \ab$ coin flips.}. %

To see how noisy binary search algorithms relate to median estimation under LDP, it is instructive to consider \texttt{LDPstat-median}$(\mathcal{D}, n, \alpha, \varepsilon)$.
Concretely, any sample $x$ from $\mathcal{D}$ gives a coin flip with head probability $p_i = \Pr_{x\sim \mathcal{D}}[x\leq i]$ for any $i\in [B]$.
It is a useful warmup problem, to show that one can solve \texttt{LDPstat-median} using an algorithm for \texttt{MonotonicNBS}. Plugging in the algorithm of Gretta and Price gives an algorithm for \texttt{LDPstat-median}$(\mathcal{D},n,\alpha,\eps)$ if $n \geq C \frac{\log \ab}{\eps^2\alpha^2}$ for a constant $C$. We show the precise details in Appendix~\ref{sec:statistical-median}.

For \texttt{LDPemp-median}, the situation is more complicated.  
A first idea is to reduce to the statistical setting by sampling users with replacement, thus sampling i.i.d from the empirical distribution. However, in sequentially adaptive protocols, users may only be queried once but sampling with replacement may sample a single user many times\footnote{If we allow for multiple queries to the same user, we can indeed reduce to the statistical setting by sampling users with replacement. However, some users would then be sampled up to $O(\log n/\log\log n)$ times and to maintain $\eps$-LDP, their reports would have to be made more noisy, thereby increasing the number of users needed to get an $\alpha$-approximate median. Thus, even allowing for users to be queried multiple times, it is unclear how to get optimal bounds via algorithms for \texttt{MonotonicNBS}.}. 
To resolve this issue, our main idea is to go through the users in a \emph{random order} or equivalently sample users \emph{without} replacement. Ideally, we would like to maintain the guarantees of algorithms for \texttt{MonotonicNBS}, but this problem assumes that the coin probabilities are unchanging over time. However, when sampling without replacement, the empirical CDF of the remaining users, and thus the coin probabilities, change over time. Our main technical contribution is two-fold. We first show that throughout the process, no coin probability is altered too much.
\begin{lemma}\label{lemma:CDF-bound}
Let $x_1,\dots,x_n\in[\ab]$ and let $y_i=x_{\pi(i)}$ where $\pi:[n]\to [n]$ is a random permutation. For $0\leq t< n$ and $j\in [\ab]$, we define $p_j^t=\frac{|\{t< i\leq n\mid y_i\leq j\}|}{n-t}$. Suppose that $n\geq C\frac{\log \ab}{\alpha^2}$ for a sufficiently large constant $C$. Then with high probability in $\ab$, we have for all $0\leq t\leq n/2$ and all $j\in[\ab]$ that $|p^{t}_j-p^{0}_j|\leq \alpha$.
\end{lemma}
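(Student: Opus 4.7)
The plan is to reduce the claim to concentration of hypergeometric random variables at each coordinate, and then union bound. First I would fix $t\in\{1,\ldots,n/2\}$ and $j\in[\ab]$ and let $S_{t,j}$ denote the number of indices $i\leq t$ with $y_i\leq j$. Since $\pi$ is a uniformly random permutation of $[n]$, $S_{t,j}$ has the hypergeometric distribution obtained by drawing $t$ elements without replacement from the multiset $\{x_1,\ldots,x_n\}$ and counting those at most $j$, so $\E[S_{t,j}]=t p_j^0$. Because $y_1,\ldots,y_n$ is a permutation of $x_1,\ldots,x_n$, we have the conservation identity $S_{t,j}+(n-t)p_j^t=n p_j^0$, which rearranges to
\[
p_j^t - p_j^0 \;=\; \frac{t p_j^0 - S_{t,j}}{n-t} \;=\; \frac{\E[S_{t,j}]-S_{t,j}}{n-t}.
\]
In particular, $|p_j^t-p_j^0|\leq \alpha$ is equivalent to $|S_{t,j}-\E[S_{t,j}]|\leq \alpha(n-t)$.

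Next I would invoke Hoeffding's inequality for sampling without replacement, which asserts that $S_{t,j}$ satisfies the same Hoeffding tail bound as a $\mathrm{Binomial}(t,p_j^0)$:
\[
\Pr\bigl[|S_{t,j}-\E[S_{t,j}]| > \alpha(n-t)\bigr] \;\leq\; 2\exp\!\left(-\frac{2\alpha^2 (n-t)^2}{t}\right).
\]
A one-line derivative check shows $t\mapsto (n-t)^2/t$ is strictly decreasing on $(0,n]$, so for $t\in(0,n/2]$ it is bounded below by $n/2$, making each deviation probability at most $2\exp(-\alpha^2 n)$. Union-bounding over the $\leq n\ab/2$ pairs $(t,j)$ gives total failure probability at most $n\ab\cdot 2\exp(-\alpha^2 n)$, which is $\ab^{-\Omega(1)}$ provided $n\geq C\log \ab/\alpha^2$ for a sufficiently large constant $C$.

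The main technical ingredient is Hoeffding's classical observation that sampling without replacement from a finite $[0,1]$-valued population concentrates at least as tightly as i.i.d.\ sampling; once this bound is in hand, the remainder is algebraic manipulation and a union bound. One may worry that union bounding over all $\Theta(n)$ values of $t$ loses a factor of $n$ that should not be necessary (one could alternatively use a maximal inequality, exploiting that $S_{t,j}/t$ is a reverse martingale under sampling without replacement), but since the per-pair tail decays exponentially in $n$ while our assumption $n\geq C\log \ab/\alpha^2$ is only polylogarithmic in $\ab$, this extra factor is harmlessly absorbed into $C$.
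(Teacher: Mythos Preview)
Your argument via hypergeometric concentration is a valid alternative route and differs from the paper's. The paper instead shows that $t \mapsto p_j^t - p_j^0$ is a martingale with increments bounded by $O(1/n)$ on the range $t \le n/2$, applies Azuma's maximal inequality to control $\max_{t \le n/2}|p_j^t - p_j^0|$ in one stroke, and then union bounds only over $j \in [\ab]$. This gives failure probability at most $2\ab\exp(-\Omega(\alpha^2 n)) \le \ab^{-\Omega(C)}$, with no stray factor of $n$ or $1/\alpha^2$ to worry about.

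The last paragraph of your argument, however, does not quite go through as written. The claim that the extra factor of $n$ from the union bound over $t$ is ``harmlessly absorbed into $C$'' relies on $n$ being polylogarithmic in $\ab$, but the lemma places no lower bound on $\alpha$: at the threshold $n = C\log \ab/\alpha^2$ your failure bound reads $(C\log \ab/\alpha^2) \cdot \ab^{1-C}$, and the $1/\alpha^2$ factor cannot be controlled by any universal constant $C$ once $\log(1/\alpha) \gg \log \ab$. Concretely, fix $\ab$ and take $\alpha = e^{-M}$ with $M \to \infty$; then $\alpha^2 n = C\log \ab$ stays bounded while $n \to \infty$, so $n\ab\exp(-\alpha^2 n) \to \infty$. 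The fix is exactly the one you note parenthetically: replace the per-$t$ Hoeffding bound plus union bound with a maximal inequality coming from the martingale structure of sampling without replacement. That is precisely the paper's approach.
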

Then, we show that %
the algorithm by Gretta and Price in fact also solves an \emph{adversarial} version of \texttt{MonotonicNBS} which we denote \texttt{AdvMonotonicNBS}. Here, in each round, if coin $j$ is selected to be flipped, an adversary may instead flip a coin with a bias $p$ such that $|p_j - p| \leq c\alpha$ for some $c$. The goal is to return a $(\tau,\alpha (1+c))$-good coin. A formal definition can be found in \autoref{def:adversarial}. Our result, which may be of independent interest is as follows.

\begin{theorem}\label{thm:NBS-changing-probabilities}
Let $0<\alpha\leq \frac{1}{4}$ and suppose $c\leq 1$. 
There exists an algorithm for \texttt{AdvMonotonicNBS}$(1/2, \alpha, c)$ which uses $O(\frac{\log \ab}{\alpha^2})$ coin flips and returns an $(1/2,\alpha(1+c))$-good with high probability in $\ab$.
\end{theorem}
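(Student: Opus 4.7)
The plan is to show that the Bayesian Screening Search (BayeSS) algorithm of Gretta and Price, executed with accuracy parameter $\alpha$, already solves \texttt{AdvMonotonicNBS}$(1/2,\alpha,c)$ within the claimed sample budget of $O(\log \ab/\alpha^2)$ queries. The central point is that BayeSS extracts from its samples only ``on which side of $1/2$ does each coin sit'' information, and the adversary's $c\alpha$ budget is too small to flip this side for any coin that is not already $(1/2,\alpha(1+c))$-good.

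First I would describe the set of $(1/2,\alpha(1+c))$-good coins
\[
G=\{i\in[\ab]:[p_i,p_{i+1}]\cap(1/2-\alpha(1+c),1/2+\alpha(1+c))\neq\emptyset\},
\]
which by monotonicity of $\{p_i\}$ forms a contiguous interval $[\min G,\max G]$. Unwinding the definition, any coin $i<\min G$ satisfies $p_{i+1}\leq 1/2-\alpha(1+c)$ (hence $p_i\leq 1/2-\alpha(1+c)$), and any coin $i>\max G$ satisfies $p_i\geq 1/2+\alpha(1+c)$. After the adversary's $c\alpha$-shift, every flip of a coin below $G$ therefore still has bias at most $1/2-\alpha$, and every flip of a coin above $G$ has bias at least $1/2+\alpha$. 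Inside $G$ the effective biases can be arbitrary, but any coin in $G$ is an acceptable output.

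Next I would revisit the correctness proof of BayeSS and verify that it invokes the sequence $\{p_i\}$ only through (i) coin-wise Chernoff-type concentration of empirical frequencies and (ii) the property that coins flanking the target answer set have biases at distance at least $\alpha$ from $1/2$. Ingredient (i) is inherited verbatim in the adversarial setting because each query is still an independent Bernoulli flip with a well-defined bias, and (ii) is exactly what the previous paragraph established for coins outside $G$. Hence with probability at least $1-1/\mathrm{poly}(\ab)$, BayeSS correctly certifies every coin in $L=\{1,\ldots,\min G-1\}$ as below and every coin in $H=\{\max G+1,\ldots,\ab\}$ as above threshold, forcing its output to lie in $G$.

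The main obstacle is that the adversary can destroy monotonicity of the \emph{effective} biases within $G$, which a literal black-box appeal to Gretta--Price forbids. I plan to address this by opening up their algorithm just enough to check that posterior updates are computed coin-by-coin, so that adversarial behavior inside $G$ only reweights the posteriors of coins inside $G$ and never inflates those of coins in $L\cup H$. A short case analysis on the positions of $\min G$ and $\max G$ then confirms that the final returned boundary sits in $G$. Should this internal inspection prove too delicate, an alternative is to replace BayeSS with a simpler $O(\log \ab)$-stage pivot-based search that queries each pivot $O(\alpha^{-2})$ times and compares the empirical mean to $1/2$: the coin-wise concentration bound guarantees that any pivot outside $G$ sends the search in the correct direction, and any pivot landing in $G$ is immediately a valid answer, cleanly achieving the desired $O(\log \ab/\alpha^2)$ complexity.
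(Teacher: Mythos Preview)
Your primary plan contains a genuine error. The claim that ``posterior updates are computed coin-by-coin, so that adversarial behavior inside $G$ only reweights the posteriors of coins inside $G$ and never inflates those of coins in $L\cup H$'' is false for BayeSS. A single flip of coin $j$ multiplies the weight of \emph{every} interval to the left of $j$ by $d_{y,0}$ and every interval to the right by $d_{y,1}$; nothing is localized. In particular, when the algorithm queries a coin inside $G$, the adversary can force an outcome that \emph{decreases} the posterior weight $w(a)$ of the target interval, so your case analysis (that only coins inside $G$ are affected) does not go through. This is precisely the difficulty the paper's proof has to confront: it introduces the potential
\[
\Phi(w,L)=\log_2 w(a)+12\,C_{1/2,\alpha}\bigl(|\{x\in L:x\in[\ell,r]\}|-\gamma|L|\bigr),
\]
proves that for queries inside $G$ the drop in $\log_2 w(a)$ is at most $11\,C_{1/2,\alpha}$ (this is where the factor $1+c\le 2$ is used), and lets the second term over-compensate. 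The resulting submartingale is then handled by Azuma. Your proposal does not identify this compensation mechanism, and the ``localized update'' shortcut cannot replace it.

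Your fallback---a $\log\ab$-round pivot search flipping each pivot $O(\alpha^{-2})$ times---can be made to prove the theorem as stated (which only asserts existence of \emph{some} algorithm), but the sentence ``any pivot landing in $G$ is immediately a valid answer'' is not an argument: the algorithm has no way to detect that the current pivot lies in $G$, so it will continue and may step back outside $G$. You would need to establish and maintain an invariant such as ``the current search interval always intersects $G$,'' and check it survives both the correct steps (pivot outside $G$) and the uncontrolled steps (pivot inside $G$), paying attention to whether the pivot is retained in the next interval. That is doable but is not what you wrote, and it is also not the paper's route, which analyzes BayeSS itself.
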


Now by~\cref{lemma:CDF-bound}, as we sample users without replacement, the CDF of the remaining users never changes by more than $\alpha$ at any point. In particular, for the data $x_j$ of a newly sampled user and a threshold $t\in [B]$, the probability of observing a one when applying randomized response to $[x_j\leq i]$ never varies by more than $\alpha\eps$. Denoting this probability $p_i$, we are exactly in a position to apply~\cref{thm:NBS-changing-probabilities} to conclude that $O(\frac{\log B}{\alpha^2\eps^2})$ users suffices to find a $(1/2,2\alpha\eps)$ good coin. But this translates exactly to $i$ being an $O(\alpha)$-approximate median.

The proof of~\cref{lemma:CDF-bound} and~\cref{thm:NBS-changing-probabilities} can be found in~\cref{sec:proof-of-main-adaptive-up} and~\cref{app: proof theorem 3.1}.

\subsection{Lower bounds for Adaptive and Non-Adaptive Median Estimation (\cref{thm:main-lower,thm:intro-lower-non-interactive})}
We next describe the main ideas for the lower bounds of~\cref{thm:main-lower,thm:intro-lower-non-interactive}, the full proofs are available in~\cref{sec:lower-bound}.

\paragraph{Lower Bound for Adaptive Protocols (\cref{thm:main-lower})} 
In fact, we provide a lower bound for the general quantile estimation problem, demonstrating that all quantiles (not too close to the $0$ or $1$) are as hard as the median. %
To prove this lower bound, we first prove a lower bound in the statistical setting of~\cref{def:med-stat} and then reduce to the empirical setting of~\cref{def:med-emp}.
Our building block for the statistical lower bound is the framework in~\cite{duchi2013local}, which uses the fact that a protocol attaining low error on the quantile problem, can distinguish distributions with different $q$th quantiles from each other, even from a ``hard'' family of distributions. Our hard family of distributions will be close in statistical distance, but still has different $q$th quantiles:
    \[
        P_\beta(i) = \begin{cases}
        q - 2\alpha & i = 1 \\
        4\alpha & i = \beta \\
        1-q-2\alpha & i = \ab,
    \end{cases}
    \]
for $\beta \in \{2, \ldots, \ab-2\}$. If $\beta$ is chosen uniformly at random, then our LDP distinguishing mechanism will be able to deliver $\log(\ab)$ bits of information (measured with the mutual information), by Fano's inequality. However, there is an upper bound on the amount of mutual information possible with an LDP protocol, as first established in~\cite{duchi2013local} and this leads to our desired result. %

To get a lower bound in the empirical setting, we observe that a low-error algorithm for empirical quantile estimation can be applied to also get low-error in the statistical setting by just applying it on the data sampled from $\mathcal{D}$. The approximation guarantee follows from the fact that we have enough users that the empirical $q$-quantile of the samples is an $\alpha/2$ approximation to the true $q$-quantile of the distribution $\mathcal{D}$.

\paragraph{Lower Bound for Non-Adaptive Protocols (\cref{thm:intro-lower-non-interactive})}
It turns out more challenging to obtain a lower bound for non-interactive protocols. Our proof is via a reduction to the problem of privately learning a CDF under non-interactive LDP with $\ell_\infty$-error below $\alpha$. For small $\eps$ and $\alpha$, it is known~\cite{edmondsNU20} that any such algorithm requires $\Omega(\frac{\log^2 B}{\eps^2\alpha^2\log^2 (1/\alpha)})$ users\footnote{In fact, their bound is $\Omega(\log^2 B)$, but it is relatively simple to check that their proof extends to general $\eps,\alpha\leq 1$ with mild assumptions on these parameters.}. 

Our reduction works as follows: Given a non-interactive $\eps$-LDP algorithm for median estimation which succeeds with probability $2/3$, we first boost this success probability to $1-\alpha^2$ with $O(\log 1/\alpha)$ independent repetitions and the median trick. The privacy of this protocol is thus $\eps_1=O(\eps\log(1/\alpha))$. Second, assuming access to such an algorithm succeeding with high probability, we design a non-interactive CDF approximation algorithm as follows. First, we add $2n$ dummy users $n$ of which are $0$ and $n$ of which are $1$. We run the LDP median estimation algorithm on this new set of users and by selecting how many dummy users to include from the left and from the right, we can use their responses to estimate any quantile with error probability $\alpha_1=O(\alpha)$ with probability $1-O(\alpha^2)$. Union bounding over the equally spread $O(1/\alpha)$ quantiles $\alpha,2\alpha,\dots, \lfloor 1/\alpha\rfloor\cdot \alpha$, we obtain a CDF estimation algorithm which has error $\alpha_1$ with probability $1-O(\alpha)$. In particular, the expected error of this non-interactive protocol is $O(\alpha)$. Now the lower bound from~\cite{edmondsNU20} kicks in which in turn gives the lower bound for median estimation, with the $\log^4(1/\alpha)$ stemming from the fact that we have to apply their lower bound with $\eps_1=O(\eps\log(1/\alpha))$.

\subsection{Shuffle DP for Median Estimation (\cref{thm:main-shuffle})}
Our core contribution with~\cref{thm:main-shuffle} is to demonstrate explicit trade-offs which exist when considering trust models, and rounds of adaptivity. While adaptive algorithms which query $O(1)$ users per round are extremely sample efficient, they remain fundamentally incompatible with the shuffle model. We introduce protocols which exchange the benefits of faster learning for larger groups amenable to shuffling, and show that such protocols can compete in practical parameter regimes.

Building on the ``near-optimal'' analysis of~\cite{feldman21shuffle}, we introduce protocols with $r=\log_2\ab$ rounds of adaptivity which sample batches of $n/r$ users at each round. Our ``binary search with repetitions'' algorithm~\cref{thm:main-shuffle} iteratively draws $n/r$ users at random, and after shuffling their private outputs, learns one of the $r$ pivots up to accuracy $\acc$ and failure probability $\failp/r$. Union bounding over all $r$ steps ensures we return an $\acc$-approximate quantile with probability $1-\failp$. 

The full proof of~\cref{thm:main-shuffle} can be found in~\cref{sec:naive-shuffle}.

\section{Median Estimation with Adaptive LDP (\cref{thm:main-emp})}\label{sec:proof-of-main-adaptive-up}

In this section we prove~\cref{lemma:CDF-bound} and~\cref{thm:main-emp}, postponing the proof of \cref{thm:NBS-changing-probabilities} to \cref{app: proof theorem 3.1}.
We start with the following technical lemma.

\begin{restatable}[]{lemma}{cdfperm}\label{lemma:azuma-perm}\
Let $b_1,\dots,b_{2n}\in \{0,1\}$, $\pi: \{1,\dots,2n\}\to \{1,\dots,2n\}$ a random permutation, and $c_i=b_{\pi(i)}$ for $0\leq i < 2n$. Let $Y_i=|\{i< j\leq 2n \mid c_j=0\}|$. Further define $X_i=\frac{Y_i}{2n-i}-\frac{Y_0}{2n}$. For any $t\geq 0$,
\[
\Pr\left[\max_{1\leq i \leq n} |X_i|\geq t\right]\leq 2\exp\left(\tfrac{-t^2n}{2} \right).
\]
\end{restatable}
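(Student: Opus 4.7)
The plan is to realize $(X_i)_{i=0}^n$ as a martingale with bounded increments and apply the maximal form of the Azuma--Hoeffding inequality. Let $\mathcal{F}_i = \sigma(c_1,\dots,c_i)$ and write $q_i := Y_i/(2n-i)$, so that $X_i = q_i - q_0$ and $X_0 = 0$; it suffices to show that $(q_i)$ is an $\mathcal{F}_i$-martingale.

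To verify the martingale property I would use the fact that, conditionally on $\mathcal{F}_i$, the remaining positions $i+1,\dots,2n$ hold a uniformly random arrangement of the $Y_i$ zeros and $(2n-i)-Y_i$ ones not yet revealed, so $\Pr[c_{i+1}=0 \mid \mathcal{F}_i] = q_i$. Since $Y_{i+1} = Y_i - \mathbf{1}[c_{i+1}=0]$, this yields $\mathbb{E}[Y_{i+1}\mid \mathcal{F}_i] = Y_i - q_i = Y_i(2n-i-1)/(2n-i)$, and dividing by $2n-i-1$ gives $\mathbb{E}[q_{i+1}\mid \mathcal{F}_i] = q_i$. For the increment bound, set $a := 2n-i$; a direct two-case computation shows
\[
X_{i+1}-X_i \;=\; \begin{cases} (Y_i-a)/(a(a-1)) & \text{if } c_{i+1}=0,\\ Y_i/(a(a-1)) & \text{if } c_{i+1}=1,\end{cases}
\]
and since $0\leq Y_i\leq a$, each case has magnitude at most $1/(a-1) = 1/(2n-i-1)$.

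The final step is the maximal Azuma--Hoeffding inequality: a martingale with $|X_{i+1}-X_i|\leq d_i$ a.s.\ satisfies $\Pr[\max_{0\leq i\leq n}|X_i|\geq t] \leq 2\exp(-t^2/(2\sum_{i=0}^{n-1}d_i^2))$. Plugging in our increment bound gives $\sum_{i=0}^{n-1} d_i^2 \leq \sum_{k=n}^{2n-1} 1/k^2 \leq n\cdot(1/n^2) = 1/n$, so the exponent becomes $-t^2 n/2$. The stated bound on $\max_{1\leq i\leq n}|X_i|$ follows because $X_0=0$ makes adding the index $i=0$ harmless.

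The only step that requires genuine care is the martingale verification, which relies on the exchangeability property that a uniformly random permutation conditioned on its first $i$ outputs still induces a uniformly random arrangement of the leftover entries; the rest of the argument is an elementary increment computation combined with a standard concentration bound.
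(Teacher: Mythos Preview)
Your proof is correct and follows essentially the same approach as the paper: both verify that $(X_i)$ is a martingale via $\Pr[c_{i+1}=0\mid \mathcal{F}_i]=Y_i/(2n-i)$, compute the identical two-case increment formula, bound each increment by $1/(2n-i-1)\leq 1/n$, and apply the maximal Azuma--Hoeffding inequality. The only cosmetic difference is that the paper uses the uniform increment bound $1/n$ and rescales before invoking its Theorem~\ref{thm:azuma}, whereas you carry the bounds $d_i=1/(2n-i-1)$ into the sum $\sum d_i^2$ and then crudely bound that sum by $1/n$; the resulting exponent is the same.
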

\begin{proof}
We first note that $(X_i)_{i=0}^n$ forms a martingale. To see this, first observe that
\[
\E[Y_{i+1}\mid (X_j)_{j\leq i}]=Y_i-\frac{Y_i}{2n-i}.
\]
Indeed, conditioning on $\pi(1),\dots, \pi(i)$, the probability that $c_{i+1}=b_{\pi(i+1)}=0$ is exactly $\frac{Y_i}{2n-i}$. Thus, 
\begin{align*}
\E[X_{i+1}\mid (X_j)_{j\leq i}]&=\frac{1}{2n-i-1}\left(Y_i-\frac{Y_i}{2n-i}\right)-\frac{Y_0}{2n}\\
&=\frac{Y_i}{2n-i}-\frac{Y_0}{2n}=X_i
\end{align*}
Moreover, writing $Y_{i+1}=Y_i-b$ where $b\in \{0,1\}$ for a given $i< n$, we have
\[
|X_{i+1}-X_i|
=\frac{Y_i}{(2n-i)(2n-i-1)},
\]
if $b=0$, and 
\[
|X_{i+1}-X_i|=\frac{2n-i-Y_i}{(2n-i)(2n-i-1)},
\]
if $b=1$. Now, $Y_i$ is exactly the number of zeros among the $2n-i$ values $\pi(i+1),\dots,\pi(2n)$, so trivially $0\leq Y_i\leq 2n-i$. It follows that for $i<n$, in either of the cases $b\in\{0,1\}$,
\[
|X_{i+1}-X_i|\leq\frac{1}{2n-i-1}\leq\frac{1}{n}.
\]
Finally, $X_0=0$, so we may apply Azuma's inequality (Theorem~\ref{thm:azuma} of~\cref{app:add-def}) with an appropriate rescaling of the $X_i$'s to obtain that
\[
\Pr\left[\max_{1\leq i \leq n} |X_i|\geq t\right]\leq 2\exp\left(\frac{-t^2n}{2} \right),
\]
as desired.
\end{proof}
It is now easy to obtain~\cref{lemma:CDF-bound}.
\begin{proof}[Proof of~\cref{lemma:CDF-bound}]
Suppose without loss of generality that $n=2n'$ is even. Fix $j\in \ab$ and define $b_i=[x_i\leq j]$ and $c_i=b_{\pi(i)}=[y_i\leq j]$ for $i\in [n]$. Let $Y_t=|\{t< i\leq 2n \mid c_i=0\}|$. Then $p_j^t=\frac{Y_t}{n-t}$, so plugging into Lemma~\ref{lemma:azuma-perm}, we find that,
\begin{align*}
\Pr[\max_{0\leq t\leq n'}|p^{t}_j-p^{0}_j|\geq \alpha]&\leq 2\exp\left( \tfrac{-\alpha^2n}{2}\right)\\&\leq 2\exp\left(\tfrac{-C\log \ab}{2}\right)\leq 2\ab^{-C/2}.
\end{align*}
Choosing $C$ sufficiently large and union bounding over all $j \in [\ab]$, the result follows.
\end{proof}
Finally, assuming~\cref{thm:NBS-changing-probabilities}, we can prove our main theorem~\cref{thm:main-emp}.
\begin{proof}[Proof of Theorem~\ref{thm:main-emp}]
We pick a random permutation $\pi:[n]\to [n]$ and define $y_t=x_{\pi(t)}$, the input of user $\pi(t)$. For $j\in [\ab]$ and $t<n$, we define $q_j^t=\frac{|\{t< i\leq n\mid y_i\leq j\}|}{n-t}$ and $q_0^t=0$. Thus the map $j\mapsto q_j^t$ is the empirical CDF of the users $y_{t+1},\dots, y_n$. 

Our algorithm uses the algorithm of~\cref{thm:NBS-changing-probabilities} to solve \texttt{AdvMonotonicNBS}$(1/2,\alpha\eps/8,1)$ with the adversarial probabilities $\{p_j^t\}_{j=1}^\ab$ to be described shortly. To do so, whenever the algorithm calls for flipping a coin $j$ at step $t$, we sample a new user $x_{\pi(t)}$ and apply randomized response to the variable $[x_{\pi(t)} \leq j]$, retaining the bit with probability $\frac{e^\eps}{1+e^\eps}$ and flipping it otherwise, to get a variable $z_j^t$. By standard properties of randomized response, this protocol satisfies the $\eps$-LDP requirement. Moreover, the probability $p_j^t$ that $z_j^t=1$ is $p_j^t=q_j^t\cdot \frac{e^\eps}{1+e^\eps}+(1-q_j^t)\cdot \frac{1}{1+e^\eps}$ and so
\begin{align}\label{eq:use-GP2}
|p_j^t-1/2|=\left|\lambda_j^t\cdot \frac{e^\eps-1}{1+e^\eps}\right|\geq \frac{\eps|\lambda_j^t|}{4},
\end{align}
where we have written $q_j^t=1/2+\lambda_j^t$.
Using that $n\gg \frac{\log \ab}{\eps^2\alpha^2}\gg \frac{\log \ab}{\alpha^2}$, it follows from Lemma~\ref{lemma:CDF-bound}, that $|q^{t}_j-q^{0}_j|\leq \alpha/5$ for all $t\leq n/2$ and $0\leq j\leq \ab$ with high probability in $\ab$. Thus,
\[
|p_j^t-p_j^0|=\frac{|q_j^t-q_j^0|(e^\eps-1)}{1+e^\eps}\leq \frac{\eps\alpha}{10},
\]
where the bound $\frac{e^\eps-1}{1+e^\eps}\leq \eps/2$ follows from a second degree Taylor expansion of the maps $f:\eps\mapsto \frac{e^\eps-1}{1+e^\eps}$ observing that $f'(0)=1/2$ and $f''(\eps)<0$.

It now follows from Theorem~\ref{thm:NBS-changing-probabilities}, that using the noisy feedback from at most $n/2$ of the users, the algorithm finds an $(1/2, \frac{\alpha\eps}{4})$-good coin $j^*$ with high probability in $\ab$. In particular  $p_{j^*}^0\leq \frac{1}{2}+\frac{\alpha\eps}{4}$ and $p_{j^*+1}^0\geq \frac{1}{2}-\frac{\alpha\eps}{4}$. It thus follows from equation~\eqref{eq:use-GP2} that $q_{j^*}^0\leq 1/2+\alpha$ and $q_{j^*+1}^0\geq 1/2-\alpha$. Therefore $j^*+1$ is an $\alpha$-approximate median of $\{x_i\}_{i=1}^n$ completing the proof.
\end{proof}
\section{Experiments}\label{sec:experiments}

\begin{figure}[t]
    \centering
    \begin{minipage}{0.48\textwidth}
        \centering
        \includegraphics[width=1\linewidth]{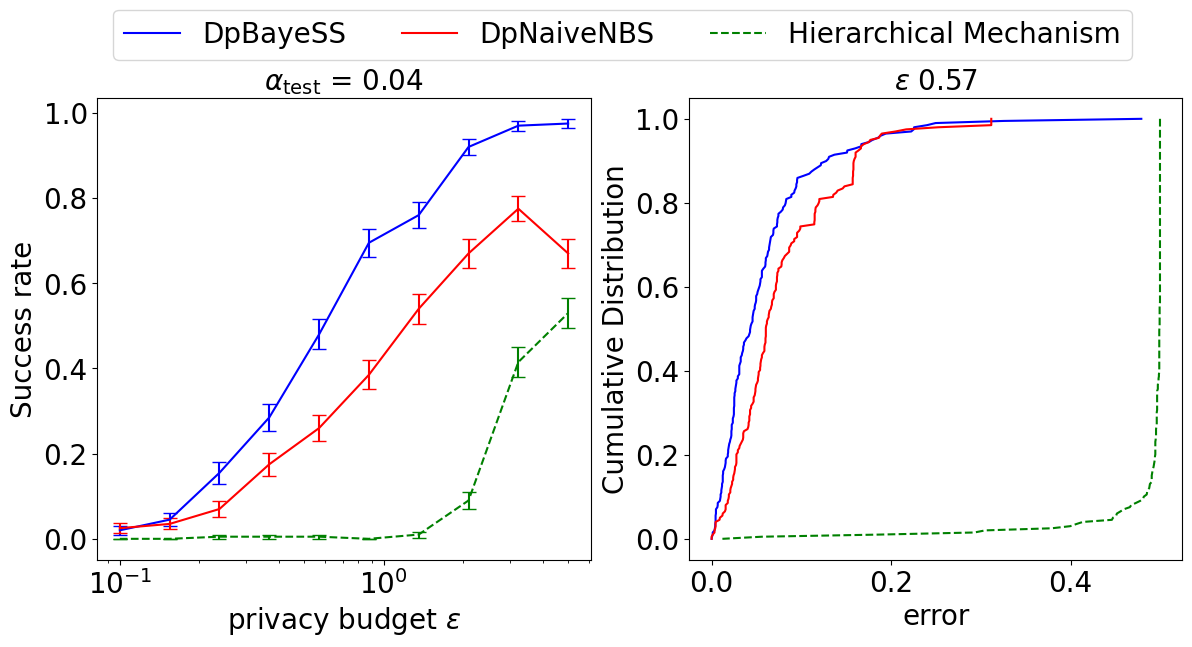}
        \subcaption{\small Pareto like data with $n=2500$ and $\ab=4^9$}
        \label{sub: 1}
    \end{minipage}%
    \hspace{0.02\textwidth}
    \begin{minipage}{0.48\textwidth}
        \centering
        \includegraphics[width=1 \linewidth]{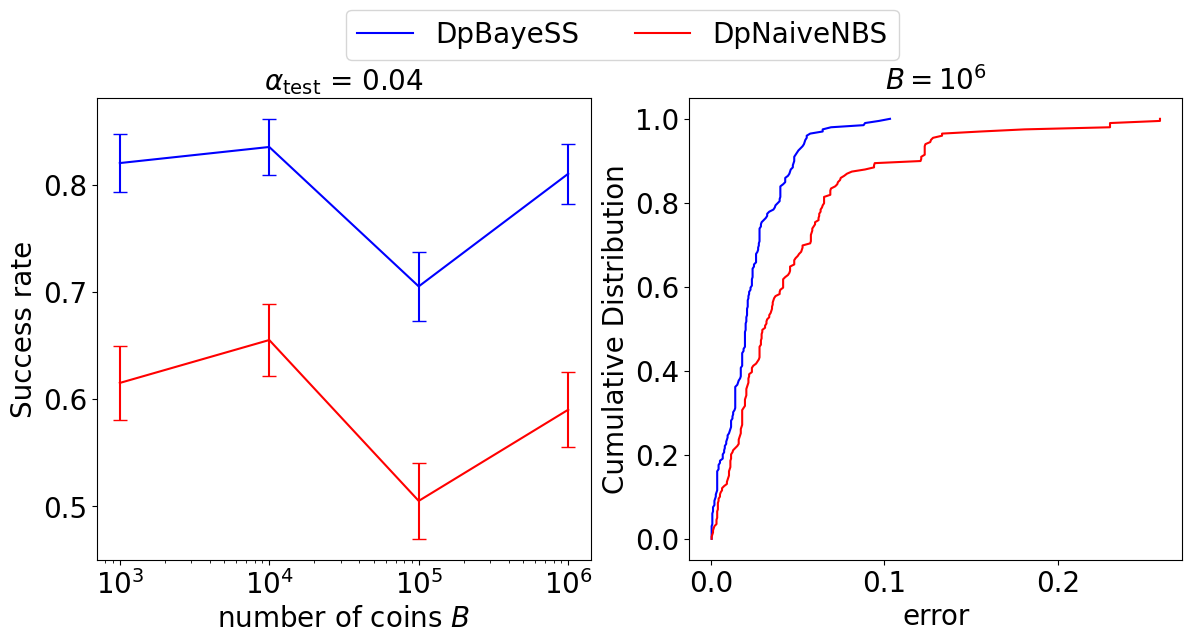}
        \subcaption{\small Uniform data with $n=2500$ and $\varepsilon=1$}
        \label{sub: 2}
    \end{minipage}
    \caption{Plots \ref{sub: 1} compare the three algorithms on the Pareto-like dataset: the left plot shows the success rate for $\alpha_{\text{test}}= 0.04$ across $\varepsilon\in[0.1, 5]$, and the right plot shows the c.d.f. of the absolute error for $\varepsilon = 0.57$. Plots \ref{sub: 2} compare \texttt{DpBayeSS} and \texttt{DpNaiveNBS} on a uniform dataset with $\varepsilon=1$: the left plot shows the success rate for different coin domains $\ab$ for $\alpha_{\text{test}}=0.04$, and the right plot shows the c.d.f. of the absolute error for $\ab=10^6$. The error bars on the left plots are standard deviation, computed as the sample average over $200$ trials. The decrease in accuracy observed in \cref{sub: 2} at $B = 10^5$ is likely attributable to a random generation of a more challenging dataset.
    }
    \label{fig: main results}
\end{figure}

We compared three mechanisms for median estimation in the empirical setting: \texttt{DpNaiveNBS} (binary search with randomized response), \texttt{Hierarchical Mechanism} from \cite{kulkarni2019answering}, which serves as the state of the art for non-adaptive protocols, and our sequentially adaptive algorithm, \texttt{DpBayeSS}, introduced in Theorem~\ref{thm:main-emp} (\cref{alg: DPBayeSS} in \cref{app: experiments} illustrates the pseudocode). Further details of the implementation, extensive experimental analysis, and experimental results for our algorithm in the shuffle model appear in \cref{app: experiments}. Experiments were conducted on data generated from two distributions: a Pareto distribution over $[B]$, often used to model quantities like income and population~\cite{arnold2014pareto}, and a uniform distribution over a random interval $[l,r]$ with $1 \leq l \leq r \leq \ab$, ensuring that the position of the median is not straightforward.

We evaluated the mechanisms using two metrics: the \emph{success rate}, 
computed as the fraction of times a $(\frac{1}{2}, \alpha_{\text{test}})$-good coin is returned with $\alpha_{\text{test}}=0.04$, and the \emph{absolute quantile error}, $|F_X(\Tilde{m}) - F_X(m)|$, where $\Tilde{m}$ is the returned median. We run each algorithm $200$ times and computed the standard deviation of the success rate as the sample average of a Bernoulli random variable.

In \autoref{sub: 1}, we plot the success rate of the three privacy mechanisms for a fixed $B=4^9$, and $n = 2500$, with $\epsilon$ varying from $0.1$ to $5$, for the synthetic Pareto-like dataset. We also plot the cumulative distribution of the absolute quantile error, showing the distribution of the quantile error over $200$ trials, for $\epsilon = 0.57$. These parameter settings are typical values encountered in real applications; we test many more parameter values in \autoref{app: experiments} with similar results. 
The plots illustrate that \texttt{DpBayeSS} always achieves far higher success rate than the other two mechanisms, and is statistically significant as the confidence intervals are far from overlapping. Correspondingly, the c.d.f. of the absolute quantile error shows this value is much lower for \texttt{DpBayeSS}. Also, we observe that the number of users is insufficient to obtain a meaningful median using the \texttt{Hierarchical Mechanism}, which aligns with our theoretical predictions.

In \autoref{sub: 2}, we plot the succes rate of \texttt{DpBayeSS} and \texttt{DpNaiveNBS} with a fixed privacy budget $\varepsilon=1$ over varying domain sizes $\ab$ from $10^3$ to $10^6$ using the uniform distribution data set with $2500$ users. We also plot the c.d.f. of absolute quantile error for a large domain of $B = 10^6$. We observe again that \texttt{DpBayeSS} achieves superior performance in all the values of $\ab$ tested. Due to implementation constraints, \texttt{Hierarchical Mechanism} was not tested on this dataset, but results in \autoref{sub: 1} indicate its error is generally higher than binary search-based methods. Our code is freely available \footnote{\url{https://github.com/NynsenFaber/Quantile_estimation_with_adaptive_LDP}}.

\section*{Acknowledgements} Aamand, Pagh, and Imola carried out this work at Basic Algorithms Research Copenhagen (BARC), which was supported by the VILLUM Foundation grant 54451. Pagh and Imola were supported by a Data Science Distinguished Investigator grant from the Novo Nordisk Fonden. Boninsegna was supported in part by the Big-Mobility project by the University of Padova under the Uni-Impresa call, by the MUR PRIN 20174LF3T8 AHeAD project, and by MUR PNRR CN00000013 National Center for HPC, Big Data and Quantum Computing.

\bibliography{bibliography}
\bibliographystyle{icml2025}

\newpage
\appendix
\onecolumn

\section{Additional Definitions}\label{app:add-def}
\begin{lemma}[Binary Randomized Response~\cite{warner1965randomized,dwork2006calibrating}]
\label{def: binary rr}
    For a binary input $x\in\{0,1\}$, and privacy parameter $\priv$, the following protocol $\mathcal{M}\to \{0,1\}$ satisfies $\priv$-LDP:
    \begin{equation*}
        \pmech(x)=\begin{cases}
        x,&\text{w.p. }\frac{e^\priv}{e^\priv +1}\\
        1-x,&\text{otherwise.}
    \end{cases}
    \end{equation*}
\end{lemma}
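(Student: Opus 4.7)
The plan is to directly verify the $\varepsilon$-LDP definition from Definition~\ref{def:dp} by computing the ratio $\frac{\Pr[\mathcal{M}(x) = y]}{\Pr[\mathcal{M}(x') = y]}$ for every pair of inputs $x, x' \in \{0,1\}$ and every output $y \in \{0,1\}$. Since both the input and output spaces have only two elements, this reduces to a finite case analysis with no nontrivial estimation needed. The trivial case $x = x'$ gives ratio $1 \leq e^\varepsilon$, so only the case $x \neq x'$ requires attention.

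By construction, for any $x \in \{0,1\}$ we have $\Pr[\mathcal{M}(x) = x] = \frac{e^\varepsilon}{e^\varepsilon+1}$ and $\Pr[\mathcal{M}(x) = 1-x] = \frac{1}{e^\varepsilon+1}$. I would then handle the two subcases when $x \neq x'$: if $y = x$, the ratio equals $\frac{e^\varepsilon/(e^\varepsilon+1)}{1/(e^\varepsilon+1)} = e^\varepsilon$, achieving the bound with equality; if $y = x'$, the ratio equals $e^{-\varepsilon}$, which is at most $e^\varepsilon$ for $\varepsilon \geq 0$. Taking the maximum over all cases yields $\Pr[\mathcal{M}(x) = y] \leq e^\varepsilon \Pr[\mathcal{M}(x') = y]$ for every $x, x', y \in \{0,1\}$, which is exactly $\varepsilon$-LDP with $\delta = 0$. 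There is no substantive obstacle here; the entire argument is a two-line case check, and the worst-case ratio is attained on the pair $(x, y)$ where the mechanism preserves the input on the high-probability branch.
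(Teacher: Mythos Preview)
Your proof is correct and is the standard direct verification of the $\varepsilon$-LDP definition for binary randomized response. The paper does not actually supply a proof of this lemma; it simply states the result with citations to \cite{warner1965randomized,dwork2006calibrating}, treating it as a well-known fact. Your case analysis is exactly the canonical argument one would write down if a proof were required.
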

\textbf{Azuma's inequality.}
We will use the following version of Azuma's inequality which bounds the maximum deviation of a martingale $(X_i)_{i=0}^n$ at any time $t=0,\dots, n$. See Theorem 2.1 in \cite{Fan2012martingales} for a stronger and more general bound.
\begin{theorem}[Azuma's inequality]\label{thm:azuma} Let $(X_i)_{i=0}^n$ be a martingale such that $X_0=0$ and $|X_{i+1}-X_i|\leq 1$ for all $0\leq i<n$. For any $t\geq 0$, 
\[
\Pr[\max_{1\leq i \leq n}|X_i|\geq t]\leq 2\exp\left(\tfrac{-t^2}{2n} \right).
\]
\end{theorem}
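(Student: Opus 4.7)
The plan is to prove this as a standard one-sided maximal bound for martingales with bounded increments, and then symmetrize to handle $\max |X_i|$. First I would introduce the differences $D_i = X_i - X_{i-1}$, which by hypothesis satisfy $|D_i|\le 1$ and $\mathbb{E}[D_i \mid \mathcal{F}_{i-1}]=0$ where $\mathcal{F}_{i-1}$ is the natural filtration. The workhorse estimate is the conditional MGF bound: for any $\lambda \in \mathbb{R}$,
\[
\mathbb{E}[e^{\lambda D_i}\mid \mathcal{F}_{i-1}] \le e^{\lambda^2/2}.
\]
This is Hoeffding's lemma applied conditionally: for a random variable $D$ with mean zero and $|D|\le 1$, convexity gives $e^{\lambda D}\le \tfrac{1+D}{2}e^{\lambda}+\tfrac{1-D}{2}e^{-\lambda}$, whose expectation is $\cosh(\lambda)\le e^{\lambda^2/2}$.

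Next I would form the exponential process $M_i := \exp(\lambda X_i - i\lambda^2/2)$ and verify via the MGF bound that $(M_i)$ is a nonnegative supermartingale with $M_0=1$. By Doob's maximal inequality for nonnegative supermartingales,
\[
\Pr\!\left[\max_{1\le i\le n} M_i \ge e^{\lambda t - n\lambda^2/2}\right] \le e^{-\lambda t + n\lambda^2/2}.
\]
Since $\{\max_{i\le n} X_i \ge t\}\subseteq \{\max_{i\le n} M_i \ge e^{\lambda t - n\lambda^2/2}\}$, this yields $\Pr[\max_i X_i \ge t]\le e^{-\lambda t + n\lambda^2/2}$ for every $\lambda \ge 0$. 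Optimizing by choosing $\lambda = t/n$ produces the one-sided bound $\Pr[\max_{1\le i\le n} X_i \ge t]\le \exp(-t^2/(2n))$.

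Finally, to convert this into a two-sided bound on $\max_i |X_i|$, I would apply the same argument to the martingale $(-X_i)_{i=0}^n$, which also has $X_0=0$ and increments bounded by $1$, obtaining $\Pr[\max_{1\le i\le n}(-X_i)\ge t]\le \exp(-t^2/(2n))$. A union bound over the two events $\{\max X_i \ge t\}$ and $\{\max(-X_i)\ge t\}$ gives the claimed inequality with the factor of $2$. The only delicate point is invoking Doob's maximal inequality for a supermartingale (rather than just a submartingale), which is standard; if one prefers, one can instead note that $(M_i)$ is a nonnegative supermartingale so $(M_i \wedge c)$ for $c$ large, or simply use the fact that $-M_i$ is a submartingale and apply the classical form of Doob's inequality. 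No step is particularly technical once Hoeffding's lemma is in hand; the main subtlety is packaging the exponential tilt together with Doob so that the bound holds for the running maximum rather than only the terminal value $X_n$.
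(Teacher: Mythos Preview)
Your argument is correct: the conditional Hoeffding bound, the exponential supermartingale $M_i=\exp(\lambda X_i - i\lambda^2/2)$, the supermartingale maximal (Ville) inequality, the optimization $\lambda=t/n$, and the symmetrization all go through as written. One small quibble: the throwaway suggestion that ``$-M_i$ is a submartingale and apply the classical form of Doob's inequality'' does not actually recover the bound you need (Doob for a submartingale controls $\max(-M_i)$, i.e.\ $\min M_i$); the clean route is exactly the nonnegative-supermartingale maximal inequality you already invoked, which is standard.

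As for comparison with the paper: the paper does not give its own proof of this theorem. It is stated as a tool in the appendix with a pointer to \cite{Fan2012martingales} (Theorem~2.1 there, which in fact proves a stronger and more general maximal inequality). So there is no in-paper argument to compare against; your self-contained proof is a perfectly good substitute for that citation.
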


\textbf{Bernstein's Inequality.} We use the following variant of Bernstein's Inequality in the proof of~\cref{thm:main-shuffle}, see~\citet[Proposition 2.10]{Wainwright_2019} for a detailed overview.
\begin{theorem}[Bernstein's Inequality]\label{fact: bernstein}
    Let $\{X_i\}_{i=1}^n$ be independent random variables that are bounded almost surely by $1$. Let $\sigma^2=\frac{1}{n}\sum_{i=1}^n\operatorname{Var}[X_i]$ be the average variance. We then have,
    \[
    \Pr\left[\bigg|\frac{1}{n}\sum\limits_{i=1}^n X_i-\frac{1}{n}\sum\limits_{i=1}^n\bEE{X_i}\bigg|>\acc\right]\leq\exp\left( \frac{-n\acc^2}{2\sigma^2 + \frac{2\acc}{3}} \right).
    \]
\end{theorem}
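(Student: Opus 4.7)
\textbf{Proof plan for Theorem~\ref{fact: bernstein} (Bernstein's Inequality).}

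The plan is to follow the classical Cram\'er--Chernoff recipe: bound the exponential moment of each summand by $\exp(\mathrm{Var}(X_i)\,\phi(\lambda))$ with $\phi(\lambda)=e^\lambda-1-\lambda$, combine by independence, and then optimize over $\lambda$. First I would center by setting $Y_i = X_i - \mathbb{E}[X_i]$, so that $\mathbb{E}[Y_i]=0$, $|Y_i|\le 2$ almost surely, and $\mathrm{Var}(Y_i)=\mathrm{Var}(X_i)$. Writing $S_n=\sum_{i=1}^n Y_i$, the quantity to control is $\Pr[|S_n|>n\acc]$. Since the argument for the upper and lower tails is symmetric (apply it to $Y_i$ and to $-Y_i$), I would focus on $\Pr[S_n>n\acc]$.

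The core step is the moment-generating-function estimate. By Markov, for any $\lambda>0$,
\[
\Pr[S_n > n\acc] \;\le\; e^{-\lambda n\acc}\prod_{i=1}^{n}\mathbb{E}\bigl[e^{\lambda Y_i}\bigr].
\]
To bound each factor, I would expand $e^{\lambda Y_i}$ in a Taylor series and use $|Y_i|\le 1$ (after an absorbable rescaling) together with $\mathbb{E}[Y_i]=0$:
\[
\mathbb{E}\bigl[e^{\lambda Y_i}\bigr]
= 1+\sum_{k\ge 2}\tfrac{\lambda^k}{k!}\mathbb{E}[Y_i^k]
\le 1+\mathrm{Var}(Y_i)\sum_{k\ge 2}\tfrac{\lambda^k}{k!}
= 1+\mathrm{Var}(Y_i)\,\phi(\lambda)
\le \exp\bigl(\mathrm{Var}(Y_i)\,\phi(\lambda)\bigr),
\]
using $\mathbb{E}[Y_i^k]\le \mathbb{E}[Y_i^2]$ for $k\ge 2$ whenever $|Y_i|\le 1$, and then $1+x\le e^x$. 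Multiplying over $i$ yields $\prod_i\mathbb{E}[e^{\lambda Y_i}]\le \exp(n\sigma^2\phi(\lambda))$, giving the Cram\'er bound
\[
\Pr[S_n>n\acc] \;\le\; \exp\bigl(n\sigma^2\phi(\lambda)-\lambda n\acc\bigr).
\]

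Finally, I would optimize over $\lambda>0$. Choosing $\lambda^\star=\ln(1+\acc/\sigma^2)$ minimizes the right-hand side and gives
\[
\Pr[S_n>n\acc] \;\le\; \exp\bigl(-n\sigma^2 h(\acc/\sigma^2)\bigr),\qquad h(u):=(1+u)\ln(1+u)-u.
\]
To convert this to the statement's form, I would invoke the standard calculus inequality $h(u)\ge u^2/(2+2u/3)$ for $u\ge 0$, which is verified by comparing the two sides' second derivatives near $0$ (both vanish with equal second derivative at $u=0$ and the difference is monotone). Substituting $u=\acc/\sigma^2$ yields the stated exponent $-n\acc^2/(2\sigma^2+2\acc/3)$. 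Combining the upper and lower tails finishes the proof.

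The only step with any real content is the verification of the polynomial bound $h(u)\ge u^2/(2+2u/3)$ and the associated algebra at the optimization step; everything else is bookkeeping. If a sharper constant is desired the same argument with Bennett's inequality and then the $h(u)\ge u^2/(2+2u/3)$ relaxation recovers the stated form exactly. I would not expect genuine difficulty here: the proof is entirely classical and the only care needed is tracking that the $|X_i|\le 1$ hypothesis is enough to apply the key moment bound (after centering shifts the range to $[-2,2]$, which only changes constants absorbed into the $2\acc/3$ denominator term).
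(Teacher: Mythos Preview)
The paper does not prove this statement at all: it is quoted as a standard tool and attributed to Wainwright (Proposition~2.10), with no argument given. Your sketch is the classical Cram\'er--Chernoff/Bennett derivation and is the right way to prove it from scratch, so in that sense there is nothing to compare.

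One genuine slip worth flagging: you center to $Y_i=X_i-\mathbb{E}[X_i]$ with $|Y_i|\le 2$, then in the MGF step invoke $\mathbb{E}[Y_i^k]\le\mathbb{E}[Y_i^2]$ ``whenever $|Y_i|\le 1$'', and later say the $[-2,2]$ range ``only changes constants absorbed into the $2\acc/3$ denominator term.'' That is not quite true. The moment bound you need is $|\mathbb{E}[Y_i^k]|\le b^{k-2}\mathbb{E}[Y_i^2]$ with $b$ the a.s.\ bound on $|Y_i|$, and $b$ then appears explicitly in the Bernstein denominator as $2\sigma^2+\tfrac{2b\acc}{3}$. With $b=2$ you get $\tfrac{4\acc}{3}$, not the stated $\tfrac{2\acc}{3}$; the constant is not absorbed, it is doubled. (In the paper's applications the $X_i$ live in $\{0,1\}$, so the centered variables in fact satisfy $|Y_i|\le 1$ and the issue evaporates, but as a proof of the statement as written you should either assume the bound is on $|X_i-\mathbb{E}[X_i]|$ or accept the weaker constant.) A second cosmetic point: combining the upper and lower tails normally produces a factor $2$ in front of the exponential, which the stated form omits; this is a harmless imprecision in the statement rather than in your argument.
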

\section{Reduction to the Median} 

\label{appendix: Reduction to the Median}
Consider the simple case where we are given an algorithm $A$ which returns the median of $n$ samples in the most natural sense, by returning the $n/2$'th index of their sorted representation. Without changing this algorithm we can have it return any arbitrary index by adding elements to the beginning or the end of this sorted array. For example, adding two elements to the beginning of the array will create a new array with $n'=n+2$ elements where the $n'/2$'th index will be the $(n/2-1)$'th index of the original array. The padding argument below formalizes this notion, demonstrating that any algorithm for an $\alpha$-approximation of the median can be used to obtain a $2\alpha$-approximation of any quantile.
\begin{lemma}[Padding Argument] 
\label{appendix: padding argument}
Any $\alpha$-approximation algorithm for the median, with $\alpha \in \left(0,\frac{1}{2}\right)$, can be used to construct a $2\alpha$-approximation for any quantile $\tau\in (0,1)$. 
\end{lemma}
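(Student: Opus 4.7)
The plan is a direct combinatorial reduction: given a dataset $X = \{x_i\}_{i=1}^n$ and a target quantile $\tau \in (0,1)$, I would construct a padded dataset $X'$ whose median corresponds to the $\tau$-quantile of $X$, then invoke the assumed $\alpha$-approximate median algorithm on $X'$. Concretely, I add $a = \lfloor (1-\tau)n \rfloor$ virtual users with value $1$ and $b = n-a$ virtual users with value $\ab$, producing a padded dataset of size $2n$. The integer rounding causes $a$ and $b$ to deviate from $(1-\tau)n$ and $\tau n$ by at most $1$, which perturbs the analysis only by an additive $O(1/n)$ term that is easily absorbed into the error budget.

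The key calculation is to express $F_{X'}$ as an affine function of $F_X$. For any $i \in \{1, \ldots, \ab-1\}$, the number of elements of $X'$ at most $i$ equals $n F_X(i) + a$, giving
\[
    F_{X'}(i) \;=\; \frac{n F_X(i) + a}{2n} \;=\; \frac{F_X(i) + (1-\tau)}{2} \;+\; O(1/n).
\]
Thus the map $y \mapsto (y + 1 - \tau)/2$ sends the $\tau$-quantile location on $X$ to the median on $X'$, and its inverse $z \mapsto 2z - (1-\tau)$ sends the open interval $(1/2 - \alpha, 1/2 + \alpha)$ to $(\tau - 2\alpha, \tau + 2\alpha)$.

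Running the $\alpha$-approximate median algorithm on $X'$ returns some $m \in [\ab]$ with $[F_{X'}(m), F_{X'}(m+1)] \cap (1/2 - \alpha, 1/2 + \alpha) \neq \emptyset$. The boundary case $m = \ab$ is ruled out because $F_{X'}(\ab) = 1 > 1/2 + \alpha$ when $\alpha < 1/2$, so both endpoints $F_{X'}(m)$ and $F_{X'}(m+1)$ are governed by the affine map above. Applying the inverse map to any point in the nonempty intersection yields
\[
    [F_X(m), F_X(m+1)] \cap (\tau - 2\alpha, \tau + 2\alpha) \neq \emptyset,
\]
which is precisely the definition of $m$ being a $2\alpha$-approximate $\tau$-quantile of $X$.

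There is no serious obstacle: the only points requiring care are (i) the integrality of $a, b$, handled by noting that the $O(1/n)$ perturbation is dominated by $\alpha$ in all interesting regimes, and (ii) the degenerate case $m = \ab$, ruled out by the assumption $\alpha < 1/2$. In the intended LDP or shuffle-DP context, the virtual users are not real participants and their outputs can be simulated by the aggregator without consuming any privacy budget, so the reduction preserves the $\priv$-LDP (or shuffle-DP) guarantee of the underlying median protocol.
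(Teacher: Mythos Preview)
Your proposal is correct and follows essentially the same approach as the paper: pad with $(1-\tau)n$ copies of $1$ and $\tau n$ copies of $\ab$, derive the affine relation $F_{X'}(i) = \tfrac{1}{2}F_X(i) + \tfrac{1-\tau}{2}$, apply the median algorithm, invert the affine map to get the $2\alpha$ window around $\tau$, and rule out $m=\ab$ via $\alpha<1/2$. The only difference is that you handle the integer rounding of the padding explicitly (absorbing an $O(1/n)$ term), whereas the paper simply assumes $(1-\tau)n$ and $\tau n$ are integers in a footnote.
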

\begin{proof}
    Consider a dataset $D=\{x_1, \dots, x_n\}$ where each element is such that $x_i \in \{1,\dots,\ab\}$. Let $\mathcal{M}$ be an algorithm for the $\alpha$-approximation of the median then for $m = A(D)$ we have by definition
     \begin{equation}
     \label{eq: appendix padded 1}
        \text{Pr}_{\mathcal{D}}[x\leq m]<\frac{1}{2}+\alpha \qquad \text{and} \qquad \text{Pr}_{\mathcal{D}}[x\leq m+1]>\frac{1}{2}-\alpha.
    \end{equation}
    where $\text{Pr}_{D}[x\leq m] = \frac{\sum_{x\in D}[x\leq m]}{n}$, and $[x\leq m]$ is an indicator function. 
    Consider now a padded dataset $D_P = D\cup \{1\}^{(1-\tau)n} \cup \{\ab\}^{\tau n}$, where $\{a\}^{x}$ indicates the multi-set containing the $a$ element $x$ times \footnote{We consider $(1-\tau)n$ and $\tau n$ integers.}. The new empirical cumulative distribution of the data set for $y \in \{1, \dots, \ab-1\}$, is \begin{align*}
    \label{eq: appendix padded 2}
        \text{Pr}_{D_P}[x\leq y] &= \frac{(1-\tau)n +\sum_{x\in D}[x\leq y]}{|D_P|} = \frac{1-\tau}{2}+\frac{1}{2}\text{Pr}_{D}[x\leq y],
    \end{align*}
    as we have $|D_P| = 2n$. Thus 
    \begin{equation}
    \label{eq: appendix padded 3}
        \text{Pr}_{D}[x\leq y] = 2\text{Pr}_{D_P}[x\leq y] +\tau -1.
    \end{equation}
    The application of $A$ to the padded data set $D_{P}$ returns a $\alpha$-approximate median $m_P = A(D_P)$. Therefore, for $m_P\in\{1,\dots, \ab-1\}$, from \autoref{eq: appendix padded 3} and \autoref{eq: appendix padded 1} it follows that 
    \begin{equation}
    \label{eq: appendix padded 4}
       \text{Pr}_{D}[x\leq m_P]<\tau+2\alpha \qquad \text{and} \qquad \text{Pr}_{D}[x\leq m_P+1]>\tau-2\alpha.
    \end{equation}
    Notice that $m_p\neq \ab$, as $\text{Pr}_{D_P}[x\leq \ab]=1<\frac{1}{2}+\alpha$ iff $\alpha>\frac{1}{2}$. This concludes the proof.
\end{proof}

\section{Statistical Private Median Estimation}\label{sec:statistical-median}
In this section, we will provide an algorithm for \texttt{LDPstat-median} using the state-of-the-art algorithm for \texttt{MonotonicNBS}. We prove the following:

\begin{theorem}\label{thm:main-stat}
Let $\alpha \in \left(0,\frac{1}{4}\right)$ and $\varepsilon
>0$. Suppose that the number of users $n\geq C\frac{\log B}{\alpha^2}\left(\frac{e^\varepsilon+1}{e^\varepsilon-1}\right)^2$ for a sufficiently large constant $C$. Then there exists an algorithm solving \texttt{LDPstat-median}$(\mathcal{D},n,\alpha,\eps)$ with high probability in $B$. 
\end{theorem}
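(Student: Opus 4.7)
The plan is to reduce \texttt{LDPstat-median} to \texttt{MonotonicNBS} and invoke the Bayesian Screening Search algorithm of Gretta and Price. First I will set up the reduction. Fix the thresholds $j \in [B]$ and let $q_j = \Pr_{x \sim \mathcal{D}}[x \leq j]$, so that $q_1 \leq q_2 \leq \cdots \leq q_B$ is a monotone sequence of coin biases and a $(1/2,\alpha)$-good coin corresponds exactly to an $\alpha$-approximate median of $\mathcal{D}$. To turn a fresh sample $x_i \sim \mathcal{D}$ into a noisy coin flip while respecting $\eps$-LDP, the protocol, upon the NBS algorithm querying threshold $j$, asks user $i$ to compute the bit $b_i = [x_i \leq j]$ and then report $z_i = \operatorname{RR}_\eps(b_i)$ using binary randomized response. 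Since each user is queried at most once (the samples are i.i.d.\ from $\mathcal{D}$, so there is no sampling-without-replacement complication as in the empirical case), the $\eps$-LDP guarantee follows immediately from the privacy of $\operatorname{RR}_\eps$.

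Next I compute the effective coin biases seen by the NBS algorithm. The bit $z_i$ is $1$ with probability
\[
p_j = q_j\cdot\frac{e^\eps}{1+e^\eps} + (1-q_j)\cdot\frac{1}{1+e^\eps} = \frac{1}{2} + \left(q_j - \tfrac{1}{2}\right)\cdot\frac{e^\eps-1}{e^\eps+1}.
\]
Hence the sequence $(p_j)_{j=1}^B$ is still monotone, and writing $\gamma = \frac{e^\eps-1}{e^\eps+1}$, a coin $j$ is $(1/2,\alpha\gamma)$-good in the $p$-scale if and only if it is $(1/2,\alpha)$-good in the $q$-scale. I will therefore feed the stream of $z_i$'s into the \texttt{BayeSS} algorithm with target $\tau = 1/2$ and accuracy parameter $\alpha' = \alpha\gamma$, asking it to find a $(1/2,\alpha')$-good coin among the $p_j$'s.

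By the guarantee of \texttt{BayeSS} (cited in the paper just above Theorem~\ref{thm:NBS-changing-probabilities}), solving \texttt{MonotonicNBS}$(\{p_j\}, 1/2, \alpha')$ with high probability in $B$ requires
\[
O\!\left(\frac{\log B}{\alpha'^2}\right) = O\!\left(\frac{\log B}{\alpha^2}\left(\frac{e^\eps+1}{e^\eps-1}\right)^{\!2}\right)
\]
coin flips, which matches the bound on $n$ in the theorem for a sufficiently large constant $C$. Since each coin flip is produced from a fresh user sample, this is precisely the number of users needed; after $n$ calls \texttt{BayeSS} outputs an index $j^\star$ that is $(1/2,\alpha)$-good with respect to $q_j$, i.e.\ an $\alpha$-approximate median of $\mathcal{D}$, which is then returned by the aggregator.

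I do not anticipate a substantive obstacle: unlike the empirical case, there is no need for Lemma~\ref{lemma:CDF-bound} or the adversarial strengthening of \texttt{MonotonicNBS}, because sampling i.i.d.\ from $\mathcal{D}$ keeps the $p_j$'s stationary. The only small care points are (i) verifying $\gamma \leq 1$ so that \texttt{BayeSS}'s $O(\log B/\alpha'^2)$ bound applies at the target $1/2$ (which holds for all $\eps > 0$) and (ii) a clean Taylor estimate $\gamma \asymp \min(\eps,1)$ if one wants to make the dependence on $\eps$ explicit in the high-privacy regime, but neither is essential to proving the stated bound.
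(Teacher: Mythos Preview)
Your proposal is correct and follows essentially the same approach as the paper: define $q_j$ as the CDF of $\mathcal{D}$, apply randomized response to $[x_i\le j]$ to obtain coins with biases $p_j=\tfrac12+(q_j-\tfrac12)\gamma$ where $\gamma=\tfrac{e^\eps-1}{e^\eps+1}$, and then invoke the Gretta--Price \texttt{BayeSS} guarantee for \texttt{MonotonicNBS}$(1/2,\alpha\gamma)$ to find a coin that is $(1/2,\alpha)$-good in the $q$-scale. The paper's proof is the same argument, written with the equivalent inversion formula $q_j=(p_j-\tfrac{1}{e^\eps+1})\tfrac{e^\eps+1}{e^\eps-1}$ in place of your $\gamma$-scaling observation.
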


In this section, we prove Theorem~\ref{thm:main-stat}. For this, we recall the following result which is a corollary of the main result in~\cite{gretta2023sharp}. 
Recall the definition of an $\left(\frac{1}{2}, \alpha\right)$-good coin in~\eqref{eq:good-coin}.
\begin{theorem}[\cite{gretta2023sharp}]\label{thm:from-GP}
For any $\alpha \in \left(0,\frac{1}{4}\right)$, there exists an algorithm for \texttt{MonotonicNBS}$(\tau,\alpha)$ which uses $O(\frac{\log B}{\alpha^2})$ coin flips and outputs an $\left(\frac{1}{2},\alpha\right)$-good coin with high probability in $B$.
\end{theorem}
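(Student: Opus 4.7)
The plan is to derive the result as a direct specialization of the main theorem of Gretta and Price~\cite{gretta2023sharp}. Their \texttt{BayeSS} algorithm returns a $(\tau,\alpha)$-good coin with probability at least $1-\delta$ using
\[
\frac{1}{C_{\tau,\alpha}}\left(\log B + O\!\left(\log^{2/3} B \cdot \log^{1/3}(1/\delta) + \log(1/\delta)\right)\right)
\]
coin flips, where $C_{\tau,\alpha} = \Theta\!\left(\alpha^2/(\tau(1-\tau))\right)$ for sufficiently small $\alpha$ (as is recalled in the footnote of \cref{sec:tech-contributions-1}). First, I would specialize to $\tau = 1/2$, so that $C_{1/2,\alpha} = \Theta(\alpha^2)$, matching the denominator needed in the stated $O(\log B/\alpha^2)$ bound.

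Next, to achieve ``high probability in $B$'', I would set the failure parameter $\delta = B^{-c}$ for any desired constant $c>0$, so that $\log(1/\delta) = c\log B$. Plugging this into the Gretta--Price expression, the bracketed quantity becomes
\[
\log B + O\!\left(\log^{2/3} B \cdot (c\log B)^{1/3} + c\log B\right) = \Theta(\log B),
\]
using $\log^{2/3} B \cdot \log^{1/3} B = \log B$. Dividing by $C_{1/2,\alpha} = \Theta(\alpha^2)$ yields the claimed $O(\log B/\alpha^2)$ sample complexity, completing the main case.

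The only loose end is the ``sufficiently small $\alpha$'' hypothesis needed for the asymptotic identity on $C_{\tau,\alpha}$. In the residual range where $\alpha$ is bounded below by a positive constant (up to $1/4$), I would simply invoke \texttt{BayeSS} with a fixed smaller target accuracy $\alpha' < \alpha$ lying within the valid regime: any $(1/2,\alpha')$-good coin is automatically $(1/2,\alpha)$-good, and with $\alpha$ a constant the sample complexity remains $O(\log B) = O(\log B/\alpha^2)$. I do not expect any substantive obstacle here; the proof amounts to a parameter choice followed by the elementary verification that, at $\delta = B^{-c}$, the mid-order terms in the Gretta--Price bound collapse into $\Theta(\log B)$. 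The hardest part is really just a careful bookkeeping of the constants hidden in the $\Theta$-notation to make sure the final bound genuinely is $O(\log B/\alpha^2)$ with a universal constant.
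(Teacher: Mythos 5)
Your proposal is correct and matches the paper's treatment: the paper does not reprove this statement but obtains it exactly as you do, by citing the Gretta--Price bound $\frac{1}{C_{\tau,\alpha}}\bigl(\log B + O(\log^{2/3} B\,\log^{1/3}(1/\delta)+\log(1/\delta))\bigr)$, specializing $\tau=\tfrac12$ and $\delta=B^{-c}$ so the bracket collapses to $\Theta(\log B)$. Your extra care about ``sufficiently small $\alpha$'' is not even needed, since $C_{1/2,\alpha}=\Theta(\alpha^2)$ already holds whenever $\alpha\leq\tfrac12\min(\tau,1-\tau)=\tfrac14$, which is exactly the hypothesis $\alpha\in(0,\tfrac14)$.
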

\begin{proof}[Proof of Theorem~\ref{thm:main-stat}]
For $i\in [B]$, we define $q_i=\sum_{j\leq i}\mathcal{D}[j]$ with the convention that $q_0=0$. 
Thus $j\mapsto q_j$ is the CDF of $\mathcal{D}$. Consider sampling $X\sim \mathcal{D}$ and let $Y$ be the random variable obtained by applying randomized response to the indicator variable $[X\leq j]$ retaining the bit with probability $\frac{e^\eps}{1+e^\eps}$ and flipping it otherwise. Then $\Pr[Y=1]=p_j$ where $p_j=q_j\cdot \frac{e^\eps}{1+e^\eps}+(1-q_j)\cdot \frac{1}{1+e^\eps}$. Then,
\begin{equation}\label{eq:use-GP}
    q_{j} = \left(p_{j}-\frac{1}{e^\varepsilon+1}\right)\frac{e^\varepsilon+1}{e^\varepsilon-1},
\end{equation}
We use the the algorithm in Theorem~\ref{thm:from-GP} to solve \texttt{MonotonicNBS}$\left(\frac{1}{2},\alpha\frac{e^\varepsilon-1}{e^\varepsilon+1}\right)$ when the inputs are the unknown $\{p_i\}_{i=1}^B$. To do so, whenever the algorithm calls for flipping a coin $j$, we sample a new user $X\in \mathcal{D}$ and apply randomized response to the variable $Y=[X\leq j]$. By standard properties of randomized responze, this protocol satisfies the $\eps$-LDP requirement. Moreover, by Theorem~\ref{thm:from-GP}, the algorithm finds an $\left(\frac{1}{2}, \alpha\frac{e^\varepsilon-1}{e^\varepsilon+1}\right)$-good coin $j^*$ with high probability in $B$. In particular $p_{j^*}\leq \frac{1}{2}+\alpha\frac{e^\varepsilon-1}{e^\varepsilon+1}$ and $p_{j^*+1}\geq \frac{1}{2}-\alpha\frac{e^\varepsilon-1}{e^\varepsilon+1}$. 
It thus follows from Equation~\eqref{eq:use-GP} that $q_{j^*}\leq 1/2+\alpha$ and $q_{j^*+1}\geq 1/2-\alpha$. Therefore $j^*$ is an $\alpha$-approximate median of $\mathcal{D}$ completing the proof.
\end{proof}
In the high privacy regime, i.e. for $\varepsilon<1$ , the sample complexity of Theorem \ref{thm:main-stat} becomes $n=\Omega\left(\frac{\log B}{\varepsilon
^2\alpha^2}\right)$, matching our lower bound up to a constant factor.
\section{The Hierarchical Mechanism}\label{app:hierarchical-mech}
The algorithm was presented in \cite{kulkarni2019answering} and can be used to approximately answer general range queries. It comes in several variants but we will present the simplest version (the bounds on the number of users needed for the various versions are similar). The main idea is to construct a $b$-ary tree of depth $\Theta(\log(B))$ on $[B]$. For the below, we will assume that $B$ is a power of $2$ and that $b=2$ (although for the experiments, we use a different constant $b$). The nodes on level $i$ (where level 0 is the root) corresponds to the $2^i$ dyadic intervals of $B$. Namely, in the binary representation of elements of $B$, there is an interval corresponding to each prefix of length $i$ in the binary representation. The non-adaptive protocol we will consider is as follows. Each user $i$ with data $x_i\in[B]$ picks a random level $\ell$ of the binary tree. The user writes a one-hot encoding $z$ of which node they belong to on level $\ell$ and uses randomized response on each of the $2^\ell$ bits of $z$. This is the message $y$, they send to the central server. This is the unary encoding mechanism; see~\cite{kulkarni2019answering} for more sophisticated solutions, that require less communication but nonetheless have the same approximation errors. The combined algorithm is denoted \texttt{Hierarchical Mechanisms}. 

\paragraph{Analysis sketch of \texttt{Hierarchical Mechanism}}
We here analyse the performance of \texttt{Hierarchical Mechanism} for answering general range queries and in particular show how it can be used for quantile estimation.

Assume that $\eps\leq 1$. If the number of users reporting at every level is $\gg \frac{1}{\alpha_0^2\eps^2}$ (where $a\gg b$ means that $a\geq C b$ for some constant $C$), then using standard concentration bounds, for each node in a given level, we can recover the total fraction of users lying in the corresponding subtree up to an additive $\alpha_0$ with constant failure probability. Now if the total number of users is $\gg \frac{\log B}{(\alpha_0^2\eps^2)}$, then with constant failure probability, the number of users reporting at any given level is indeed, $\gg \frac{1}{\alpha_0^2\eps^2}$. We now pick $\alpha_0=\alpha/(2\log B)$ and conclude that if the number of users is $\gg \frac{(\log B)^3}{(\alpha\eps^2)}$, we can recover the total fraction of users lying in any subtree up to an additive $\alpha/(2\log B)$ from the unary responses with constant failure probability. 
It follows that we can answer any range query with additive error $\alpha n$. Indeed, any range can be partitioned into at most $2\log B$ of these subtrees, two for each level. In particular, this means that we can find an $\alpha$-approximate median with constant failure probability.
It follows that we can answer any range query with additive error $\alpha n$. Indeed, any range can be partitioned into at most $2\log B$ of these subtrees, two for each level. In particular, this means that we can find an $\alpha$-approximate median with constant failure probability.
The analysis for high probability in $B$ needs $\gg \frac{\log B}{\alpha_0^2\varepsilon^2}$ number of users reporting at each level, so it adds an additional $\log B$ factor to the sample complexity.

\section{Proof of Theorem \ref{thm:NBS-changing-probabilities}}
\label{app: proof theorem 3.1} 
\begin{algorithm}[t]
\caption{\texttt{BayeSS} main steps }\label{alg: bayeSS}
\begin{algorithmic}
\STATE {\bfseries Input:} $\{x_i\}_{i=1,\dots, n}$, $\alpha \in (0,1/4)$, $n\geq C\frac{\log B}{\alpha^2}$
\STATE $L\gets \texttt{BayesLearn}(B, \{x_i\}_{i=1,\dots,n/4}, \alpha)$
\STATE $R \gets \frac{1}{\gamma}$-$\text{quantiles}(L)$ \COMMENT{for $\gamma = O(1)$}
\STATE \textbf{return } \texttt{TestCoins}$(R, \{x_i\}_{n/4+1,\dots, n/2}, \alpha)$
\end{algorithmic}
\end{algorithm} 
The goal of this section is to prove Theorem~\ref{thm:NBS-changing-probabilities}. We first define the adversarial setting.
\begin{definition}
\label{def:adversarial}
Let $0<\alpha<1$ and $\ab$ a positive integer. Let $p_0,\dots,p_\ab\in [0,1]$ be unknowns with $0=p_0\leq\cdots \leq p_\ab= 1$. In \emph{\texttt{AdvMonotonicNBS}$(\tau, \alpha, c)$}, for $c>0$, our goal is to identify an $(\tau,\alpha(1+c))$-good coin (defined in \autoref{eq:good-coin}).
To do so, we may iteratively pick indices $i\in \ab$. Then an adversary selects a probability $\tilde p_i$ such that $| {\tilde{p}}_i - p_i|\leq c\alpha$, and we observe the outcome of a coin flip with heads probability $\tilde p_i$.
\end{definition}
We show that the \texttt{BayeSS} algorithm (\texttt{BayeSS} abbreviates \emph{Bayesian Screening Search}) from \cite{gretta2023sharp}(Algorithm 3) solves the \texttt{AdvMonotonicNBS}$(\tau, \alpha, c)$ problem returning the a $(\tau,\alpha(1+c))$-good coin with high probability in $\ab$ using
$O(\frac{\tau(1-\tau)\log \ab}{\alpha^2})$ 
 coin flips. We actually prove a stronger theorem which immediately implies~\cref{thm:NBS-changing-probabilities}.

\begin{theorem}\label{thm:GP-generalization}
Suppose that $c\leq 1$ and $\alpha \leq \frac{1}{2}\min\{\tau, 1-\tau\}$. There exists an algorithm~\cite{gretta2023sharp} for \emph{\texttt{AdvMonotonicNBS}$(\tau, \alpha, c)$} which uses 
$\tfrac{1}{C_{\tau, \alpha}}(\log \ab + O(\log^{2/3}\ab\,\log^{1/3}\frac{1}{\failp}+\log\frac{1}{\failp}))$ 
coin flips\footnote{Namely, $C_{\tau, \alpha}$ is the information capacity of the Binary Asymmetric Channel (BAC) with crossover probabilities $\{\tau + \alpha, \tau - \alpha\}$. Concretely, $C_{\tau, \alpha}=\max_q H((1-q)(\tau-\alpha) + q(\tau+\alpha))-(1-q)H(\tau-\alpha)-qH(\tau+\alpha)$ with $H$ being the binary entropy function, and $C_{\tau,\alpha} = \Theta(\tfrac{\alpha^2}{\tau(1-\tau)})$ for $\alpha \leq \frac{1}{2}\min(\tau, 1-\tau)$.}
and returns a $(\tau,\alpha(1+c))$-good coin with probability at least $1-\failp$.
\end{theorem}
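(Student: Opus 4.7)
The plan is to show that the BayeSS algorithm of Gretta--Price, applied with parameter $\alpha$, solves \texttt{AdvMonotonicNBS}$(\tau, \alpha, c)$ with the stated sample complexity and failure probability, with the only weakening that the returned coin is $(\tau, \alpha(1+c))$-good rather than $(\tau, \alpha)$-good. I would decompose the algorithm into its two phases as laid out in \cref{alg: bayeSS}: \texttt{BayesLearn}, which processes the first half of the samples to produce a candidate set $R$ of $O(1)$ coins (via quantiles of the posterior), and \texttt{TestCoins}, which uses the remaining samples to verify and select among them. Then I would argue that each phase is robust to adversarial $c\alpha$-perturbations of the observed biases.

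First I would analyze \texttt{TestCoins}, which is the straightforward direction. This subroutine draws $\Theta(\log(|R|/\failp)/\alpha^2)$ fresh observations from each candidate $i \in R$ and computes the empirical mean $\hat p_i$. By Hoeffding's inequality applied to the observations (each of which is a Bernoulli with bias $\tilde p_i \in [p_i - c\alpha, p_i + c\alpha]$), with probability at least $1 - \failp/2$ every candidate satisfies $|\hat p_i - p_i| \leq \alpha(1+c)$ simultaneously, by the triangle inequality together with $|\tilde p_i - p_i| \leq c\alpha$. The thresholding step in \texttt{TestCoins} then identifies a candidate $i \in R$ with $p_i \in [\tau - \alpha(1+c), \tau + \alpha(1+c)]$, which is $(\tau, \alpha(1+c))$-good, provided $R$ contains at least one such coin.

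Next I would analyze \texttt{BayesLearn}. The Gretta--Price analysis tracks the log-posterior over candidate target coins as a supermartingale whose expected drift per observation on coin $i$ is proportional to the KL-divergence between $\mathrm{Bern}(p_i)$ and a reference distribution $\mathrm{Bern}(\tau \pm \alpha)$, yielding cumulative drift $\Theta(C_{\tau, \alpha} \cdot T)$ and concentrating the posterior on $(\tau, \alpha)$-good coins once $T \gtrsim \log \ab / C_{\tau, \alpha}$. In the adversarial setting, each observation is instead from $\mathrm{Bern}(\tilde p_i)$, and a first-order expansion of the log-likelihood ratio shows that the per-step expected drift changes by at most $O(c \alpha^2)$, which is the same order as $C_{\tau, \alpha} = \Theta(\alpha^2/(\tau(1-\tau)))$. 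Crucially, the adversary can shift the effective target by at most $c\alpha$, so the supermartingale drift still pushes the posterior away from coins $i$ with $|p_i - \tau| > \alpha(1+c)$. Consequently the candidate set $R$ contains an $(\tau, \alpha(1+c))$-good coin with probability at least $1 - \failp/2$. The martingale deviation bounds via Azuma--Hoeffding (and the refined large-deviation estimates giving the lower-order terms in the sample complexity) carry over since per-step increments to the log-posterior remain bounded. The assumptions $c \leq 1$ and $\alpha \leq \tfrac{1}{2}\min(\tau, 1-\tau)$ ensure all biases stay bounded away from $0$ and $1$, so the KL quantities are well defined and of the required order.

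The main obstacle is the rigorous handling of the adversarial drift in \texttt{BayesLearn}: because the algorithm's coin choices and the adversary's perturbations are interleaved, one must condition on the filtration of the algorithm at each step and bound the conditional expected drift uniformly over all (possibly adaptive) adversary strategies. The worst-case adversary is one that deterministically shifts each $\tilde p_i$ by $\pm c\alpha$ in the direction that most weakens the drift toward good coins, but since such a shift can only displace the effective target by $c\alpha$, the goodness guarantee weakens by a factor of $(1+c)$ and no more. A union bound over the failure events of the two phases then yields overall failure probability at most $\failp$, matching the sample complexity of the non-adversarial analysis in Theorem~\ref{thm:from-GP}.
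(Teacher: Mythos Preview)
Your two-phase decomposition and the \texttt{TestCoins} analysis are correct and match the paper. The gap is in your treatment of \texttt{BayesLearn}.

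You propose to control the adversary by arguing that ``a first-order expansion of the log-likelihood ratio shows the per-step expected drift changes by at most $O(c\alpha^2)$, which is the same order as $C_{\tau,\alpha}$.'' That is not enough: if the perturbation is the same order as the drift, it can annihilate it, and your plan gives no mechanism to rule this out. The paper does not argue by perturbing the original drift. Instead it \emph{redefines the potential} so that its counting term tallies visits to $(\tau,\alpha(1+c))$-good intervals (not $(\tau,\alpha)$-good ones), and then redoes the Gretta--Price case analysis. The point of the enlarged good set is that any coin outside it has $|p_i-\tau|\geq (1+c)\alpha$, so after an adversarial shift of at most $c\alpha$ the observed bias is still at least $\alpha$ from $\tau$ in the correct direction; hence the log-weight term gains \emph{exactly} $\geq C_{\tau,\alpha}$ in expectation on bad queries, unchanged from the non-adversarial case. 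On good queries the log-weight term can lose up to $11\,C_{\tau,\alpha}$ (this is where the extra $c\alpha$ actually bites, and the constants must be tracked), but the counting term contributes $12\,C_{\tau,\alpha}(1-\gamma)$, keeping the net drift at $(1-12\gamma)C_{\tau,\alpha}>0$. Your sketch gestures at ``pushing the posterior away from coins with $|p_i-\tau|>\alpha(1+c)$,'' which is the right intuition, but you never introduce the two-part potential (log-weight of a fixed best interval $a$ plus the weighted good-visit count) that makes the submartingale argument go through; without it, your ``log-posterior supermartingale whose drift is a KL divergence'' is not what Gretta--Price actually analyze, and the perturbation bound you state does not close.

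A minor point: the potential is a \emph{sub}martingale (it increases in expectation), not a supermartingale.
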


Note that Theorem~\ref{thm:NBS-changing-probabilities} follows directly from Theorem~\ref{thm:GP-generalization} by setting $\tau=1/2$ and $\failp=\ab^{-\lambda}$ for any constant $\lambda$. With this, the proof of Theorem~\ref{thm:main-emp} is complete.

 Before we delve into the proof of Theorem~\ref{thm:GP-generalization}, let us first describe the idea behind \texttt{BayeSS}, described shortly in Algorithm \ref{alg: bayeSS}.
 At a high level \texttt{BayeSS} proceeds in two steps allocating a portion of the coin flips for each step. The first step is a Bayes learner algorithm, called  \texttt{BayesLearn}.
 It starts by assigning a uniform prior $w(I_i)$ to each coin interval $I_i=[i, i+1]$ for any $i \in [\ab-1]$, then takes the $\tau$-quantile interval under the posterior $w(I_i)$, selects a coin from this interval, flips it, and updates each $w(I_i)$ according to the result of the coin flip and the error $\alpha$. This procedure is repeated iteratively.
 The sampled intervals are collected in a multiset $L$, with the guarantee that, after \( O\big(\tfrac{(1+\gamma)\log B}{C_{\tau, \alpha}}\big) \) coin flips, a $\gamma$-fraction of intervals in  $L$  contains a $(\tau, \alpha)$-good coin with high probability in  $\ab$  (referred to as good intervals). In the second step, this property is used to narrow the set of possible coins to $O(1/\gamma)$, ensuring that it contains at least one $(\tau, \alpha)$-good coin. Each coin in the candidate set can be individually tested, up to error $\alpha$, with high probability using $O(\tfrac{1}{\gamma\alpha^2}\log(\tfrac{\ab}{\gamma}))$ coin flips.

It is easy to see that in the adversarial setting, the coins can be tested up to error $\alpha(1+c)$ in the second step. 
Our main challenge in proving Theorem~\ref{thm:GP-generalization}, is analyzing the first part of the algorithm, \texttt{BayesLearn}, in the adversarial setting. 
The authors in \cite{gretta2023sharp} used a stopping time argument to analyze \texttt{BayesLearn}. They defined a potential function $\Phi$, with an initial negative value, constructed so that a positive potential implies finding at least a $\gamma$ fraction of good intervals. The stochastic process describing the evolution of the potential $\{\Phi_{i}\}_{i=1,\dots}$ is then modeled with a submartingale that can be used to bound, using Azuma's inequality, the probability that the process crosses zero after a sufficient number of iterations. We prove that we can use the same argument for the case of adversarial probabilities if we allow the potential to catch approximate good intervals, namely intervals containing $(\tau, \alpha(1+c))$-good coin.

\paragraph{New potential} Let $\{\ell,\dots,r\}$ be the set of $(\tau,\alpha(1+c))$-good intervals. Let $a$ be the maximum $i \in [\ab-1]$ such that $p^1_i\leq \tau$. Let $L$ be the list of intervals visited in \texttt{BayesLearn}. We define the potential function as 
\begin{equation*}
\label{eq: new potential}
    \Phi(w, L) := \log_2 w(a) + 12 C_{\tau, \alpha}(|\{x\in L : x \in [\ell,r]\}|-\gamma|L|),
\end{equation*}
where $w(a)$ is the Bayesian posterior weight associated to the best interval $a$ and $C_{\tau, \alpha}$ is a concrete function of $\tau$ and $\alpha$.
Notice that a positive potential implies $|\{x\in L | x \in [\ell,r]\}| >\gamma |L|$, hence indicating the presence of a $\gamma$ fraction $(\tau, \alpha(1+c))$-good intervals in $L$. The following Lemma generalises Lemma 7 of \cite{gretta2023sharp} and allows the construction of a submartingale.
\begin{algorithm*}[t]
\caption{\texttt{BayesLearn} for empirical quantile estimation, from Algorithm 2 in \cite{gretta2023sharp}}\label{alg: BayesLearn}
\begin{algorithmic}[1]
\FUNCTION{\texttt{GetIntervalFromQuantile}$(w, q)$}{}
    \STATE $\textbf{return\, } \min i \in [B] \text{ s.t. } W(i)\geq q$ \textbf{ with } $W(x)=\sum_{i\in\{1, \dots, x\}}w(i)$
\ENDFUNCTION\\
\hspace{0.5 cm}
\FUNCTION{\texttt{RoundIntervalToCoin}$(i, w, q)$}{}
    \STATE \textbf{return } $i$ \textbf{ if } $\frac{q-W(i-1)}{w(i)}\leq q$ \textbf{ else } $i+1$ \textbf{ with } $W(x)=\sum_{i\in\{1, \dots, x\}}w(i)$
\ENDFUNCTION\\
\hspace{0.5 cm}
\FUNCTION{\texttt{BayesLearn}$(\{x_{i}\}_{i=1,\dots, n}, B, \tau, \alpha, M)$}{}
\STATE $w_1 \gets \text{uniform}([B-1])$
\STATE $q \gets \arg \max_{x}H((1-x)(\tau -\alpha)+x(\tau +\varepsilon))-(1-x)H(\tau -\alpha)-xH(\tau + \alpha)$
\STATE $I \gets \{\}$ \COMMENT{Multiset}
\FOR {$i \in [M]$}
    \STATE $j_i \gets \texttt{GetIntervalFromQuantile}(w_i, q)$
    \STATE $c_i \gets \texttt{RoundIntervalToCoin}(j_i, w_i, q)$ \COMMENT{Gets the coin from the selected interval}
    \STATE $L\gets L \cup \{j_i\}$
    \STATE $x_i \sim \{x_k\}_{k=1,\dots}$ \COMMENT{Sample a user}
    \STATE $\{x_k\}_{k=1,\dots}\gets \{x_k\}_{k=1,\dots} \setminus \{x_i\}$ \COMMENT{Remove the user from the dataset}
    \STATE $y_i \gets [x_i \leq c_i]$ \COMMENT{Flip the coin}
    \STATE $w_{i+1}(x)\gets \begin{cases}
        w_i(x)d_{\tilde{y}_i,0} & \text{if } x\in \{1, \dots, j_i-1\}\\
        d_{\tilde{y}_i,0}(q-W_i(j_i-1))+d_{\tilde{y}_i, 1}(W_{i}(j_i)-1) & \text{if } x= j_i\\
        w_{i}(x)d_{\tilde{y}_i, 1} & \text{if } x\in \{j_i +1 , \dots, B-1\}
    \end{cases}$
\ENDFOR
\STATE \textbf{return} $L$ \COMMENT{Return a multiset of intervals}
\ENDFUNCTION
\end{algorithmic}
\end{algorithm*}
\begin{lemma}[Adaptation of Lemma 7 in \cite{gretta2023sharp} for adversarial probabilities]
    \label{lemma: increase in expectation of the potential}
    For $c\leq 1$ and $\alpha \leq \frac{1}{2}\min\{\tau, 1-\tau\}$, the expected variation of the potential is 
    \begin{equation}
        \E[\Phi_{t+1}-\Phi_{t}|y_{1}, \dots, y_t] \geq (1-12\gamma)C_{\tau, \alpha},
    \end{equation}
    where $(y_1, \dots, y_t)$ are the results of the coin toss up to $t+1$-th sample, and $C_{\tau, \alpha} = \Theta\left(\frac{\tau(1-\tau)}{\alpha^2}\right)$.
\end{lemma}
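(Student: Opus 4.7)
The plan is to follow the proof of Lemma~7 of \cite{gretta2023sharp}, showing that the adversarial perturbation of the coin biases only affects the Bayesian-update term in a way that is absorbed by the new counting term in the potential. Writing $\Delta_t := \Phi_{t+1} - \Phi_t$, I would decompose
\[
\Delta_t = \log_2 \frac{w_{t+1}(a)}{w_t(a)} + 12\,C_{\tau,\alpha}\bigl([j_t \in [\ell,r]] - \gamma\bigr),
\]
where $[\cdot]$ denotes the Iverson bracket. A preliminary observation is that since $a$ is itself $(\tau,\alpha(1+c))$-good (it is the largest index with $p_a \leq \tau$, so $\tau \in [p_a, p_{a+1}]$), we have $\ell \leq a \leq r$. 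I would then analyze $\E[\Delta_t \mid y_1,\dots,y_t, j_t]$ in two cases according to whether $j_t$ is good or not.

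\textbf{Bad intervals.} If $j_t \notin [\ell,r]$, say $j_t < \ell$, then by definition of $\ell$ the interval $[p_{j_t}, p_{j_t+1}]$ is disjoint from $(\tau - \alpha(1+c),\tau + \alpha(1+c))$ and lies to the left of $\tau$, so $p_{j_t+1} \leq \tau - \alpha(1+c)$. Whichever coin $c_t \in \{j_t, j_t+1\}$ \texttt{RoundIntervalToCoin} picks, the adversarial bias therefore satisfies $\tilde p_{c_t} \leq p_{c_t} + c\alpha \leq \tau - \alpha$, which matches the worst-case left bias assumed by \texttt{BayesLearn}. Since $j_t < a$ in this subcase, the Bayesian update of $w(a)$ uses the factor designed for the $\tau-\alpha$ regime, and the original calculation of \cite{gretta2023sharp} applies verbatim to yield $\E[\log_2(w_{t+1}(a)/w_t(a)) \mid y_1,\dots,y_t, j_t] \geq C_{\tau,\alpha}$. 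A symmetric argument handles $j_t > r$. Together with the counting contribution $-12\gamma\, C_{\tau,\alpha}$, the bad-interval case contributes at least $(1-12\gamma)\,C_{\tau,\alpha}$ in expectation.

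\textbf{Good intervals.} If $j_t \in [\ell,r]$, the counting term contributes $+12(1-\gamma)\,C_{\tau,\alpha}$, while the Bayesian term may decrease in expectation, since $\tilde p_{c_t}$ can now land anywhere in $[\tau - (1+2c)\alpha,\,\tau + (1+2c)\alpha]$, possibly in a range where the algorithm's update is pessimistic for $w(a)$. Using the linearity of $\tilde p \mapsto \tilde p \log_2 d_{1,\cdot} + (1-\tilde p)\log_2 d_{0,\cdot}$ in $\tilde p$ and the explicit formulas for $d_{y,0}, d_{y,1}$ used in \texttt{BayesLearn}, I would show that the expected log-update is at least $-\kappa\, C_{\tau,\alpha}$ for some universal constant $\kappa \leq 11$. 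The total expected contribution in the good case is then at least $(12(1-\gamma) - \kappa)\,C_{\tau,\alpha} \geq (1-12\gamma)\,C_{\tau,\alpha}$, and averaging over $j_t$ proves the lemma.

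\textbf{Main obstacle.} The technical heart is the good-interval bound: we must extremize a linear function of $\tilde p_{c_t}$ over an $O(\alpha)$-wide window around $\tau$ and verify that the worst case is at most $-O(C_{\tau,\alpha})$ with a constant strictly smaller than $12$. The constant $12$ in the potential definition is calibrated precisely to absorb this loss; the assumptions $c \leq 1$ and $\alpha \leq \tfrac{1}{2}\min\{\tau, 1-\tau\}$ keep the window of $\tilde p$ bounded and ensure $C_{\tau,\alpha} = \Theta(\alpha^2/(\tau(1-\tau)))$ remains the correct scale throughout the computation.
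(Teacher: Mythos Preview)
Your proposal is correct and follows essentially the same approach as the paper's proof: the same bad/good split on $j_t$, the same reduction to the original Gretta--Price bound in the bad case via $|\tilde p - p|\le c\alpha$ (so that $\tilde p$ still lies on the correct side of $\tau\pm\alpha$), and the same target of $-11\,C_{\tau,\alpha}$ for the Bayesian term in the good case, which combined with the $12(1-\gamma)C_{\tau,\alpha}$ counting gain yields $(1-12\gamma)C_{\tau,\alpha}$. The only detail you do not spell out is the sub-case $j_t=a$, where the update of $w(a)$ is a convex combination $d_{y,0}k+d_{y,1}(1-k)$ rather than a single factor $d_{y,\cdot}$; the paper treats this separately, but the minimum over $k\in[0,1]$ is attained at the endpoints and reduces to exactly the extremal expressions you already identify.
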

\begin{proof}
The proof for the adversarial setting, which allows an adversary to alter the head coin probability at each iteration up to $c\alpha$, while preserving their order, closely resembles the proof of Lemma 7 in \cite{gretta2023sharp}, which addresses the case of fixed coin probabilities. We will go through the steps of the proof highlighting the main differences. An implementation of \texttt{BayesLearn} for empirical quantile estimation, where each user is used at most once, can be found in Algorithm \ref{alg: BayesLearn}.

Let's define the capacity of the $(\tau, \alpha)$-BAC (Binary Asymmetric Channel) as
\begin{align*}
    C_{\tau, \alpha} &= \max_{q} H((1-q)(\tau-\alpha)+q(\tau+\alpha)) -(1-q)H(\tau-\alpha)-qH(\tau+\alpha), \\
    q&=\arg \max_x H((1-x)(\tau-\alpha)+x(\tau+\alpha)) -(1-x)H(\tau-\alpha)-xH(\tau+\alpha),
\end{align*}
where $H(p)$ is the binary entropy. Let's define the multiplicative Bayes weights $d_{x,y}:\{0,1\}\times \{0,1\}\rightarrow \R$, they indicates the multiplicative effect of a flip resulting $x$ (1=Heads, 0=Tails) on the density of an interval on side $y$ (1=Right, 0=Left) of the flipped coin.
\begin{align*}
    d_{0,0} &= \dfrac{1-\tau-\alpha}{1-\tau-(2q-1)\alpha}\\
    d_{0,1} &= \dfrac{1-\tau+\alpha}{1-\tau-(2q-1)\alpha}\\
    d_{1,0} &= \dfrac{\tau +\alpha}{\tau+(2q-1)\alpha}\\
    d_{1,1} &= \dfrac{\tau-\alpha}{\tau+(2q-1)\alpha}.
\end{align*}
We will mainly use the results from Lemma 9 in \cite{gretta2023sharp} that states that
\begin{gather}
    C_{\tau, \alpha} = (\tau + \alpha)\log_2 d_{1,0} + (1-\tau-\alpha)\log_2 d_{0,0} \label{eq: lemma A.1 [1]},\\
    C_{\tau, \alpha} = (\tau -\alpha)\log_2 d_{1,1} + (1-\tau+\alpha)\log_2 d_{0,1} \label{eq: lemma A.1 [2]},
\end{gather}
with the fact that $d_{1,0} \geq d_{0,0}$ and $d_{1,1}\leq d_{0,1}$. Recall the potential function: let $\{\ell,\dots,r\}$ be the set of $(\tau,\alpha(1+c))$-good intervals. Let $a$ be the maximum $i \in [B-1]$ such that $p^1_i\leq \tau$. Let $L$ be the list of intervals visited in \texttt{BayesLearn}. 

Let $j_t$ be the interval chosen at $t$-th round, and let $c_t$ be the index of the coin flipped. Let $p^t_{c_t} = p^t$ (we will discard the coin subscript) the probability of the selected coin at time $t$. We split the potential in two addend
\begin{gather}
    \label{eq: set}
    12 C_{\tau, \alpha}(|\{x\in L | x \in [\ell,r]\}|-\gamma|L|)\\
    \label{eq: log weight}
    \log_2 w(a)
\end{gather}
The main difference with the proof in \cite{gretta2023sharp} is that a good coin is defined on the initial probabilities $\{p^1_i\}_{i=1, \dots, B}$, but at the $t$-th iteration we only have access to coin with probability $\{p^t_i\}_{i=1,\dots,B}$. However, they are concentrated around $\alpha$, so $|p^t-p^1|\leq c\alpha$ for $c\leq 1$.

{\bf Bad queries:} Consider $j_t \notin [\ell, r]$. If $j_t>r$, then $p^1\geq \tau + (1+c)\alpha$. As we have that $|p^t-p^1|\leq c\alpha$ we also have $p^t \geq p^1-c\alpha \geq \tau+(1+c)\alpha-c\alpha=\tau+\alpha$. The expected change in the weights is
\begin{equation*}
    \E[\log_2 w_{t+1}(a)- \log_2 w_{t}(a)] = p^t \log_2 d_{1,0} + (1-p^t)\log_2 d_{0,0}\geq C_{\tau, \alpha}.
\end{equation*}
Where the last inequality comes from the fact that the expression is minimized as $p^{t}=\tau+\alpha$, and \autoref{eq: lemma A.1 [1]}. Consider now $j_t<L$, then $p^1\leq \tau-(1+c)\alpha$, which means $p^t \leq p^1+c\alpha \leq \tau-(1+c)\alpha + c\alpha = \tau-\alpha$, then 
\begin{equation*}
    \E[\log_2 w_{t+1}(a)- \log_2 w_{t}(a)] = p^t \log_2 d_{1,1} + (1-p^t)\log_2 d_{0,1}\geq C_{\tau, \alpha},
\end{equation*}
where we reach the minimum $C_{\tau, \alpha}$ when $p^t=\tau-\alpha$, due to \autoref{eq: lemma A.1 [2]}. As $j_t \notin [\ell,r]$ the change in \autoref{eq: set} is $-\gamma \cdot 12 C_{\tau,\alpha}$. Therefore, on bad queries the expected change in $\Phi$ is at least $(1-12\gamma)C_{\tau, \alpha}$.

{\bf Good Queries:} Let's consider the expected change in \autoref{eq: log weight} when $j_t \in [\ell, r]$. Consider the case where $j_t \neq a$, then the expected change is either
\begin{align*}
    &p^t \log_2 d_{1,0} +(1-p^t) \log_2 d_{0,0} \quad  \text{if}  \quad \text{$a$ is on the left of $j_t$, so $p^{0}\geq \tau \Rightarrow p^t \geq \tau-c\alpha$}\\
    &p^t \log_2 d_{1,1} +(1-p^t) \log_2 d_{0,1} \quad  \text{if} \quad  \text{$a$ is on the right of $j_t$, so $p^{0}\leq \tau \Rightarrow p^t \leq \tau+c\alpha\qquad$}
\end{align*}
The first expression is increasing in $p^t$ while the second is decreasing, therefore the expected change is at least
\begin{equation}\label{eq: min for good queries}
    \min\left\{(\tau-c\alpha) \log_2 d_{1,0} +(1-\tau+c\alpha) \log_2 d_{0,0}\,;\,(\tau+c\alpha) \log_2 d_{1,1} +(1-\tau-c\alpha) \log_2 d_{0,1}\right\}
\end{equation}
Let's consider the first argument of the previous expression
\begin{align*}
    (\tau-c\alpha) \log_2 d_{1,0} +(1-\tau+c\alpha) \log_2 d_{0,0} &=(\tau+\alpha) \log_2 d_{1,0} +(1-\tau-\alpha) \log_2 d_{0,0}-\alpha(1+c)(\log_2 d_{1,0}-\log_2 d_{0,0})\\
    &= C_{\tau, \alpha} -\alpha(1+c)\underbrace{(\log_2 d_{1,0}-\log_2 d_{0,0})}_{\geq 0} \quad \tag{as  $d_{1,0}\geq d_{0,0}$}\\
    &\geq C_{\tau, \alpha}-2\alpha (\log_2 d_{1,0}-\log_2 d_{0,0}) \quad \tag{as  $c\leq 1$}\\
    & \geq C_{\tau, \alpha}-2(6\log 2)C_{\tau, \alpha}\\
    & \geq -11 C_{\tau, \alpha},
\end{align*}
where in the first inequality we used the fact that $c\leq 1 \Rightarrow (1+c)\alpha\leq 2\alpha$, while in the second inequality we used Lemma 10 and Lemma 13 in \cite{gretta2023sharp}, valid for $\alpha \leq \frac{1}{2}\min(\tau, 1-\tau)$.
Analogously, for the second argument of \autoref{eq: min for good queries} we get
\begin{align*}
    (\tau + c\alpha)\log_2 d_{1,1} + (1-\tau-c\alpha)\log_2 d_{0,1} &=(\tau -\alpha)\log_2 d_{1,1} + (1-\tau+\alpha)\log_2 d_{0,1} -(1+c)\alpha\underbrace{(\log_2 d_{0,1}-\log_2 d_{1,1})}_{\geq 0}\\
    &\geq -11 C_{\tau, \alpha},
\end{align*}
where the inequality follows by an analogous computation.
Therefore, the change of the weights when $j_t \neq a$ is in expectation at least $-11 C_{\tau,\alpha}$ when $c\in [0,1]$ and $\alpha \leq \frac{1}{2}\min(\tau, 1-\tau)$.
Let's consider now the case where $j_{t} = a$, the expected change is
\begin{equation}
\label{eq: lemma general k}
    p^t\log_2(d_{1,0}k+d_{1,1}(1-k))+(1-p^t)\log_2(d_{0,0}k+d_{0,1}(1-k)),
\end{equation}
for some $k\in [0,1]$. We have two cases: $k\leq q$ or $k>q$. When $k\leq q$ the coin flipped is $a$ then $p^1\leq \tau$ and so $p^{t}\leq \tau+c\alpha$, in \cite{gretta2023sharp} it was shown that in this case \autoref{eq: lemma general k} is decreasing in $p^t$, then the minimum is 
\begin{equation}
\label{eq: min 1}
    (\tau+c\alpha)\log_2(d_{1,0}k+d_{1,1}(1-k))+(1-\tau-c\alpha)\log_2(d_{0,0}k+d_{0,1}(1-k)) 
\qquad \text{if } k\leq q.
\end{equation}
Conversely, when $k>q$ the coin flipped is $a+1$ and then $p^1\geq \tau$ so $p^t\geq \tau-c\alpha$. In this case the expression \eqref{eq: lemma general k} is increasing in $p^t$ so the minimum is
\begin{equation}
\label{eq: min 2}
    (\tau-c\alpha)\log_2(d_{1,0}k+d_{1,1}(1-k))+(1-\tau+c\alpha)\log_2(d_{0,0}k+d_{0,1}(1-k)) 
\qquad \text{if } k> q.
\end{equation}
In \cite{gretta2023sharp} the authors demonstrated that the minimum are obtained when $k\in \{0,1\}$. Therefore, for $k=1> q$ we have \autoref{eq: min 2} while for $k=0< q$ we have instead \autoref{eq: min 1}, which means that the minimum is
\begin{equation*}
    \min\left\{(\tau-c\alpha) \log_2 d_{1,0} +(1-\tau+c\alpha) \log_2 d_{0,0}\,;(\tau+c\alpha) \log_2 d_{1,1} +(1-\tau-c\alpha) \log_2 d_{0,1}\right\},
\end{equation*}
which is at least $-11 C_{\tau, \alpha}$ as demonstrated for the case $j_t \neq a$.
To conclude, the expected change in \autoref{eq: set} is at least $12 C_{\tau, \alpha}(1-\gamma)$, then the overall expected change for the potential is at least $12 C_{\tau, \alpha}(1-\gamma)-11C_{\tau, \alpha} = (1-12 \gamma)C_{\tau, \alpha}$, cocnluding the proof.
\end{proof}

The previous Lemma is the building block for the analysis of \texttt{BayesLearn}, as it allows the construction of a submartingale $\{Y_{t}\}_{t=1,\dots}$ with $Y_{t+1} = \Phi_{t+1}-gt$, for $g=(1-12\gamma)C_{\tau, \alpha}$,
that can be used to bound the probability to have a $\gamma$ fraction of good intervals, hence a positive potential. The analysis then follows directly from \cite{gretta2023sharp} with the distinction that the algorithm now with high probability in $\ab$ returns a $(\tau,\alpha(1+c))$-good coin, so proving Theorem \ref{thm:GP-generalization}. Since the proof is identical (see Lemma 6 and Theorem 1 of~\cite{gretta2023sharp}), we omit it. 
However, in order to make this paper self-contained, we will show a simple proof of Theorem~\ref{thm:NBS-changing-probabilities} (which is much less general than Theorem~\ref{thm:GP-generalization}). We restate the theorem here.
\begin{theorem}
Let $0<\alpha\leq \frac{1}{4}$ and suppose $c\leq 1$
There exists an algorithm for \texttt{AdvMonotonicNBS}$(1/2, \alpha, c)$ which uses $O\left(\tfrac{\log B}{\alpha^2}\right)$ coin flips and returns an $(1/2,\alpha(1+c))$-good with high probability in $B$.
\end{theorem}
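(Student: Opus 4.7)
The plan is to run the two-phase Bayesian Screening Search (\texttt{BayeSS}, Algorithm~\ref{alg: bayeSS}) instantiated at $\tau = 1/2$, and to argue correctness using the adversarial drift bound of Lemma~\ref{lemma: increase in expectation of the potential} combined with Azuma's inequality, finishing with a short Hoeffding-based verification step. Fix a constant $\gamma = 1/24$ so that $1 - 12\gamma \geq 1/2$, let $[\ell, r] \subseteq [B-1]$ be the indices of $(1/2, \alpha(1+c))$-good intervals, and let $a$ be the pivot index used in the potential $\Phi_t = \log_2 w_t(a) + 12\, C_{1/2,\alpha}(|\{x \in L_t : x \in [\ell, r]\}| - \gamma |L_t|)$. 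Note $\Phi_1 = -\log_2(B-1)$ and $C_{1/2, \alpha} = \Theta(\alpha^2)$ for $\alpha \leq 1/4$.

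For Phase~1, I would run \texttt{BayesLearn} for $T = C \log B / \alpha^2$ rounds with a sufficiently large constant $C$. By Lemma~\ref{lemma: increase in expectation of the potential}, $\mathbb{E}[\Phi_{t+1} - \Phi_t \mid y_1, \ldots, y_t] \geq (1 - 12\gamma) C_{1/2,\alpha}$, so $Y_t := \Phi_t - (t-1)(1-12\gamma)C_{1/2,\alpha}$ is a submartingale. A Taylor expansion of the Bayesian multipliers $d_{x,y}$ around $\alpha = 0$ shows $|\log_2 d_{x,y}| = O(\alpha)$, and the set-counting part of $\Phi_t$ changes by at most $12\,C_{1/2,\alpha}(1+\gamma) = O(\alpha^2)$ per step, so $|Y_{t+1} - Y_t| = O(\alpha)$. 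Azuma's inequality (Theorem~\ref{thm:azuma}), applied to the rescaled submartingale, yields $\Pr[Y_T \leq -\tfrac{1}{2}\log_2 B] \leq \exp(-\Omega(\log^2 B / (T \alpha^2))) = B^{-\Omega(1)}$ once $C$ is large enough. This unpacks to $\Phi_T \geq 0$ with high probability in $B$, which by definition of the potential means that strictly more than a $\gamma$-fraction of the multiset $L_T$ lies in $[\ell, r]$.

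For Phase~2, I would form the set $R$ of $\lceil 1/\gamma \rceil$-quantiles of the sorted multiset $L_T$: the density bound just established forces at least one element of $R$ to fall in $[\ell, r]$, so among the $O(1)$ candidate intervals at least one is $(1/2, \alpha(1+c))$-good. For each candidate coin, I would request $\Theta(\log B / \alpha^2)$ fresh adversarial flips and form the empirical mean $\hat p_i$; Hoeffding plus a union bound over the $O(1)$ candidates gives $|\hat p_i - \tilde p_i| \leq \alpha/10$ for every candidate with high probability in $B$. Combined with the adversary's guarantee $|\tilde p_i - p_i| \leq c\alpha$ and the monotonicity of the $p_i$'s, the candidate whose empirical mean is closest to $1/2$ can be certified as $(1/2, \alpha(1+c))$-good and returned. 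The total sample complexity is $T + O(1) \cdot O(\log B / \alpha^2) = O(\log B / \alpha^2)$, as claimed.

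The main obstacle is the Phase~1 martingale analysis, and in particular verifying the bounded-increment hypothesis of Azuma uniformly in the adversary's choices; this reduces to the elementary observation that all four $\log_2 d_{x,y}$ are $O(\alpha)$ in the regime $\alpha \leq 1/4$, after which Azuma goes through cleanly. The adversarial slack $c\alpha$ in Phase~2 is painless since we already target $\alpha(1+c)$-goodness, so the bias in the observed flips is fully absorbed by the testing tolerance.
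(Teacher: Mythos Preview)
Your approach is essentially identical to the paper's: run \texttt{BayesLearn}, use Lemma~\ref{lemma: increase in expectation of the potential} to build a submartingale from the potential, apply Azuma to show $\Phi_T > 0$ with high probability, then take the $1/\gamma$-quantiles of $L_T$ and verify the $O(1)$ candidates with fresh flips. The paper uses $\gamma = 1/13$ rather than $1/24$, but this is immaterial.

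There is one slip in your Azuma step. You claim $\Pr[Y_T \leq -\tfrac{1}{2}\log_2 B] \leq B^{-\Omega(1)}$, but since $Y_1 = \Phi_1 = -\log_2(B-1)$, the event $\{Y_T \leq -\tfrac{1}{2}\log_2 B\}$ translates to $\{Y_T - Y_1 \lesssim +\tfrac{1}{2}\log_2 B\}$, which has a \emph{positive} threshold and is therefore not a lower-tail event for the submartingale increment $Y_T - Y_1$; Azuma gives you nothing here. The paper instead bounds $\Pr[\Phi_{T+1} \leq 0] = \Pr[Y_{T+1} - Y_1 \leq -(gT + \Phi_1)]$ directly: once $T = C\log B/\alpha^2$ is large enough that $gT + \Phi_1 = \Omega(\log B)$ is positive, this \emph{is} a genuine lower-tail event and Azuma yields $\exp\bigl(-(gT+\Phi_1)^2 / (T\cdot O(\alpha^2))\bigr) = B^{-\Omega(1)}$, with the exponent growing with $C$. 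Your argument is easily repaired by this substitution.
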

\begin{proof}
Let $\Phi$ be the potential function in Lemma~\ref{lemma: increase in expectation of the potential} in the case $\tau=1/2$.
Given Lemma \ref{lemma: increase in expectation of the potential}, by choosing $g=(1-12\gamma)C_{1/2, \alpha}$ equal to the lower bound of the lemma, we have that $\{Y_{t}\}_{t=1,\dots}$, for $Y_{t+1} = \Phi_{t+1}-gt$
, is a submartingale as
\begin{equation*}
    \E[Y_{t+1}|y_1,\dots,y_t] = \E[\Phi_{t+1}|y_1,\dots, y_t]-gt = \underbrace{\E[\Phi_{t+1}-\Phi_t|y_1,\dots, y_t]}_{\geq g}-g +Y_{t}\geq Y_t
\end{equation*}
The difference of the martingale sequence $|Y_{t+1}-Y_t|$ is
\begin{equation*}
    |Y_{t+1}-Y_{t}|\leq |\log_2 w_{t+1}(a)-\log_{2}w_t(a)|+12 C_{1/2,\alpha}+g\leq  |\log_2 w_{t+1}(a)-\log_{2}w_t(a)|+O(\alpha^2),
\end{equation*}
by triangle inequality and $C_{1/2, \alpha}=\Theta(\alpha^2)$ for $\alpha\in (0,1/4)$ due to Lemma 10 \cite{gretta2023sharp}. The remaining term is 
$|\log w_{t+1}(a)-\log w_t(a)|\leq \max\{\log d_{1,0}, \log d_{0,1}\}\leq O(\alpha)$ for Lemma 13 \cite{gretta2023sharp}, thus $|Y_{t+1}-Y_t|\leq O(\alpha)$. We can use Azuma's inequality to bound the probability of having a negative potential 
\begin{align*}
    \Pr[\Phi_{t+1}\leq 0] &= \Pr[\Phi_{t+1}-gt-\Phi_1\leq -gt -\Phi_1]\\
    &=\Pr[Y_{t+1}-Y_0\leq -gt -\Phi_1]\\
    &\leq \exp\bigg(-\dfrac{(gt+\Phi_1)^2}{t\cdot O(\alpha^2)}\bigg)\quad \text{for } gt\geq -\Phi_1.
\end{align*}
Note that $\Phi_1=-\log(B-1)$. Therefore, picking $T=O\left(\frac{\log B}{g}\right)$ sufficiently large, we get that $\frac{(gT+\Phi_1)^2}{T\cdot O(\alpha^2)}\geq \lambda\log B$ for any desired constant $\lambda>0$. Thus,
\[
\Pr[\Phi_{T+1}\leq 0]\leq B^{-\lambda}.
\]
On the other hand, note that if $\Phi_{T+1}> 0$, then
\[
 0<\frac{\Phi_{T+1}}{12C_{1/2, \alpha}} \leq (|\{x\in L : x \in [\ell,r]\}|-\gamma|L|),
\]
and so, a $\gamma$ fraction of the intervals in $L$ are $(1/2,\alpha(1+c))$-good. Now we can order the intervals in $L$ in sorted order according to their indices $i$ of the corresponding coins. By picking a subset $S$ of every $(1/\gamma)$th of them, we are ensured that one of them will be good (conditional on the high probability event $\Phi_{T+1}> 0$). For each interval in $S$, we can test whether it is $(1/2,\alpha(1+c))$-good with high probability using $O(\frac{\log B}{\alpha^2})$ coin flips of each of the coins at its endpoints. Therefore, we successfully determine an $(1/2,\alpha(1+c))$-good coin with high probability in $B$. If we pick $\gamma=1/13$, the total number of coins flipped is 
\[
T+|S|O\bigg(\frac{\log B}{\alpha^2}\bigg)= O\bigg(\frac{\log B}{g}\bigg)+O\bigg(\frac{\log B}{\alpha^2}\bigg)=O\bigg(\frac{\log B}{\alpha^2}\bigg),
\]
where the final bound uses that $g=(1-12\gamma)C_{1/2, \alpha}=\frac{1}{13}C_{1/2, \alpha}=\Theta(\alpha^2)$. This completes the proof.
\end{proof}
\section{Lower Bounds}\label{sec:lower-bound}
In this section, we give the proof of our lower bounds: Theorem~\ref{thm:main-lower} for adaptive mechanisms, and Theorem~\ref{thm:intro-lower-non-interactive} for non-adaptive mechanisms. Since an algorithm for the median problem implies an algorithm for a general quantile by inserting dummy elements, we will correspondingly show a lower bound for the general quantile problem, demonstrating that all quantiles (not too close to the minimum or maximum) are as hard as the median. We use the notation \texttt{LDPstat-quantile} and \texttt{LDPemp-quantile} with the additional parameter $q$ to describe the corresponding generalizations.

\subsection{Adaptive Mechanisms}

In the statistical setting, our building block will be the lower bound framework in~\cite{duchi2013local}, which turns the estimation problem into a distinguishing problems. In our setting, it means that if a mechanism attains low error on the quantile problem, then it is good at distinguishing distributions with different $q$th quantiles from each other, even from a ``hard'' family of distributions. Our hard family of distributions will be close in statistical distance, but still have different $q$th quantiles:
    \[
        P_\beta(i) = \begin{cases}
        q - 2\alpha & i = 1 \\
        4\alpha & i = \beta \\
        1-q-2\alpha & i = \ab,
    \end{cases}
    \]
for $\beta \in \{2, \ldots, \ab-2\}$. If $\beta$ is chosen uniformly at random, then our LDP distinguishing mechanism will be able to deliver $\log(\ab)$ bits of information (measured with the mutual information), by Fano's inequality. However, there is an upper bound on the amount of mutual information possible with an LDP protocol, as first established in~\cite{duchi2013local}. This gives us the following bound:

\begin{theorem}\label{thm:stat-lb-adapt}
    Let $\ab \geq 4$, $\alpha < \frac{1}{2}$, $\epsilon < 1$, and $q\in (2\alpha, 1-2\alpha)$. Suppose there is a sequentially interactive $\epsilon$-LDP algorithm that  for any distribution $\mathcal{D}$ solves \texttt{LDPstat-quantile}$(\mathcal{D},n,\alpha,\eps,q)$ with probability at least $\frac{1}{2}$ . Then 
    \[
        n \geq \Omega\left( \frac{\log B}{ \eps^2\alpha^2}\right).
    \]
\end{theorem}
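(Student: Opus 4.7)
The plan is to cast quantile estimation as a multi-way hypothesis test over the hard family $\{P_\beta\}_{\beta=2}^{B-2}$ specified in the discussion above, and then combine Fano's inequality with the sequentially interactive LDP mutual information bound of Duchi, Jordan and Wainwright. Note first that each $P_\beta$ is a valid distribution because $(q-2\alpha)+4\alpha+(1-q-2\alpha)=1$ and all masses lie in $[0,1]$ thanks to $q\in(2\alpha,1-2\alpha)$.

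I would begin by showing that the $q$-quantile of $P_\beta$ essentially identifies $\beta$. A direct CDF calculation gives $F_{P_\beta}(m)=q-2\alpha$ for $1\le m<\beta$ and $F_{P_\beta}(m)=q+2\alpha$ for $\beta\le m<B$, so the only index $m$ for which $[F_{P_\beta}(m),F_{P_\beta}(m+1)]$ can intersect the open interval $(q-\alpha,q+\alpha)$ is $m=\beta-1$. Thus on $P_\beta$, any $\alpha$-approximate $q$-quantile estimate $\tilde m$ satisfies $\tilde m=\beta-1$ whenever it succeeds, and $\beta$ is recovered exactly by $\hat\beta:=\tilde m+1$.

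Next, draw $V$ uniformly from $\{2,\dots,B-2\}$, sample $X_1,\dots,X_n\sim P_V$ i.i.d., and run the sequentially interactive protocol to produce transcript $Y_{1:n}$ and estimator $\hat V$ as above. By hypothesis $\Pr[\hat V=V]\ge 1/2$, so Fano's inequality applied to the Markov chain $V\to Y_{1:n}\to \hat V$ gives $I(V;Y_{1:n})\ge \tfrac{1}{2}\log(B-3)-1=\Omega(\log B)$ for $B$ sufficiently large. On the other hand, the sequentially interactive LDP contraction bound of Duchi--Jordan--Wainwright yields, for $\epsilon\le 1$,
\[
I(V;Y_{1:n})\;\le\;C\,n\,\epsilon^2\,\mathbb{E}_{V,V'}\bigl\|P_V-P_{V'}\bigr\|_{TV}^2,
\]
where $V,V'$ are independent uniform draws. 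A direct computation shows $\|P_\beta-P_{\beta'}\|_{TV}=4\alpha$ for $\beta\neq\beta'$, since the two distributions differ only on the singletons $\{\beta\}$ and $\{\beta'\}$, each contributing mass $4\alpha$. Hence the right-hand side is $O(n\epsilon^2\alpha^2)$, and chaining the two bounds gives $n=\Omega(\log B/(\epsilon^2\alpha^2))$, as required.

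The main obstacle I anticipate is invoking the precise sequentially interactive form of the Duchi--Jordan--Wainwright inequality: the cleanest published statement is non-interactive, so for sequential interaction one must either carry out the chain-rule decomposition $I(V;Y_{1:n})=\sum_t I(V;Y_t\mid Y_{1:t-1})$ and bound each term using the $\epsilon$-LDP constraint applied conditionally on previous transcripts, or cite one of the subsequent works (e.g.\ Duchi--Rogers) that extend the bound to the sequentially interactive setting. Once that step is justified, the rest of the argument (Fano, the total variation calculation, and the final arithmetic) is routine.
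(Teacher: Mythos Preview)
Your proposal is correct and follows essentially the same approach as the paper: construct the hard family $\{P_\beta\}$, observe that an $\alpha$-approximate $q$-quantile recovers $\beta$, apply Fano's inequality to lower bound the mutual information by $\Omega(\log B)$, and upper bound it via the Duchi--Jordan--Wainwright LDP information contraction using $\|P_\beta-P_{\beta'}\|_{TV}=4\alpha$. The paper simply cites the sequentially interactive bound from~\cite{duchi2013local} (the calculations following their Corollary~1) without further elaboration, so your concern about justifying the interactive version is well-founded but handled in the paper by direct citation; otherwise your argument matches the paper's proof step for step, and your CDF calculation pinning down $\tilde m=\beta-1$ is in fact more explicit than what the paper writes.
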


\begin{proof}
    Let $\mathcal{M}=(\mathcal{M}_1,\dots, \mathcal{M}_n)$ be a sequentially interactive $\eps$-LDP protocol which solves \texttt{LDPstat-quantile}$(\mathcal{D},n,\alpha,\eps,q)$ with probability $\geq 1/2$, i.e., for a distribution $\mathcal{D}$ and $n$ samples $x_1,\dots,x_n$ from $\mathcal{D}$, it outputs an estimate $\tilde m=\mathcal{M}(x_1,\dots, x_n)$ which is an $\alpha$-approximation to the true median of $\mathcal{D}$ with probability at least $1/2$.
    Consider the following collection of distributions $\{P_\beta\}_{\beta \in [B]}$ indexed by a parameter $\beta\in \{1,\dots,B-1\}$, with probability mass functions defined by
    \[
        P_\beta(i) = \begin{cases}
        q - 2\alpha & i = 0 \\
        4\alpha & i = \beta \\
        1-q-2\alpha & i = B-1.
    \end{cases}
    \]

    Let ${\beta^*}$ be uniformly at random from $\{1,\dots,B-1\}$, and generate $n$ samples $x_1, \ldots, x_n$ from $P_{\beta^*}$. Let $y_i=\mathcal{M}_i(x_i,y_1,\dots,y_{i-1})$ be the $\eps$-differentially private output of user $i$ generated in the manner of in the manner of~\eqref{eq:sequential-interaction}. Finally, let $\tm_{\beta^*}=\mathcal{F}(y_1,\dots, y_n)$ be the estimated median output by our protocol. By Fano's inequality,
    \begin{align*}
        I({\beta^*} ; y_1, \ldots, y_n) &\geq H({\beta^*}) - H(\textbf{1}[\tm_{\beta^*} = {\beta^*}]) +\\
        &\quad - \Pr[\tm_{\beta^*} \neq  {\beta^*}] \log_2(B - 1) \\
        &\geq \log_2(B) - 1 - \frac{1}{2} \log(B-1) \\
        &\geq \frac{1}{4} \log_2(B),
    \end{align*}
    where $I$ denotes the mutual information and $H$ denotes the binary entropy.
    
    However, using the fact that our mechanism is $\eps$-LDP, we can use the \emph{upper bound} from~\cite{duchi2013local} on the mutual information between $\{y_1,\dots, y_n\}$ and ${\beta^*}$. According to their bound (see the calculations following Corollary 1 of~\cite{duchi2013local}) for all $\epsilon < 1$, we have 
    \[
    I({\beta^*}; y_1, \ldots, y_n) \leq 4(e^\epsilon-1)^2  \frac{n}{B^2} \sum_{\beta, \beta' \in [B]} \|P_\beta- P_{\beta'}\|_{TV}^2.
    \]
   Note that the total variation distance between $P_\beta$ and $P_{\beta'}$ for $P_\beta\neq \beta'$ is $4\alpha$. It follows that, 
   \[
   I({\beta^*}; y_1, \ldots, y_n) \leq 256 \eps^2n\alpha^2.
   \]
    Combining inequalities, we see that $\log(B) \leq 1024 \epsilon^2 \alpha^2 n$ which implies that $n =\Omega( \frac{\log B}{ \eps^2\alpha^2})$, as desired.
\end{proof}

To adapt this to the empirical setting, observe that any algorithm for the empirical problem can be used to solve the statistical problem by sampling, and then applying the empirical algorithm.
\begin{theorem}\label{thm:emp-lb-adapt}
    Let $\ab \geq 4,\acc<\frac{1}{2}$, and $q \in (2\alpha, 1-2\alpha)$. If $\priv \leq \min \{ 1, \frac{1}{64} \sqrt{\log \ab} \}$, then any sequentially interactive $\epsilon$-LDP algorithm that solves \texttt{LDPemp-quantile}$(\{x_i\}_{i=1}^n,\alpha,\eps,q)$ with probability $\frac{3}{4}$ requires $n \geq \Omega(\frac{\log B}{\priv^2 \acc^2})$. 
\end{theorem}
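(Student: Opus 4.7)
The plan is to reduce the empirical lower bound to the statistical lower bound of Theorem~\ref{thm:stat-lb-adapt} via the natural sampling reduction. Suppose $\mathcal{A}$ is a sequentially interactive $\epsilon$-LDP algorithm solving \texttt{LDPemp-quantile}$(\{x_i\}_{i=1}^n,\alpha,\eps,q)$ with probability $\geq 3/4$ using $n$ users. For any distribution $\mathcal{D}$ on $[B]$, I would construct a statistical quantile estimator by drawing $x_1,\dots,x_n$ i.i.d.\ from $\mathcal{D}$ and returning $\mathcal{A}(x_1,\dots,x_n)$. The resulting algorithm inherits the $\epsilon$-LDP guarantee from $\mathcal{A}$ user-by-user and uses the same number of samples.

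To translate between the empirical and statistical accuracy, I would apply the Dvoretzky-Kiefer-Wolfowitz inequality to the empirical CDF $\hat F$ of the drawn samples: for $n \geq C/\alpha^2$ with a sufficiently large absolute constant $C$, we have $\|\hat F - F_{\mathcal{D}}\|_\infty \leq \alpha/10$ with probability at least $7/8$. By a union bound, with probability at least $5/8 > 1/2$ both the DKW concentration and the success of $\mathcal{A}$ occur simultaneously, and on that joint event the returned $\tilde m$ satisfies $[\hat F(\tilde m),\hat F(\tilde m+1)] \cap (q-\alpha,q+\alpha) \neq \emptyset$; combined with the uniform CDF deviation this upgrades $\tilde m$ to an $(11\alpha/10)$-approximate statistical quantile of $\mathcal{D}$. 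Applying Theorem~\ref{thm:stat-lb-adapt} with accuracy $11\alpha/10$ then yields $n = \Omega(\log B/(\epsilon^2 \alpha^2))$. The mild loss in the accuracy parameter and the associated shrinkage of the admissible range for $q$ can be absorbed by choosing a smaller DKW slack $\alpha/K$ at the cost of a constant factor, and the sliver where $\alpha$ is too close to $1/2$ for the statistical theorem to apply follows from monotonicity of the claimed bound in $\alpha$.

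The hypothesis $\epsilon \leq \frac{1}{64}\sqrt{\log B}$ plays the role of absorbing the DKW premise $n \geq C/\alpha^2$ into the final conclusion: it gives $\log B/\epsilon^2 \geq 64^2$, so any lower bound of the form $n \geq c\log B/(\epsilon^2 \alpha^2)$ delivered by the statistical theorem is already at least $64^2 c/\alpha^2$, which swallows the DKW threshold once the universal constants are compared. Consequently, an algorithm violating the DKW requirement would also violate the claimed lower bound, so this edge case cannot obstruct the reduction. I expect no conceptual obstacle; the main care lies in bookkeeping constants — the DKW failure probability, the permissible range of $\alpha$ and $q$ in Theorem~\ref{thm:stat-lb-adapt}, and the numerical factor $1/64$ — to ensure they line up, but each of these is mechanical verification given Theorem~\ref{thm:stat-lb-adapt}.
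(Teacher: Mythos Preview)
Your proposal is correct and follows essentially the same approach as the paper: reduce to the statistical lower bound (Theorem~\ref{thm:stat-lb-adapt}) by running the empirical algorithm on i.i.d.\ samples, use a concentration inequality to transfer the empirical guarantee to the statistical one, union-bound the two failure events, and finally invoke the condition $\eps \leq \frac{1}{64}\sqrt{\log B}$ to ensure the concentration premise $n \geq C/\alpha^2$ is subsumed by the target bound. The only cosmetic differences are that the paper uses Hoeffding at a single threshold (losing a full $\alpha$ to get a $2\alpha$-approximate statistical quantile) whereas you use DKW with slack $\alpha/10$, and you are a bit more explicit about the constant bookkeeping.
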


\begin{proof}
    We will first show any algorithm $\calM$ which solves $\texttt{LDPemp-quantile}(\{x_i\}_{i=1}^n, \alpha, \epsilon, q)$ with probability $\frac{3}{4}$ can be used to solve $\texttt{LDPstat-quantile}(\calD, n, 2\alpha, \epsilon, q)$ with probability $\frac{1}{2}$. The algorithm will simply apply $\calM$ to the sampled dataset $\{x_1, \ldots, x_n\}_{i=1}^n$ from $\calD$. Using Hoeffding's bound, together with the fact that $n \geq \frac{2}{\alpha^2}$, the $q$th quantile $x_{(q)}$ of the sampled dataset will have quantile error at most $\alpha$ from the true $q$th quantile of $\calD$ with probability at least $\frac{3}{4}$. Thus, the correctness guarantee of $\calM$ carries over, with success probability at least $\frac{1}{2}$ by the union bound.

    Using the above reduction, we are able to show a lower bound of $n \geq \Omega(\frac{\log \ab}{\priv^2 \acc^2})$ so long as these $\frac{\log(B)}{1024 \priv^2 \acc^2} \geq \frac{4}{\alpha^2}$ is satisfied.
    
\end{proof}

Theorem~\ref{thm:main-lower} follows directly from Theorem~\ref{thm:emp-lb-adapt}. These lower bounds establish that our algorithm in Theorem~\ref{thm:main-emp} is tight in the $\epsilon = O(1)$ regime.

\subsection{Non-Adaptive Mechanisms}
In order to prove non-adaptive lower bound of~\cref{thm:intro-lower-non-interactive}, we will apply a lower bound for learning a cumulative distribution function from~\cite{edmondsNU20}. The CDF learning problem is defined as follows:

\begin{definition}
    Given a dataset $X = \{x_1, \ldots, x_n\} \subseteq [B]$, let $F_X(t)$ denote its c.d.f. (given by $F_X(t) = \frac{1}{n} \sum_{i=1}^n \textbf{1}[x_i \leq \tau]$.
    In the $\texttt{LDPemp-cdf}(\{x_i\}_{i=1}^n, \alpha, \epsilon)$ problem, the task is to output a function $\tilde{F}$ under $\epsilon$-LDP which approximates the c.d.f. up to error alpha at all points; i.e. 
    \[
        \E[\|\tilde{F} - F_X\|_\infty] \leq \alpha.
    \]
\end{definition}
Observe the above definition considers the expected error, which different from applying Definitions~\ref{def:med-emp} or~\ref{def:med-stat} with constant probability of failure. We change to expectation because it is the setting considered by~\cite{edmondsNU20}, but may easily convert between the two types of guarantees (since the maximum c.d.f. error is $1$, we only need failure probability of $\alpha$ to obtain a bound on the expectation).

A lower bound on $n$ for learning a c.d.f. was shown in~\cite{edmondsNU20} for $\epsilon < 1$:

\begin{theorem}\label{thm:emp-lb-cdf} (Theorem 23 in~\cite{edmondsNU20})
    There exists a constant $C$ such that,
    for all $\alpha$ sufficiently small, and all $\epsilon < 1$ and $B$ satisfying $\frac{\log^2(B)}{ \epsilon^2\alpha^2 \log(1/\alpha)^2} \geq \frac{C \log (2B)}{\alpha^2} + \frac{C}{\epsilon^2 \alpha^2}$ any $\epsilon$-LDP algorithm which solves $\texttt{LDPemp-cdf}(\{x_i\}_{i=1}^n, \alpha, \epsilon)$ requires 
    \[
        n \geq \Omega\left(\frac{\log^2(B)}{\epsilon^2 \alpha^2 \log(1/\alpha)^2}\right).
    \]
\end{theorem}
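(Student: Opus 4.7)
Since this bound is stated as a citation to Theorem~23 of~\cite{edmondsNU20}, any proof I would give here reconstructs theirs, and the cleanest route is to specialize their factorization-norm framework to the prefix-sum workload.

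The plan has three steps. First, recast \texttt{LDPemp-cdf} as the problem of answering the $B\times B$ lower-triangular workload $W_{ij}=\mathbf{1}[j\le i]$ to $\ell_\infty$-error at most $\alpha$ under $\epsilon$-LDP: knowing the empirical CDF $F_X$ at accuracy $\alpha$ is literally knowing $Wp$ at $\ell_\infty$-accuracy $\alpha$, where $p$ is the empirical type vector of the dataset. Second, apply the general LDP lower bound from~\cite{edmondsNU20}, which says that answering a workload $W$ under $\epsilon$-LDP with $\ell_\infty$-error $\alpha$ requires
\[
n \;=\; \Omega\!\left(\frac{\gamma_2(W)^2}{\epsilon^2\alpha^2}\right),
\]
where $\gamma_2(W)=\min_{W=UV^\top}\|U\|_{2\to\infty}\|V\|_{2\to\infty}$ is a factorization norm of the workload. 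Third, invoke the classical identity $\gamma_2(W)=\Theta(\log B)$ for the lower-triangular prefix-sum matrix to conclude $n=\Omega(\log^2 B/(\epsilon^2\alpha^2))$; the slack factor $\log^2(1/\alpha)$ in the theorem statement is what arises when one converts an expected-$\ell_\infty$-error guarantee into the pointwise distinguishability event that the LDP lower bound is written against, and the technical side condition on $B$, $\alpha$, $\epsilon$ is exactly the regime where this bound dominates the trivial $\Omega(1/\epsilon^2\alpha^2)$ single-threshold lower bound and the non-private $\Omega(\log(2B)/\alpha^2)$ concentration bound.

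The main obstacle is the LDP factorization-norm lower bound itself, which is the technical heart of~\cite{edmondsNU20}: it is proved by a fingerprinting-style argument exploiting the product structure of the LDP transcript together with the geometry of the workload, and is strictly stronger than what one gets from a single direct application of Fano or Assouad. To see why the refinement is necessary, note that the elementary Assouad-style packing one would first try, constructing $\{P_v\}_{v\in\{-1,+1\}^d}$ on $[B]$ for $d=\Theta(\log B)$ with dyadic perturbations of size $\eta\asymp \alpha/\sqrt{d}$ (so that distinct $v,v'$ yield CDFs differing by at least $2\alpha$ in $\ell_\infty$ while single-coordinate neighbours have $\|P_v-P_{v'}\|_{TV}=O(\alpha/\sqrt{d})$), combined with the Duchi--Jordan--Wainwright LDP mutual-information contraction~\cite{duchi2013local}, only gives $n=\Omega(\log B/\epsilon^2\alpha^2)$. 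Upgrading from $\log B$ to $\log^2 B$ is precisely what motivates the factorization-norm machinery or an equivalent tracing-based argument, and reconstructing that step is where the bulk of the work lies.
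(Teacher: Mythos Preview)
The paper does not prove this statement at all: it is quoted verbatim as Theorem~23 of~\cite{edmondsNU20} and used as a black box in the proof of Theorem~\ref{thm:lb-non-adapt} (via Lemma~\ref{lem:reduct}). You correctly recognize this in your opening sentence, so there is no ``paper's own proof'' to compare against.

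As a sketch of the argument in~\cite{edmondsNU20}, your outline is broadly on target: the CDF-learning problem is exactly the prefix-sum (lower-triangular) workload, their non-interactive LDP lower bound is indeed phrased via a factorization norm of the workload, and the $\gamma_2$ norm of the $B\times B$ lower-triangular all-ones matrix is $\Theta(\log B)$. Your diagnosis that a direct Assouad/Duchi--Jordan--Wainwright packing only yields one power of $\log B$ is also correct and is precisely why the factorization machinery is needed. The one place your sketch is imprecise is the origin of the $\log^2(1/\alpha)$ slack: in~\cite{edmondsNU20} this factor comes from the gap between the exact and \emph{approximate} factorization norms used in their lower-bound technique, not from converting an expected-error guarantee into a distinguishability event as you suggest. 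But since the present paper treats the result purely as a citation, this is a comment on your reconstruction of an external proof rather than a discrepancy with anything in the paper.
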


In particular, it is sufficient to satisfy Theorem~\ref{thm:emp-lb-cdf} when $\alpha \geq B^{-\Omega(1)}$ and $\epsilon \leq \frac{\sqrt{\log B}}{\log(1/\alpha)}$, which is a mild assumptions as $\alpha, \epsilon$ are constants typically significantly less than $B$. 

To apply this theorem, we prove a reduction from \texttt{LDPemp-quantile} for constant $q$ to \texttt{LDPemp-cdf}. The c.d.f. may be solved to accuracy $\alpha$ by computing the $\alpha, 2\alpha, \ldots, 1-\alpha$ quantiles. With correct padding, we may use our \texttt{LDPemp-quantile} algorithm to answer any of these quantiles. To answer all $\frac{1}{\alpha}$ quantiles, our reduction uses non-adaptivity in a crucial way: it is only necessary to collect responses once, add in the proper amount of responses for padded elements, and then post-process them into a response. By boosting the accuracy of the quantile with $\log(\frac{1}{\alpha})$ runs, each quantile estimate will have success probability at least $1-\alpha^2$. We may then apply a union bound to obtain a bound on the expected error on all quantiles (giving the desired c.d.f. error).

\begin{lemma}\label{lem:reduct}
    Suppose there is a non-adaptive algorithm solving $\texttt{LDPemp-quantile}(\{x_i\}_{i=1}^{n}, \alpha, \epsilon, q)$ with probability $\frac{3}{4}$, for all datasets of size $n \geq n_0$. Then, there is a non-adaptive algorithm solving $\texttt{LDPemp-cdf}(\{x_i\}_{i=1}^{n}, \alpha (1 + \frac{1}{q}), 2\epsilon \log(\frac{1}{\alpha}))$  with probability $\frac{3}{4}$ for any datasets of size $n \geq n_0$ . 
\end{lemma}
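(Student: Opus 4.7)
The plan is to build a CDF estimator by running the quantile estimator at each evenly spaced target $\tau_j = j\alpha$ for $j = 1, \ldots, \lfloor 1/\alpha \rfloor$, using a padding trick to convert a $\tau_j$-quantile query into a $q$-quantile query on an augmented dataset, and crucially exploiting non-adaptivity so that a single set of user messages can be reused across all $1/\alpha$ queries.

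First I would boost the base $\epsilon$-LDP quantile algorithm from success probability $3/4$ to $1-\alpha^2$ via $k = O(\log(1/\alpha))$ independent repetitions combined with the median trick on their outputs. Each user contributes $k$ independent $\epsilon$-LDP messages, so pure-DP composition gives the boosted algorithm privacy parameter $k\epsilon$, which can be made at most $2\epsilon \log(1/\alpha)$ with an appropriate choice of constants.

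Next, for each $\tau_j$ I would pad the input $D$ with $c_1$ copies of $1$ and $c_2$ copies of $B$ chosen so that the $q$-quantile of the padded dataset $D'$ coincides with the $\tau_j$-quantile of $D$. A direct calculation gives $(c_1, c_2) = (n(q-\tau_j)/(1-q), 0)$ when $\tau_j \leq q$ and $(0, n(\tau_j-q)/q)$ when $\tau_j > q$. Since $|D'|/n \leq 1 + 1/q$ in the relevant regime, an $\alpha$-accurate $q$-quantile estimate of $D'$ pulls back through the ratio $|D'|/n$ to a $\alpha(1+1/q)$-accurate $\tau_j$-quantile estimate of $D$. The non-adaptive structure is used essentially here: the $n$ real users each commit to their $k$ messages once, and these are then reused, as pure post-processing, inside every padded-dataset invocation, while the LDP responses for the dummy users are simulated locally (since we know their data). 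No extra privacy is paid beyond the $2\epsilon \log(1/\alpha)$ already spent on boosting.

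A union bound over the at most $1/\alpha$ quantile queries shows every $\tau_j$ estimate succeeds simultaneously with probability at least $1 - (1/\alpha)\cdot\alpha^2 = 1 - \alpha \geq 3/4$ for $\alpha$ small enough; assembling them via monotone interpolation between consecutive $\tau_j$ yields the CDF estimate with $\ell_\infty$-error bounded by $\alpha(1+1/q)$, up to absorbing the spacing $\alpha$ into the constant. The main obstacle I anticipate is making the reuse of messages fully rigorous: one must show that the single transcript produced by the boosted mechanism on the real users, together with the locally simulated dummy transcripts, is distributed exactly as if one had independently run the boosted mechanism on each padded dataset. This is morally a statement that non-adaptive protocols commute with appending independent users whose values are known, but care is needed to verify that independence of the $k$ repetitions is preserved across the simultaneous $\tau_j$-queries.
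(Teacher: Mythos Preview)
Your proposal is correct and follows essentially the same route as the paper: pad with copies of $1$ or $B$ to reduce an arbitrary target quantile to a $q$-quantile (with the same padding counts you compute), exploit non-adaptivity to reuse the real users' messages across all $\lfloor 1/\alpha\rfloor$ targets while simulating the dummy responses, boost each quantile estimate to success probability $1-\alpha^2$ via $O(\log(1/\alpha))$ repetitions and the median trick (paying $2\epsilon\log(1/\alpha)$ by composition), and union-bound. The only cosmetic differences are that the paper boosts after describing the single-quantile simulation rather than before, and it phrases the final guarantee as an expected-error bound (matching the \texttt{LDPemp-cdf} definition) rather than a high-probability one; your worry about the rigor of message reuse is not a real obstacle, since for each fixed $\tau_j$ the combined transcript of real messages plus freshly simulated dummy messages is distributed exactly as a run on the padded dataset, and the union bound requires no independence across $\tau_j$.
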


\begin{proof}
    WLOG, we may assume that $q \leq \frac{1}{2}$.
    Suppose we are given a dataset $\{x_1\}_{i=1}^n$. We will first show how to estimate any quantile $q'$ with success probability at least $\frac{3}{4}$ and error $\frac{\alpha}{q}$. We may do this by adding padded elements to the dataset. Specifically, if $q < q'$, then adding $\frac{q' - q}{q}n$ padded $B$s to the dataset will ensure that the $q$th quantile will match the $q'$th quantile of the original dataset. If  $q' < q$, then adding $\frac{q - q'}{1-q}n$ padded $1$s to the dataset will ensure the $q$th quantile will match the $q'$th quantile of the original dataset.
    Observe that quantile error of $\alpha$ in the padded dataset corresponds to quantile error $\frac{\alpha}{q}$ in the original dataset.

    Next, observe that computing the quantiles $\{\alpha, 2\alpha, \ldots, 1-\alpha\}$ with error $\alpha$ gives a $2\alpha$ c.d.f. estimation. We will simulate running the above procedure on all $\frac{1}{\alpha}$ quantiles by collecting the responses $r_1 = \calM_1(x_1), \ldots, r_n = \calM_n(x_n)$, and then $r_i^0 = \calM_i(1), r_i^1 = \calM_i(B)$ for $n+1 \leq i \leq n + \frac{n}{q}$. By picking the correct padded elements, we may post-process the response to obtain an estimate of any $q'$th quantile with error at most $\frac{\alpha}{q}$. This crucially uses the fact that the $\calM$ are non-interactive, as any state shared between the $\calM_i$ could not be simulated for all runs at once.

    Each of the above estimated quantiles has error $\frac{\alpha}{q}$ with probability at least $\frac{3}{4}$. We may boost the success probability by running the above procedure independently $2 \log(\frac{1}{\alpha})$ times and taking the \emph{median} quantile estimate; by Chernoff's bound, we are guaranteed that each quantile estimate is at most $\frac{\alpha}{q}$ from the true $q'$ quantile with probability $1-\alpha^2$. By the union bound all quantile estimates will have error $\frac{\alpha}{q}$ with success probability at least $1-\alpha$, and the expected c.d.f. error is at most $\alpha + (1-\alpha)\frac{\alpha}{q} \leq \alpha (1 + \frac{1}{q})$. Finally, the privacy parameter is $2\epsilon \log (\frac{1}{\alpha})$ by simple composition.
\end{proof}

Theorem~\ref{thm:emp-lb-cdf} and Lemma~\ref{lem:reduct} together give us a lower bound on the median error under LDP:

\begin{theorem}\label{thm:lb-non-adapt}
    For any constant $q \in (0.1, 0.9)$, any $\alpha$ sufficiently small, $\epsilon < \frac{1}{\log(1/\alpha)}$, and $B$ such that $\alpha \geq B^{-\Omega(1)}$ and $\epsilon \leq \frac{\sqrt{\log B}}{\log(1/\alpha)^2}$, any $\priv$-LDP algorithm which solves $\texttt{LDPquantile-emp}(\{x_i\}_{i=1}^n, \acc, \priv, q)$ with probability at least $\frac{3}{4}$ requires 
    \[
        n \geq \Omega\left(\frac{\log^2(\ab)}{\priv^2 \acc^2 \log(1/\acc)^4}\right).
    \]
\end{theorem}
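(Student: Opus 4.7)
The plan is to combine the reduction of Lemma~\ref{lem:reduct} with the CDF lower bound of Theorem~\ref{thm:emp-lb-cdf}, which were both just established. Suppose, for contradiction, that there exists a non-adaptive $\epsilon$-LDP algorithm $\mathcal A$ solving $\texttt{LDPquantile-emp}(\{x_i\}_{i=1}^n, \alpha, \epsilon, q)$ with success probability at least $3/4$ using $n$ users. Since $q$ is a constant bounded away from $0$ and $1$, Lemma~\ref{lem:reduct} converts $\mathcal A$ into a non-adaptive $\epsilon'$-LDP algorithm $\mathcal A'$ that solves $\texttt{LDPemp-cdf}(\{x_i\}_{i=1}^n, \alpha', \epsilon')$ in expectation, where $\alpha' = \alpha(1+1/q) = \Theta(\alpha)$ and $\epsilon' = 2\epsilon\log(1/\alpha)$, using the same $n$ users.

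Next, I would invoke Theorem~\ref{thm:emp-lb-cdf} on $\mathcal A'$ to obtain
\[
n \;\geq\; \Omega\!\left(\frac{\log^2 B}{\epsilon'^{\,2}\,\alpha'^{\,2}\,\log^2(1/\alpha')}\right)
\;=\;\Omega\!\left(\frac{\log^2 B}{\epsilon^2\,\alpha^2\,\log^4(1/\alpha)}\right),
\]
where I used $\alpha' = \Theta(\alpha)$ (so $\log(1/\alpha') = \Theta(\log(1/\alpha))$) together with $\epsilon'^{\,2} = 4\epsilon^2\log^2(1/\alpha)$. This is exactly the lower bound claimed in the theorem.

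The only real work is to verify that the hypotheses of Theorem~\ref{thm:emp-lb-cdf} actually apply to the transformed parameters $(\alpha',\epsilon')$. The first two hypotheses ($\alpha'$ sufficiently small and $\epsilon' < 1$) are immediate from the assumptions $\alpha$ sufficiently small and $\epsilon < 1/\log(1/\alpha)$. For the analytic precondition
\[
\frac{\log^2 B}{\epsilon'^{\,2}\,\alpha'^{\,2}\,\log^2(1/\alpha')}
\;\geq\; \frac{C\log(2B)}{\alpha'^{\,2}} + \frac{C}{\epsilon'^{\,2}\,\alpha'^{\,2}},
\]
after clearing the common factor $\alpha'^{\,2}$ this reduces, up to constants, to requiring $\log B \gtrsim \epsilon^2\log^4(1/\alpha)$ and $\log^2 B \gtrsim \epsilon^2\log^2(1/\alpha)$. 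Both follow from the standing assumption $\epsilon \leq \sqrt{\log B}/\log^2(1/\alpha)$ (the second is weaker than the first, and uses also $\alpha \geq B^{-\Omega(1)}$ to ensure $\log(1/\alpha) \lesssim \log B$). This parameter bookkeeping is the only place where the precise hypotheses on $\epsilon$, $\alpha$, and $B$ are actually used.

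The main conceptual obstacle has already been handled in Lemma~\ref{lem:reduct}: the crucial point was that non-adaptivity of $\mathcal A$ allows one to reuse the same $n$ LDP reports to simulate \emph{every} padded run in parallel via post-processing, which is what keeps the user count at $n$ rather than multiplying by $1/\alpha$ quantiles (or $\log(1/\alpha)$ repetitions); the cost of this reuse is absorbed into the blowups $\alpha \to \alpha'$ and $\epsilon \to \epsilon'$, producing exactly the extra $\log^4(1/\alpha)$ factor in the denominator of the final bound. Given this, the proof of Theorem~\ref{thm:lb-non-adapt} is essentially a one-line chaining of the two preceding results plus parameter verification.
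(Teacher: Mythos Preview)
Your proposal is correct and follows exactly the approach the paper intends: the paper presents Theorem~\ref{thm:lb-non-adapt} immediately after Lemma~\ref{lem:reduct} and Theorem~\ref{thm:emp-lb-cdf} as their direct combination, without writing out the chaining or the parameter verification, and your argument makes this explicit. One tiny bookkeeping slip: after cancelling $\epsilon'^{\,2}$, the second precondition reduces to $\log^2 B \gtrsim \log^2(1/\alpha')$ (no $\epsilon^2$ factor), but this is precisely what $\alpha \geq B^{-\Omega(1)}$ gives, so your conclusion stands.
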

Theorem~\ref{thm:intro-lower-non-interactive} follows directly from this result.
\section{Naive Shuffle-DP binary Search for the Median}
\label{sec:naive-shuffle}
This section is dedicated to proving~\cref{thm:main-shuffle}.

The naive binary search with errors algorithm tests each coin up to $\acc$-accuracy and a $\failp/\log\ab$ failure probability, such that a simple union bound over all $\log\ab$ steps of binary searching will yield an $(\acc,\failp)$-accurate estimate. This algorithm is suboptimal up to logarithmic factors, although there are indications that its strong constant factors can make up the difference in some parameter regimes~\cite{karp2007noisy,gretta2023sharp}. The simple fact that this algorithm runs in deterministic number of rounds, with a deterministic number of samples per round, allows for a straightforward application of amplification by shuffling~\cite{feldman21shuffle}, something we could not achieve with the fully adaptive Bayesian updates algorithm. 

We consider both statistical error, where samples are assumed to be drawn from some unknown distribution with mean $p$, and we are interested in an estimate $\hat{p}$ which is close to that true mean, and the empirical setting where we make no assumption on the distribution of the samples, and are interested in how close our estimate $\hat{p}$ is to the ``best-case'' sample mean $\frac{1}{n}\sum_{i=1}^nx_i$.

\begin{lemma}[Sample complexity of learning one coin to its statistical mean.] 
\label{lemma: one-coin-statistical-mean}
    Given samples $\{x_i\}_{i=1}^n$ from a Bernoulli random variable $X$ with mean $p$ received through a binary randomized response channel $\pmech$ such that $y_i\sim \pmech(x_i)$, we can estimate $\hat{p}=\frac{1}{n}\frac{e^\priv + 1}{e^\priv - 1}\sum_{i=1}^n y_i - \frac{1}{e^\priv - 1}$. In order to learn an $(\acc,\failp)$-estimate of $p$, $\Pr[|\hat{p}-p|>\acc]<\failp$ it suffices to use $n$ samples where,
    $$
    n\leq\left(\frac{2p(1-p)}{\acc^2} + \frac{e^\priv}{\acc^2(e^\priv - 1)^2} + \frac{2(e^\priv + 1)}{4\acc(e^\priv - 1)}\right)\log(1/\failp).
    $$
    In other words, the sample complexity of learning one coin to its statistical mean with constant failure probability is $O\left(\frac{1}{\acc^2\priv^2} +\frac{p(1-p)}{\acc^2}\right)$, when $\priv<1$, or $O\left(\frac{1}{\acc^2e^\priv} +\frac{p(1-p)}{\acc^2}\right)$, when $\priv\geq 1$.
\end{lemma}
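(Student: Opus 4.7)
The plan is to view $\hat p$ as a scaled average of the i.i.d.\ Bernoulli responses $y_i$, compute its variance, and apply Bernstein's inequality (Theorem~\ref{fact: bernstein}). First, I set $q := 1/(e^\priv+1)$, so that each $y_i$ equals $x_i$ with probability $1-q$ and $1-x_i$ otherwise; since $x_i\sim\mathrm{Bern}(p)$, the $y_i$ are i.i.d.\ Bernoulli with mean $u := p(1-2q)+q$. A direct check then gives $\E[\hat p]=p$ and
\[
\hat p - p \;=\; \frac{e^\priv+1}{e^\priv-1}\bigl(\bar y - u\bigr),
\]
so $\{|\hat p - p|>\acc\} \;=\; \{|\bar y - u|>\acc'\}$ with $\acc' := \acc(e^\priv-1)/(e^\priv+1)$. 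The rest of the argument is a concentration bound for the i.i.d.\ average $\bar y$.

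Next I would compute the per-sample variance. Expanding $u(1-u)$ gives the clean decomposition
\[
\sigma^2 := \operatorname{Var}(y_i) \;=\; p(1-p)(1-2q)^2 + q(1-q),
\]
which, using $1-2q = (e^\priv-1)/(e^\priv+1)$ and $q(1-q) = e^\priv/(e^\priv+1)^2$, becomes $p(1-p)(e^\priv-1)^2/(e^\priv+1)^2 + e^\priv/(e^\priv+1)^2$. Applying Theorem~\ref{fact: bernstein} to the $[0,1]$-valued random variables $y_i$ then yields
\[
\Pr[|\bar y - u|>\acc'] \;\le\; \exp\!\left(\frac{-n(\acc')^2}{2\sigma^2 + \tfrac{2\acc'}{3}}\right),
\]
and requiring this to be at most $\failp$ gives $n \ge \bigl(2\sigma^2/(\acc')^2 + 2/(3\acc')\bigr)\log(1/\failp)$. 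Substituting the expressions for $\sigma^2$ and $\acc'$ produces, up to absolute constants, exactly the three terms of the stated bound: the Bernoulli variance term $2p(1-p)/\acc^2$, the noise-induced term $\Theta\bigl(e^\priv/(\acc^2(e^\priv-1)^2)\bigr)$, and the Bernstein range correction $\Theta\bigl((e^\priv+1)/(\acc(e^\priv-1))\bigr)$.

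For the asymptotic rates, I would use standard Taylor bounds on $e^\priv-1$. When $\priv<1$ we have $e^\priv-1 = \Theta(\priv)$, hence $e^\priv/(e^\priv-1)^2 = \Theta(1/\priv^2)$, giving the $O\bigl(1/(\acc^2\priv^2) + p(1-p)/\acc^2\bigr)$ rate; when $\priv\ge 1$ we have $(e^\priv-1)^2 = \Theta(e^{2\priv})$, so the same ratio is $\Theta(e^{-\priv})$, giving $O\bigl(1/(\acc^2 e^\priv) + p(1-p)/\acc^2\bigr)$. I do not anticipate any serious obstacle; the only non-trivial bookkeeping step is the algebraic identity $u(1-u)=p(1-p)(1-2q)^2+q(1-q)$, and it is precisely this identity that makes the noise contribution and the inherent Bernoulli variance separate cleanly in the final bound.
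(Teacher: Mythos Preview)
Your proposal is correct and follows essentially the same approach as the paper: rewrite $|\hat p-p|>\acc$ as $|\bar y-\E[y]|>\acc\frac{e^\priv-1}{e^\priv+1}$, compute $\operatorname{Var}(y_i)=p(1-p)\bigl(\tfrac{e^\priv-1}{e^\priv+1}\bigr)^2+\tfrac{e^\priv}{(e^\priv+1)^2}$, apply Bernstein's inequality, and back-substitute. Your parametrization via $q=1/(e^\priv+1)$ and the identity $u(1-u)=p(1-p)(1-2q)^2+q(1-q)$ is a slightly cleaner way to organize the same computation.
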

\begin{proof}
    Given a Bernoulli random variable $x$ with mean $p$, and a binary randomized response channel $\pmech$ (see~\autoref{def: binary rr}) the distribution induced by applying $\pmech$ to $x$ is:
    \begin{equation*}
    \label{eq:rr-bern-induced}
    y=\pmech(x)\sim\operatorname{Bern}\left(\frac{e^\priv}{e^\priv + 1}p + (1-p)\frac{1}{e^\priv + 1}\right)=\operatorname{Bern}\left(\frac{e^\priv-1}{e^\priv + 1}p + \frac{1}{e^\priv + 1}\right).
    \end{equation*}
    The variance of this distribution is 
    \begin{align*}
\sigma^2=\operatorname{Var}(y)&=\left(\frac{1}{e^\priv + 1}+\frac{e^\priv-1}{e^\priv + 1}p \right)\left( \frac{e^\priv}{e^\priv + 1}-\frac{e^\priv-1}{e^\priv + 1}p\right)\notag\\
&=\left(\frac{e^\priv - 1}{e^\priv + 1}\right)^2p(1-p) + \frac{e^\priv}{(e^\priv + 1)^2}.  \label{eq:rr-bern-var}
    \end{align*}
    We then proceed by simple rearranging, substitution, and application of Bernstein's inequality~\cref{fact: bernstein}.
    \begin{align*}
        \Pr\left[|\hat{p}-p| >\acc\right]&=\Pr\left[\bigg|\frac{1}{n}\frac{e^\priv + 1}{e^\priv - 1}\sum\limits_{i=1}^n y_j - \frac{1}{e^\priv - 1} - \left(\frac{e^\priv + 1}{e^\priv - 1}\bEE{y} - \frac{1}{e^\priv - 1}\right)\bigg| >\acc\right]\\
&=\Pr\left[\bigg|\frac{e^\priv + 1}{e^\priv -1}\left(\frac{1}{n}\sum\limits_{i=1}^n y_i -\bEE{y}\right)\bigg|>\acc\right]\\
&=\Pr\left[\bigg|\frac{1}{n}\sum\limits_{j=1}^ny -\bEE{y}\bigg|>t\right]\tag*{$\left(t=\acc\frac{e^\priv -1}{e^\priv +1}\right)$}\\
\failp&\leq\exp\left(\frac{-nt^2}{2\sigma^2 + \frac{2t}{3}}\right)\tag{Bernstein's Inequality}\\
n&\leq\left(\frac{2\sigma^2}{t^2} +\frac{2}{3t} \right)\log(1/\failp)\\
    &=\left( \frac{2p(1-p)}{\acc^2} +\frac{2e^\priv}{\acc^2(e^\priv - 1)^2}+\frac{2(e^\priv + 1)}{3\acc(e^\priv - 1)}\right)\log(1/\failp).\tag{Substituting $t$ and $\sigma^2$}
    \end{align*}
\end{proof}

\begin{lemma}[Sample complexity of learning one coin to its sample mean.]
\label{lem:empirical-coin-learn-rr}
    Given samples $\{x_i\}_{i=1}^n$ where each $x_i\in\{0,1\}$, and private outputs $y_i\sim \pmech(x_i)$, the true sample mean is $P=\frac{1}{n}\sum_{i=1}^n x_i$. Denote the sample mean of the collected private outputs $Y=\frac{1}{n}\sum_{i=1}^n y_i$. Our estimator of the sample mean will be similar to the statistical case, where $\widehat{P}=\frac{e^\priv + 1}{e^\priv - 1}Y - \frac{1}{e^\priv - 1}$. In order to learn an $(\acc,\failp)$-estimate of $P$ it is sufficient to use $n$ samples such that 
    \begin{equation*}
        n\leq\left( \frac{2e^\priv}{\acc^2 (e^\priv - 1)^2} + \frac{2(e^\priv + 1)}{3\acc (e^\priv - 1)} \right)\log(1/\failp).
    \end{equation*}
    Therefore, the sample complexity of learning the sample mean with constant failure probability is $O\left(\frac{1}{\acc^2\priv^2}\right)$, when $\priv<1$, or $O\left(\frac{1}{\acc^2e^\priv} \right)$, when $\priv\geq 1$. It is pleasing to note that this recovers the sample complexity of learning in the statistical case, up to the additive sampling error.
\end{lemma}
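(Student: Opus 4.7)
The plan is to mirror the proof of~\cref{lemma: one-coin-statistical-mean}, but working \emph{conditionally} on the fixed data $x_1,\dots,x_n$ so that the only randomness driving the concentration comes from the randomized response channel $\pmech$. The key observation is that once we fix the $x_i$'s, the $y_i$'s are independent Bernoulli random variables, each with mean $\frac{e^\priv-1}{e^\priv+1}x_i + \frac{1}{e^\priv+1}$, hence
\[
\E[Y\mid x_1,\dots,x_n] = \frac{e^\priv-1}{e^\priv+1}P + \frac{1}{e^\priv+1},
\]
which rearranges to $\E[\widehat P \mid x_1,\dots, x_n] = P$. So $\widehat P$ is an unbiased estimator of the sample mean.

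Next I would compute the variance of each $y_i$ conditional on $x_i$. If $x_i=1$ then $y_i \sim \operatorname{Bern}(\tfrac{e^\priv}{e^\priv+1})$, and if $x_i=0$ then $y_i\sim \operatorname{Bern}(\tfrac{1}{e^\priv+1})$; in both cases $p(1-p) = \tfrac{e^\priv}{(e^\priv+1)^2}$, so the average variance is $\sigma^2 = \frac{e^\priv}{(e^\priv+1)^2}$. This is strictly smaller than the conditional variance in~\cref{lemma: one-coin-statistical-mean} by exactly the term $\left(\tfrac{e^\priv-1}{e^\priv+1}\right)^2 p(1-p)$ arising from the extra sampling randomness of $X$ from a Bernoulli with mean $p$. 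Its absence is the only real difference from the statistical case.

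Setting $t = \acc\cdot\tfrac{e^\priv-1}{e^\priv+1}$ and applying Bernstein's inequality (\cref{fact: bernstein}) to the bounded independent variables $y_1,\dots, y_n$ conditional on the data, we obtain
\[
\Pr[|\widehat P - P| > \acc \mid x_1,\dots,x_n] \;=\; \Pr\!\left[\left|\tfrac{1}{n}\sum_{i=1}^n y_i - \E[Y\mid x]\right| > t \;\Big|\; x_1,\dots,x_n\right] \;\le\; \exp\!\left(\tfrac{-nt^2}{2\sigma^2 + 2t/3}\right).
\]
Requiring the right-hand side to be at most $\failp$ and solving for $n$ gives
\[
n \le \left(\tfrac{2\sigma^2}{t^2} + \tfrac{2}{3t}\right)\log(1/\failp) = \left(\tfrac{2e^\priv}{\acc^2(e^\priv-1)^2} + \tfrac{2(e^\priv+1)}{3\acc(e^\priv-1)}\right)\log(1/\failp),
\]
as claimed. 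Since this bound holds uniformly in the data, we may then marginalize to remove the conditioning. The asymptotic simplifications for $\priv<1$ and $\priv\ge 1$ follow from the elementary estimates $\tfrac{e^\priv-1}{e^\priv+1} = \Theta(\priv)$ in the former regime and $\tfrac{e^\priv}{(e^\priv-1)^2} = \Theta(e^{-\priv})$ in the latter, matching the stated $O(1/(\acc^2\priv^2))$ and $O(1/(\acc^2 e^\priv))$ bounds respectively.

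There is essentially no hard step here; the only subtlety is recognizing that conditioning on the sample makes the $p(1-p)$ term from the statistical-case variance drop out, which is precisely why the empirical bound is stated without the $p(1-p)/\acc^2$ summand present in~\cref{lemma: one-coin-statistical-mean}.
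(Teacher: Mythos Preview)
Your proof is correct and follows essentially the same route as the paper: compute the conditional variance $\sigma^2 = \tfrac{e^\priv}{(e^\priv+1)^2}$ of each $y_i$, set $t = \acc\tfrac{e^\priv-1}{e^\priv+1}$, apply Bernstein, and solve for $n$. The only cosmetic difference is that you frame things via conditioning on $x_1,\dots,x_n$ and then marginalizing, whereas in the empirical setting the $x_i$ are already fixed data (not random), so the paper simply works with them directly without any conditioning language.
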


\begin{proof}
    The proof will proceed similarly to the statistical case. The key difference will be the variance of $y$ in this case which is
    \[
    \sigma^2=\sigma^2(\pmech(0))=\sigma^2(\pmech(1))=\frac{e^\priv}{(e^\priv + 1)^2}.
    \]
    The derivation then proceeds as in the statistical case.
    \begin{align*}
        \Pr[|\widehat{P}-P|>\acc] &= \Pr\left[\bigg| \frac{e^\priv + 1}{e^\priv - 1}Y - \frac{1}{e^\priv - 1} - P\bigg| >\acc\right]\\
            &=\Pr\left[\bigg| \frac{e^\priv + 1}{e^\priv - 1}Y - \frac{1}{e^\priv - 1} - \left( \frac{e^\priv + 1}{e^\priv - 1}\bEE{Y} - \frac{1}{e^\priv - 1} \right)\bigg|>\acc \right]\\
            &= \Pr\left[\frac{e^\priv + 1}{e^\priv - 1}\bigg| \left(Y-\bEE{Y} \right)\bigg|>\acc \right]\\
            &=\Pr\left[\bigg| Y-\bEE{Y}\bigg|> t \right]\tag*{$\left(t=\acc\frac{e^\priv - 1}{e^\priv + 1}\right)$}\\
            &\leq \exp\left(\frac{-nt^2}{2\sigma^2 + \frac{2t}{3}}\right)\tag{Bernstein's Inequality}\\
        n   &\leq \left( \frac{2\sigma^2}{t^2} + \frac{2}{3t} \right)\log(1/\failp)\\
            &=\left(\frac{2e^\priv}{\acc^2 (e^\priv - 1)^2} + \frac{2(e^\priv + 1)}{3\acc(e^\priv - 1)} \right)\log(1/\failp).\tag{Substituting $t$ and $\sigma^2$}
    \end{align*}
\end{proof}

With this we can now formally state the sample complexity of a naive binary search for the median under local differential privacy. We will focus on the empirical case for this result. 
\begin{theorem}[Naive Binary Search for the Median under Local Differential Privacy]
\label{thm:ldp-nbs-naive}
    The naive algorithm as described by~\citet{karp2007noisy}, under the constraints of $\priv$-local differential privacy, has sample complexity
    \[
    n\leq \left( \frac{2e^\priv}{\acc^2(e^\priv - 1)^2} + \frac{2(e^\priv + 1)}{3\alpha(e^\priv - 1)} \right)\log(\ab)\log\left(\frac{\log\ab}{\failp}\right).
    \]
    We can therefore say that for $\priv<1$, the naive approach has sample complexity $O\left(\frac{\log\ab}{\acc^2\priv^2}\log\left(\frac{\log\ab}{\failp}\right)\right)$, and for $\priv\geq 1$ it has sample complexity $O\left(\frac{\log\ab}{\acc^2e^\priv}\log\left(\frac{\log\ab}{\failp}\right)\right)$.
\end{theorem}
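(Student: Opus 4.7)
The plan is to follow the standard analysis of noisy binary search, instantiating the per-round coin-estimation cost with the LDP bound from~\cref{lem:empirical-coin-learn-rr}. The naive algorithm of~\citet{karp2007noisy} maintains a candidate interval $[l,r]\subseteq[\ab]$, initially $[1,\ab]$. At each of $T=\lceil\log_2\ab\rceil$ rounds, it picks the midpoint $m=\lfloor(l+r)/2\rfloor$, draws a fresh batch of users and asks each of them the threshold question $[x_i\leq m]$, to which they reply via binary randomized response. The server forms the estimator $\widehat{P}_m=\frac{e^\priv+1}{e^\priv-1}\cdot\frac{1}{n_{\mathrm{round}}}\sum_i y_i-\frac{1}{e^\priv-1}$ of the true empirical CDF value $P_m=F_X(m)$, and shrinks the interval accordingly: move right if $\widehat{P}_m<1/2$, move left if $\widehat{P}_m>1/2$. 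After $T$ rounds the interval has unit length, and the returned endpoint is declared the approximate median.

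Next I would quantify ``correctness per round.'' I apply~\cref{lem:empirical-coin-learn-rr} with accuracy $\acc$ and failure probability $\failp'=\failp/T$, so that
\[
n_{\mathrm{round}}\leq\left(\frac{2e^\priv}{\acc^2(e^\priv-1)^2}+\frac{2(e^\priv+1)}{3\acc(e^\priv-1)}\right)\log\!\left(\frac{\log\ab}{\failp}\right)
\]
samples suffice to guarantee $|\widehat{P}_m-P_m|\leq\acc$ with probability $\geq 1-\failp'$. Each round uses a disjoint batch of users (so $\eps$-LDP holds per user without composition), and each user reports a single randomized-response bit, which is $\eps$-LDP by~\cref{def: binary rr}.

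Then a union bound over the $T=\lceil\log_2\ab\rceil$ rounds gives overall failure probability at most $\failp$. Conditioned on the good event, a routine invariant-preservation argument shows correctness of the search: whenever $P_m\geq 1/2+\acc$ the interval correctly moves left, and whenever $P_m\leq 1/2-\acc$ it correctly moves right; if $P_m\in(1/2-\acc,1/2+\acc)$ the point $m$ is already an $\acc$-approximate median and any endpoint retained afterwards remains $\acc$-approximate by monotonicity of $F_X$. Multiplying $n_{\mathrm{round}}$ by $T$ yields the stated total sample complexity, and the two asymptotic regimes $\priv<1$ and $\priv\geq 1$ follow from the elementary estimates $e^\priv-1=\Theta(\priv)$ and $e^\priv-1=\Theta(e^\priv)$, respectively.

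There is no serious obstacle here: the work is essentially bookkeeping, namely splitting the budget $\failp$ evenly across the $T$ rounds and invoking~\cref{lem:empirical-coin-learn-rr} as a black box. The only minor subtlety is making the search invariant precise so that constant-factor slack in the per-round accuracy $\acc$ translates directly into the final $\acc$-approximate-median guarantee; this is handled by the monotonicity observation above and does not require any changes to the stated sample bound.
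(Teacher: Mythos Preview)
Your proposal is correct and takes essentially the same approach as the paper: split the $n$ users into $\log\ab$ disjoint batches, apply \cref{lem:empirical-coin-learn-rr} to each batch with failure probability $\failp/\log\ab$, and union bound. The paper's own proof is even terser than yours---it does not spell out the search invariant at all---so your added discussion of correctness is, if anything, more careful than what appears in the paper.
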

\begin{proof}
    Given $n$ total users, let $n'=n/\log(\ab)$ and let $\failp'=\failp/\log(\ab)$, apply~\autoref{lem:empirical-coin-learn-rr} with $n',\failp'$ to get sample complexity.
    \[
    n\leq \left( \frac{2e^\priv}{\acc^2(e^\priv - 1)^2} + \frac{2(e^\priv + 1)}{3\alpha(e^\priv - 1)} \right)\log(\ab)\log\left(\frac{\log\ab}{\failp}\right).
    \]
    By a union bound over all $\log\ab$ rounds of the binary search, the final estimate will be an $(\acc,\failp)$-approximate median.
\end{proof}

As stated in the introduction, the primary motivation for this approach is that by dividing the algorithm into a few deterministic stages, with many samples tested at each stage, we can hope to apply amplification by shuffling~\cite{feldman21shuffle}. We state the amplification by shuffling result here, and a subsequent lemma that will be useful to our analysis.
\begin{theorem}[{\citet*[Theorem 3.1]{feldman21shuffle}}]
    \label{theorem: amplification by shuffling}
    For any domain $\mathcal{X}$, let $\pmech_t:\pmech_1\times\ldots\times\pmech_{t-1}\times\mathcal{X}\rightarrow\mathcal{Y}$  for $t\in [n]$ be a sequence of randomizers such that $\pmech_t(y_{1:t-1},\cdot)$ is $\priv_L$-local DP; and let $S$ be the algorithm that given a tuple of $n$ messages, outputs a uniformly random permutation of said messages. Then for any $\privdelta\in(0,1]$ such that $\priv_L\leq\log\frac{n}{16\log(2/\privdelta)}$, $S\circ \mathcal{Y}^n$ is is $(\priv,\privdelta)$-DP, where
    \[
    \priv\leq\log\left(1 + 8\frac{e^{\priv_L}-1}{e^{\priv_L}+1}\left(\sqrt{\frac{e^{\varepsilon_L}\log(4/\delta)}{n}}+
\frac{e^{\varepsilon_L}}{n}\right)\right)
    \]
\end{theorem}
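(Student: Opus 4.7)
The plan is to adapt the ``hiding among the clones'' technique of Feldman, McMillan, and Talwar. I would first fix two neighboring input profiles $(x_1, x_2, \ldots, x_n)$ and $(x_1', x_2, \ldots, x_n)$ differing only in the first coordinate, and then reduce the claim (using that the shuffle is symmetric post-processing) to bounding the hockey-stick divergence between the two induced distributions on \emph{multisets} of user outputs. The main conceptual step is to build a single coupling that, in both worlds simultaneously, decomposes each user $i \geq 2$'s output as a mixture of a ``clone'' component and an ``idiosyncratic'' component.

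Concretely, for a $\priv_L$-LDP randomizer $\pmech$, I would use the classical fact that for any pair $x_1, x_1'$ one can write $\pmech(z) = \gamma \cdot U_z + (1-\gamma) R_z$, with $\gamma := 2/(e^{\priv_L}+1)$, such that the clone components $U_{x_1}$ and $U_{x_1'}$ are both equal to the common uniform mixture $\tfrac{1}{2}(\pmech(x_1) + \pmech(x_1'))$. For $i \geq 2$, the residual $R_{x_i}$ is identical across the two worlds because $x_i$ is unchanged. In the sequentially adaptive setting this decomposition is carried out conditionally on the realized history $y_{1:i-1}$; the key observation is that $\gamma$ depends only on the worst-case LDP constant $\priv_L$, not on the history, so the decomposition remains valid pointwise.

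Next I would analyze the number $K$ of users in $\{2,\ldots,n\}$ whose output comes from the clone component. Conditional on the clone/residual coin flips, $K$ is Binomial$(n-1,\gamma)$, so a Chernoff bound gives $K \geq (n-1)\gamma/2$ except on a bad event of probability at most $\privdelta/2$ that will be absorbed into the final $\privdelta$. On the good event, after shuffling, user $1$'s position is uniform among the $K+1$ ``clone-or-distinguished'' slots. The distinguishing problem between the two worlds then collapses to observing $K+1$ exchangeable samples, each conditionally an equal mixture of $\pmech(x_1)$ and $\pmech(x_1')$, with a single extra sample coming from either $\pmech(x_1)$ or $\pmech(x_1')$ depending on the world. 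A sub-Gaussian / Hoeffding bound on the resulting privacy-loss random variable yields $\priv = O\!\left(\tfrac{e^{\priv_L}-1}{e^{\priv_L}+1} \sqrt{\tfrac{\log(1/\privdelta)}{K}}\right)$, which after substituting $K = \Theta(n\gamma)$ matches the dominant $\sqrt{\cdot /n}$ term; the additive $e^{\priv_L}/n$ correction comes from atypical output values on which the clone-mixture ratio is worst-case and cannot be concentrated away.

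The main obstacle is pinning down the sharp constants in the stated expression---in particular the factor of $8$ and the exact dependence $\sqrt{e^{\priv_L}\log(4/\privdelta)/n} + e^{\priv_L}/n$. This requires a careful two-sided analysis of the privacy-loss random variable, summing the probabilities of two distinct ``bad'' events (the clone count being too small, and the observed sample landing in a worst-case residual atom) so that their combined mass is $\privdelta$, and using the hypothesis $\priv_L \leq \log\tfrac{n}{16\log(2/\privdelta)}$ precisely to guarantee that the sub-Gaussian term dominates the worst-case residual. Sequential adaptivity turns every Chernoff step into an Azuma-type bound on the appropriate martingale over the output history; this is bookkeeping rather than a real obstacle once the pointwise decomposition above has been established.
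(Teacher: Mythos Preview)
The paper does not prove this theorem at all: it is quoted verbatim from \cite{feldman21shuffle} and used as a black box, with the only in-paper work being the short corollary (\cref{lemma: amplification by shuffling}) that inverts the bound to choose $\priv_L$ in terms of $\priv,\privdelta,n$. So there is no ``paper's own proof'' to compare against.

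That said, your sketch is a faithful outline of the hiding-among-the-clones argument from the cited source: the mixture decomposition with weight $\gamma=2/(e^{\priv_L}+1)$, the Binomial/Chernoff (or Azuma, in the adaptive case) lower bound on the clone count, and the reduction to distinguishing one extra biased sample among $K+1$ exchangeable clones are exactly the ingredients Feldman, McMillan, and Talwar use. Your caveat about the constants is apt---getting the precise $8$ and the additive $e^{\priv_L}/n$ term requires the detailed divergence calculation in their paper rather than a generic sub-Gaussian bound---but for the purposes of this paper none of that is needed, since the theorem is simply invoked.
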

This implies the following useful lemma,
\begin{lemma}[Amplification by shuffling]
\label{lemma: amplification by shuffling}
    Fix any $\privdelta\in(0,1]$, $\priv\in(0,1]$, and $n$ such that $\priv>16\sqrt{\log(4/\privdelta)/n}$. Then, for
    \[
    \priv_L\coloneqq\log\frac{\priv^2 n}{80\log(4/\privdelta)}
    \]
    Shuffling the messages of $n$ users using the same $\priv_L$-LDP randomizer satisfies $(\priv,\privdelta)$-shuffle differential privacy.
\end{lemma}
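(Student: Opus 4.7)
The plan is to derive this lemma as a direct application of~\cref{theorem: amplification by shuffling} with the stated local privacy parameter $\priv_L = \log\frac{\priv^2 n}{80\log(4/\privdelta)}$, and then verify by a short calculation that the resulting shuffle-DP parameter is at most $\priv$. Write $a \coloneqq \log(4/\privdelta)$ throughout. Before invoking the theorem, I would check its hypotheses: (i) $\priv_L \geq 0$, equivalently $\priv^2 n \geq 80 a$, which follows from the assumption $\priv > 16\sqrt{a/n}$ (indeed $\priv^2 n > 256 a \geq 80 a$); and (ii) the admissibility condition $\priv_L \leq \log\frac{n}{16\log(2/\privdelta)}$, which holds because $\priv \leq 1$ and $\log(2/\privdelta) \leq \log(4/\privdelta)$ give $\frac{\priv^2 n}{80 a} \leq \frac{n}{80 \log(2/\privdelta)} \leq \frac{n}{16 \log(2/\privdelta)}$.

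The core of the argument is then plugging $\priv_L$ into the conclusion of~\cref{theorem: amplification by shuffling}. The key identities are $e^{\priv_L} = \frac{\priv^2 n}{80 a}$, so
\begin{equation*}
\sqrt{\frac{e^{\priv_L} a}{n}} = \frac{\priv}{\sqrt{80}}, \qquad \frac{e^{\priv_L}}{n} = \frac{\priv^2}{80 a}.
\end{equation*}
Since $\priv \leq 1$ and $a \geq \log 4 > 1$, the second term is dominated by $\priv/80$. Combining,
\begin{equation*}
8\left(\sqrt{\tfrac{e^{\priv_L} a}{n}} + \tfrac{e^{\priv_L}}{n}\right) \leq \tfrac{8}{\sqrt{80}}\,\priv + \tfrac{8}{80}\,\priv = \left(\sqrt{0.8} + 0.1\right)\priv < \priv.
\end{equation*}
Using the trivial bound $\frac{e^{\priv_L}-1}{e^{\priv_L}+1} \leq 1$ and $\log(1+x) \leq x$, the theorem yields an output privacy parameter at most $\priv$, establishing $(\priv, \privdelta)$-shuffle DP.

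The main obstacle, such as it is, will be making sure the numerical constants fit: the bound $8/\sqrt{80} + 8/80 < 1$ is tight enough that one needs to track it carefully, which is exactly why the statement uses the specific constant $80$ in the denominator of $\priv_L$ rather than a smaller constant. Everything else is a mechanical substitution into the Feldman--McMillan--Talwar bound, so no further nontrivial ideas are needed beyond the verification of the two side conditions and the inequality above.
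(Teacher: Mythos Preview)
Your proposal is correct and follows essentially the same approach as the paper: verify the hypotheses of~\cref{theorem: amplification by shuffling}, substitute $e^{\priv_L}=\frac{\priv^2 n}{80\log(4/\privdelta)}$ into its conclusion, bound $\frac{e^{\priv_L}-1}{e^{\priv_L}+1}<1$, and check numerically that the remaining expression is below $\priv$. The paper's proof is terser (it asserts the parenthesized sum is $<\priv/8$ rather than your $\sqrt{0.8}+0.1<1$ computation) and does not spell out the verification of the side conditions as carefully as you do, but the argument is the same.
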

\begin{proof}
    For $\priv,\privdelta$ and $\priv_L$ as above we have $0<\priv_L\leq\log\frac{n}{16\log(2/\privdelta)}$. Applying~\autoref{theorem: amplification by shuffling}, we get $(\priv',\privdelta)$-differential privacy for
    \[
    \priv' \leq \log\left(1 + 8\underbrace{\frac{e^{\priv_L}-1}{e^{\priv_L} + 1}}_{<1}\underbrace{\left(\frac{\priv}{\sqrt{80}}+\frac{\priv^2}{80\log(4/\delta)}\right)}_{< \priv/8}\right)\leq \priv
    \]
    Proving the lemma.
\end{proof}

We can now prove~\cref{thm:main-shuffle}.
\begin{theorem}[Restatement of~\Cref{thm:main-shuffle}]\label{thm:restated-shuffle}
    Let $r=\log_2 B$. There exists a protocol for \texttt{shuffle\--emp\--median}$(\{x_i\}_{i=1}^n,\alpha,\eps,\delta,r)$ with success probability $1-\failp$ provided that
    \[
    n=O\left( \left(\frac{1}{\acc^2} +\frac{1}{\priv^2}\right)\log\ab\sqrt{\log(1/\privdelta)\log\frac{\log\ab}{\failp}} \right).
    \]
    The protocol has $r=\log_2\ab$ rounds of adaptivity and queries shuffled batches of $n/\log_2(\ab)$ users. 
\end{theorem}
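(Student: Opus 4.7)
The algorithm is the binary search of~\cref{thm:ldp-nbs-naive}, but executed over shuffled batches. Randomly partition the $n$ users into $r=\log_2 B$ disjoint batches $I_1,\dots,I_r$ of size $n/r$ each (the random partition is independent of data and makes each batch's empirical CDF close to the overall empirical CDF). Initialize $[\ell_0,r_0]=[1,B]$. At round $t$, choose the midpoint $m_t=\lfloor(\ell_{t-1}+r_{t-1})/2\rfloor$, ask every user $i\in I_t$ to apply the binary randomized response mechanism of~\cref{def: binary rr} with some local parameter $\priv_L$ to the bit $[x_i\le m_t]$, and shuffle the resulting messages. Using the unbiased debiased estimator of~\cref{lem:empirical-coin-learn-rr}, produce an estimate $\hat p_t$ of $P_t=\frac{1}{|I_t|}\sum_{i\in I_t}[x_i\le m_t]$, and update the interval by moving left if $\hat p_t>1/2$ and right otherwise. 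After $r$ rounds, output the surviving coin.

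\textbf{Privacy.} Each batch uses the same $\priv_L$-LDP randomizer on disjoint user sets, so the privacy of the overall protocol is determined batch-by-batch. Setting
\[
\priv_L=\log\!\left(\tfrac{\priv^2(n/r)}{80\log(4/\privdelta)}\right),
\]
as in~\cref{lemma: amplification by shuffling}, guarantees that the shuffled output of each batch is $(\priv,\privdelta)$-shuffle-DP, provided the side condition $\priv>16\sqrt{\log(4/\privdelta)/(n/r)}$ holds. This side condition contributes a term of order $(\log B/\priv^2)\sqrt{\log(1/\privdelta)}$ to the final $n$, which is absorbed by the $1/\priv^2$ summand in the theorem.

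\textbf{Utility.} Two events must hold at every round. First, because the partition is uniformly random, Hoeffding (or~\cref{fact: bernstein}) gives $|P_t-F_X(m_t)|\le \alpha/2$ with failure probability $\le \failp/(2r)$ provided $n/r=\Omega(\log(r/\failp)/\alpha^2)$; this is the source of the $\log B/\alpha^2$ term. Second, applying~\cref{lem:empirical-coin-learn-rr} with accuracy $\alpha/2$ and failure probability $\failp/(2r)$ requires
\[
n/r \;=\; \Omega\!\left(\tfrac{e^{\priv_L}}{\alpha^{2}(e^{\priv_L}-1)^{2}}\log\tfrac{r}{\failp}\right),
\]
so that $|\hat p_t-P_t|\le\alpha/2$ conditional on the partition event. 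A union bound over the $r$ rounds then gives $|\hat p_t-F_X(m_t)|\le\alpha$ simultaneously for every $t$ with probability at least $1-\failp$. Under this good event, the standard binary-search-with-errors invariant (any interval the algorithm keeps contains an $\alpha$-approximate median) is preserved at every step, so the returned coin is an $\alpha$-approximate empirical median.

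\textbf{Sample complexity calculation and main obstacle.} The heart of the proof is balancing the two inequalities so as to recover the stated form. Since $\priv_L\ge 1$ in the parameter regime of interest, $\tfrac{e^{\priv_L}}{(e^{\priv_L}-1)^2}=\Theta(e^{-\priv_L})=\Theta\!\bigl(\tfrac{r\log(1/\privdelta)}{\priv^2 n}\bigr)$. Substituting into the second constraint turns $n/r\ge \Omega(e^{-\priv_L}\log(r/\failp)/\alpha^2)$ into a quadratic in $n$, yielding
\[
n\;=\;\Omega\!\left(\tfrac{\log B}{\alpha\priv}\sqrt{\log(1/\privdelta)\,\log\tfrac{\log B}{\failp}}\right),
\]
which by AM--GM ($1/(\alpha\priv)\le \tfrac12(1/\alpha^2+1/\priv^2)$) is dominated by the bound in the statement. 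Combining with the Hoeffding term $(\log B/\alpha^2)\cdot\sqrt{\log\frac{\log B}{\failp}}$ and the shuffling side condition $(\log B/\priv^2)\sqrt{\log(1/\privdelta)}$ yields the claimed bound. The main technical obstacle is precisely this balancing: we must choose $\priv_L$ large enough that shuffling delivers the desired $(\priv,\privdelta)$-DP, but small enough that the per-user randomized response is not too noisy, and this trade-off only closes because the batch size $n/r$ appears on both sides of the privacy and utility constraints.
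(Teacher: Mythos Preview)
Your proposal is correct and follows essentially the same route as the paper: run naive binary search in $r=\log_2 B$ rounds over shuffled batches of size $n/r$, set the local parameter $\priv_L$ via \cref{lemma: amplification by shuffling}, substitute $e^{-\priv_L}=\Theta(r\log(1/\delta)/(\priv^2 n))$ into the per-coin estimation bound of \cref{lem:empirical-coin-learn-rr} to obtain the $\tfrac{\log B}{\alpha\priv}\sqrt{\log(1/\delta)\log\frac{\log B}{\beta}}$ term, and then relax $1/(\alpha\priv)$ to $1/\alpha^2+1/\priv^2$ to absorb the side constraints. The only cosmetic difference is that you control the batch-versus-full CDF gap directly via Hoeffding on the random partition, whereas the paper invokes \cref{lemma:CDF-bound}; both yield the $\log B/\alpha^2$ restriction (note your final paragraph writes this term with an extraneous square root on $\log\frac{\log B}{\beta}$, inconsistent with your earlier correct $n/r=\Omega(\log(r/\beta)/\alpha^2)$, but this does not affect the argument).
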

\begin{proof}[Proof of~\Cref{thm:main-shuffle}]    
    Take the sample complexity achieved in~\autoref{thm:ldp-nbs-naive}, and note that we are in the $\priv\gg 1$ regime as we will be applying taking $\priv_L\in O(\log n)$. We therefore have
    \[
    n= O\left( \frac{\log\ab}{\acc^2 e^{\priv_L}}\log\frac{\log\ab}{\failp} \right)
    \]
    We apply~\autoref{lemma: amplification by shuffling} while noting that at each stage we shuffle $n'=n/\log(\ab)$ users. Setting $\priv_L=\log\frac{\priv^2 n}{80\log(\ab)\log(4/\privdelta)}$ and rearranging gives that for each step of the binary search we have enough users to accurately learn the CDF of the remaining suffix of users within error $\alpha/2$ with probability $\beta/\log B$. Union bounding over all $\log B$ steps of the binary search, we conclude that with probability $1-\beta$, every step succeeds. This gives sample complexity,
    \[
    n= O\left(\frac{\log\ab}{\acc\priv} \sqrt{\log(1/\privdelta)\log\frac{\log\ab}{\failp}} \right),
    \]
    but we are not finished. We have to handle the multiple restrictions on parameter regimes 
    \[
    O\left(\frac{\log\ab}{\acc\priv} \sqrt{\log(1/\privdelta)\log\frac{\log\ab}{\failp}} \right)\geq n >\max\left\{\frac{\log\ab}{\acc^2},\frac{{\log\ab\log(1/\privdelta)}}{\priv^2}\right\}.
    \]
    The right hand side of this inequality comes from restrictions present in~\cref{lemma:CDF-bound,lemma: amplification by shuffling} on $n$ and $\priv$ respectively, the latter comes from using $n'=n/\log(\ab)$ in the restriction on $\priv$. 
    A trivial solution is be to take $1/(\acc\priv)$ and replace it with $1/\min\{\acc^2,\priv^2\}$, which gives
    \[
    n=O\left( \left(\frac{1}{\acc^2} +\frac{1}{\priv^2}\right)\log\ab\sqrt{\log(1/\privdelta)\log\frac{\log\ab}{\failp}} \right).
    \]

\end{proof}

This result has an improved dependence in $\priv$ and $\acc$, and could be preferable from a communication perspective. Rounds of adaptivity are a restricting factor in distributed learning, and our goal was to understand the trade offs possible under privacy constraints. It is of practical interest to know whether the constraint on $n$ in~\cref{lemma: amplification by shuffling} can be improved from $n=\Omega\left(\log(1/\privdelta)/\priv^2\right)$ to $\Omega\left(\log(1/\privdelta)/\priv\right)$. This, in combination with a strengthening of~\cref{lemma:CDF-bound} to have a linear dependence on $\acc$, would allow the analysis to go through with only a $1/(\acc\priv)$ dependence.
\section{A Note on the Continuous Case}\label{sec:continuous} If we replace the discrete domain $[B]$ with a continuous one, say $[0,1]$, it is generally impossible to obtain quantile error $o(1)$ using a finite number of samples under LDP. This follows from our lower bounds by discretizing $[0,1]$ into $[B]$ buckets and letting $B\to \infty$. In fact, this is a general issue for quantile or range estimation problems in DP (even beyond the local model), which is why related work studies the discrete setting~\cite{BeimelNS16twotologstar,Bun2015logstar,Kaplan2020closinggap,kulkarni2019answering}. On a more positive note, if we impose mild guarantees on the family of possible distributions the samples can come from, our result has implications in the continuous setting as well. For instance, if we assume that there are (known) numbers $-\infty=y_0<y_1<\cdots <y_B=\infty$ such that in any interval $[y_i,y_{i+1}]$, the emperical CDF increases by at most $\alpha/2$, then we can again obtain quantile error $\alpha$ with $O(\frac{\log B}{\eps^2\alpha^2})$ users using our algorithm and bucketing users in the same interval $[y_i,y_{i+1})$. As the dependency on $B$
 in the number of samples is logarithmic, this might allow 
 $B$ to be quite large, with a correspondingly small quantile error $\alpha$. We note that if the assumption on the CDF is incorrect, only the accuracy is affected while the algorithm remains private.
\section{Experiments}
\label{app: experiments}
\begin{algorithm*}[t]
\caption{\texttt{DPBayeSS} for empirical quantile estimation with local differential privacy, from Algorithm 3 in \cite{gretta2023sharp}}\label{alg: DPBayeSS}
\begin{algorithmic}[1]
\FUNCTION{\texttt{ReductionToGamma}$(\{x_{i}\}_{i,\dots,M}, B, \alpha, \varepsilon, \gamma)$}
\STATE $L\gets \texttt{DPBayesLearn}(\{x_{i}\}_{i,\dots,M}, B, \frac{1}{2}, \alpha, M, \varepsilon)$ \COMMENT{From Algorithm~\ref{alg: BayesLearn} with $\varepsilon$-\texttt{RR} on each coin flip}
\STATE $R\gets \{\}$\COMMENT{Get the $\gamma$-quantiles of $L$}
\FOR{$i \in \left[\left\lfloor \frac{|L|}{\left\lceil\gamma |L|\right\rceil}\right\rfloor\right]$} 
    \STATE append $L_{\lceil\gamma |L| \rceil i}$ to $R$
\ENDFOR
\STATE \textbf{return} $\texttt{Sorted}(\texttt{RemoveDuplicates}(R))$
\ENDFUNCTION\\
\hspace{0.5 cm}
\FUNCTION{\texttt{DPBayeSS}$(\{x\}_{i,\dots, n}, B, \varepsilon)$}
\STATE $M_{B_1} \gets \left\lfloor\frac{n \log(B)}{\log(B)+\log\log(B)+1}\right\rfloor$
\STATE $M_{B_2} \gets \left\lfloor\frac{n \log\log(B)}{\log(B)+\log\log(B)+1}\right\rfloor$
\STATE $\Tilde{\alpha} \gets 0.6 \sqrt{\frac{\log B}{n}}$ \COMMENT{Hyper-parameter obtained empirically in Section \ref{sec:hyper-parameter}}
\STATE $R\gets \texttt{ReductionToGamma}(\{x_{i}\}_{i=1,\dots, M_{B_1}}, B, \Tilde{\alpha}, \varepsilon, \frac{1}{\log^2B})$
\IF{$|R|>13$}
    \STATE $R\gets [1]+R+[n]$ \COMMENT{Pad $R$ with the extremes of the initial problem.}
    \STATE $R \gets \texttt{ReductionToGamma}(\{x_{i}\}_{i=M_{B_1}+1,\dots, M_{B_1}+M_{B_2}}, R, \Tilde{\alpha}, \varepsilon, \frac{1}{13})$ \COMMENT{Reducing $R$ to fixed size $|R|\leq 13$}
    \STATE \textbf{return } Apply \texttt{NoisyBinarySearch} with $\varepsilon$-\texttt{RR} over the coins in $R$ using dataset $\{x_i\}_{i=M_{B_1}+M_{B_2}+1,\dots, n}$ to find probability closest to $\frac{1}{2}$
\ELSE
        \STATE \textbf{return } Apply \texttt{NoisyBinarySearch} with $\varepsilon$-\texttt{RR} over the coins in $R$ using dataset $\{x_i\}_{i=M_{B_1}+1,\dots, n}$ to find probability closest to $\frac{1}{2}$
\ENDIF
\ENDFUNCTION
\end{algorithmic}
\end{algorithm*}

All experiments were carried out using an Intel
Xeon Processor W-2245 (8 cores, 3.9GHz), 128GB RAM, Ubuntu 20.04.3, and Python 3.11. We considered the \emph{success rate} for an error $\alpha$
\begin{equation*}
    \text{success rate} := \text{Pr}\left[F_X(\tilde{m})<\frac{1}{2}+\alpha \wedge F_X(\tilde{m}+1)>\frac{1}{2}-\alpha\right],
\end{equation*}
where $F_X$ is the CDF of the sensitive dataset and $\tilde{m}$ is the median released by the algorithm, and the absolute quantile error
\begin{equation*}
    \text{error} := |F_{X}(\tilde{m})-F_{X}(m)|,
\end{equation*}
as the main metrics for our evaluation.
Each algorithm was executed 200 times to compute the empirical cumulative distribution of the absolute quantile error. As error we opted for the standard deviation of the sample average success rate, calculated as:
\begin{equation*}
    \sigma = \sqrt{\frac{\tilde{p}(1-\tilde{p})}{200}} \qquad \text{where }\quad \tilde{p} = \frac{1}{200}\sum_{i=1}^{200} \left[F_X(\tilde{m}_i)<\frac{1}{2}+\alpha \wedge F_X(\tilde{m}_i+1)>\frac{1}{2}-\alpha\right]
\end{equation*}

\paragraph{Data Generation} The income dataset was generated using a Pareto distribution $p(x) \sim \frac{1}{x^{\gamma+1}}$, a well studied distribution to model income data \cite{arnold2014pareto}. We generate $n = \{2500, 5000, 7500\}$ positive integers by sampling from the continuous Pareto distribution with shape $\gamma = 1.5$ and multiplicative factor  $2000$, and then rounding them. For different coin domains $[\ab]$ we clip the dataset to get integer values in $[\ab]$. To compare \texttt{DpBayeSS} and \texttt{DpNaiveNBS} across various coin domains $[\ab]$ for a fixed privacy budget, we generated $n=2500$ integers by sampling from a uniform distribution over a random interval within $[\ab]$. This approach avoids having the median around $\ab/2$, which would make the problem too straightforward.

\paragraph{Implementation Details} 
These mechanisms are run over the entire dataset, meaning that each user is sample once.
\begin{itemize}
\item\texttt{DpNaiveNBS} is a standard differentially private implementation of noisy binary search introduced in \cite{karp2007noisy}, where each coin flip is privatized using randomized response with $\varepsilon$ privacy budget, which we call $\texttt{RR}_{\varepsilon}$. 
It searches the coin with probability closest to $\frac{1}{2}$ using standard binary search. To estimate a coin probability it  samples without replacement batches of size $b=\lfloor \frac{n}{\lceil \log_2\ab\rceil }\rfloor$, then it redistributes the remaining samples $n-\lceil\log_2\ab\rceil b$ by adding one sample to each batch starting from the first one. 
Due to randomize response, any empirical probability $p_{c} = \frac{1}{b}\sum_{x \in \text{batch}} \texttt{RR}_{\varepsilon}\big([x \leq c]\big)$ is unbiased $\Tilde{p}_c = \frac{e^{\varepsilon}+1}{e^{\varepsilon}-1}(p_{c}-\frac{1}{e^{\varepsilon}+1})$ before confronting it with $\frac{1}{2}$. 

\item\texttt{DpBayeSS} is a implementation of Algorithm 3 in \cite{gretta2023sharp}, with some minor changes ($\gamma$ is set to $1/13$ in line 11 of Algorithm 3),  where each coin flip is privatized using randomized response. The algorithm runs at most two \texttt{DpBayesLearn} (a differentially private implementation of Algorithm \ref{alg: BayesLearn} where each coin flip is privatized using randomized response) and further makes use of \texttt{DpNaiveNBS} on the remaining coins to get the one with head probability closest to $\frac{1}{2}$. The sample budget $n$ is split to $M_{B_1}$, $M_{B_2}$ for the two \texttt{DpBayesLearn} and $M_{S}$ for the \texttt{DpNaiveNBS}. The split satisfies the following ratios suggested in \cite{gretta2023sharp} $\frac{M_{B_1}}{M_{B_2}} = \frac{\log B}{\log \log B}$, $\frac{M_{B_1}}{M_{S}} = \log B$, and $\frac{M_{B_2}}{M_{S}} = \log\log B$. 

\texttt{BayesLearn} is designed to take $\alpha$ as an input for updating the weights during the Bayesian learning step (see Algorithm \ref{alg: BayesLearn}), hence it assumes a sufficiently large number of users. To reverse this approach so using $n$ as input, we empirically determine the minimum value of $\alpha$ achievable by the algorithm.
For \texttt{DpBayesLearn} and  $\varepsilon<1$ we showed that to get an error $\alpha$ we need to solve for an error $\tilde{\alpha} = \frac{\alpha\varepsilon}{8}$. 
For a fixed $n$, the error cannot be smaller than $\alpha \geq 8c \frac{\sqrt{\log B}}{\varepsilon\sqrt{n}}$ for some constant $c>0$, therefore 
we have that $\tilde{\alpha} \geq c\sqrt{\frac{\log B}{n}}$. 
We analyze different values of $c$ in the hyper-parameter selection section in order to get a value of $c$ such that the algorithm, run with $\tilde{\alpha} = c\sqrt{\frac{\log B}{n}}$, gives better results. For completeness and full reproducibility we provide a pseudocode of our implementation in Algorithm~\ref{alg: DPBayeSS}

\item\texttt{Hierarchical Mechanism} is built according to \cite{kulkarni2019answering}. Essentially, we constructed a tree with branching factor equal to 4 and at each level we store the 4-adic decomposition of $[B]$. For example, if $B = 4^9$ at the first level of the tree, composed by four nodes, is stored $\{[1,\dots, 4^{8}], [4^{8}+1, \dots, 2\cdot 4^{8}], [2\cdot 4^{8}+1, \dots, 3\cdot 4^8], [3\cdot 4^{8}+1, \dots, 4^9]\}$, while on the leaves are stored all the possible singletons. Each user selects a random level of the tree and reports its position using \emph{unary encoding}\cite{wang2017locally, cormode2021frequency}. For this LDP protocol we used the public library at the following GitHub repository\footnote{\url{https://github.com/Samuel-Maddock/pure-LDP}}. After filling the tree, we computed the whole cumulative distribution and released the coin with closest value to $\frac{1}{2}$.
\end{itemize}
\begin{figure*}[t]
    \centering
    
    \begin{subfigure}[t]{\textwidth}
        \centering
        \begin{minipage}{0.48\textwidth}
            \centering
            \includegraphics[width=1\linewidth]{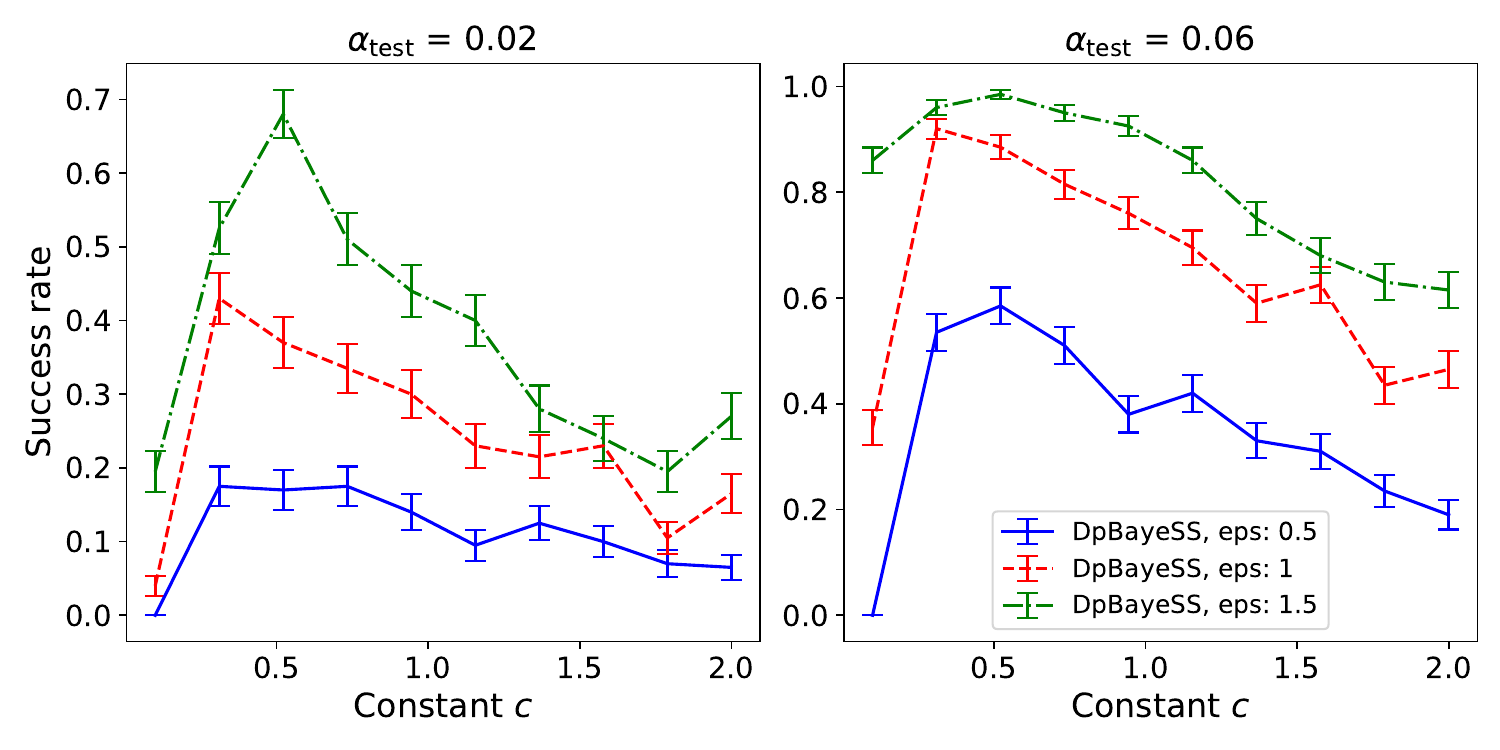}
            \subcaption*{Success rate vs. constant $c$}
        \end{minipage}%
        \hspace{0.02\textwidth}
        \begin{minipage}{0.48\textwidth}
            \centering
            \includegraphics[width=1\linewidth]{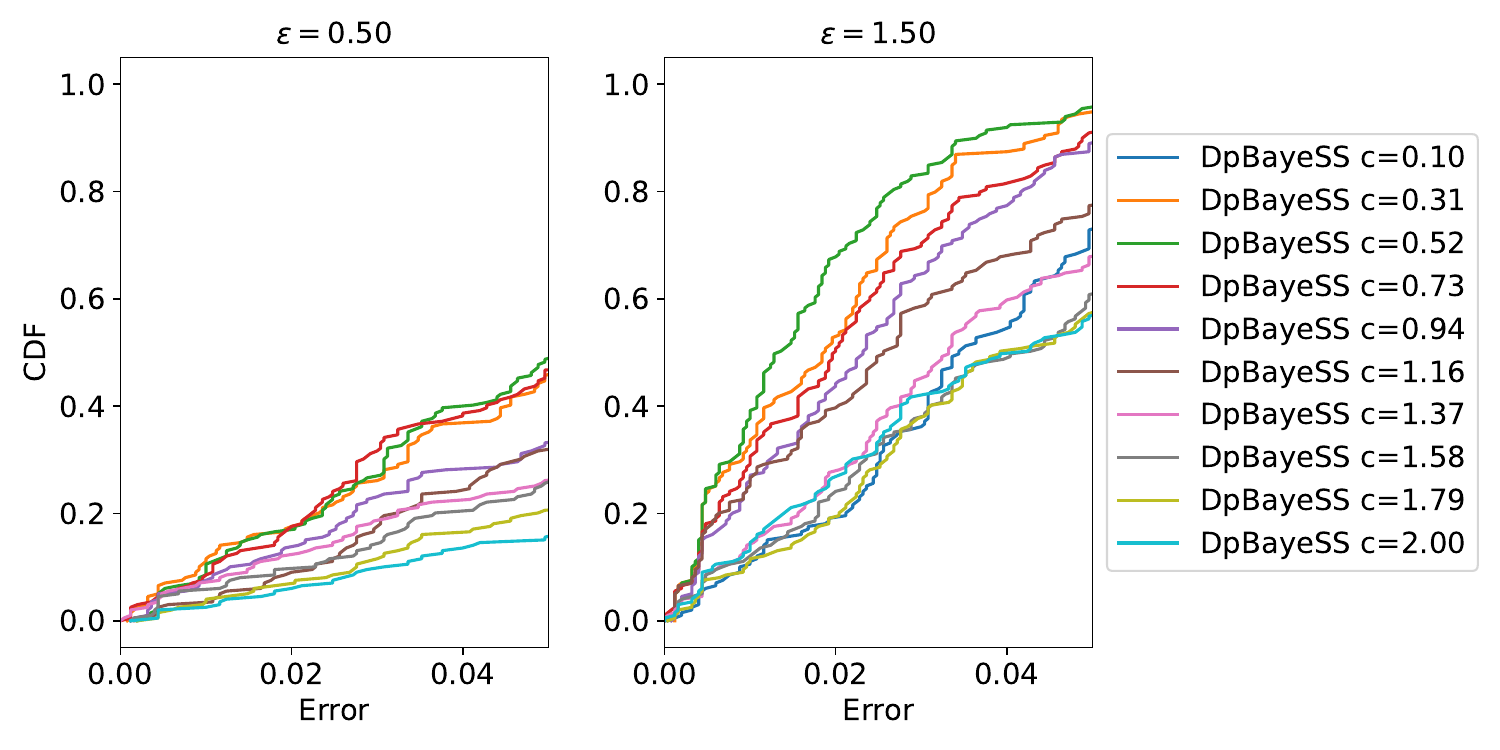}
            \subcaption*{Cumulative distribution of absolute error}
        \end{minipage}
        \caption{Experiments for $n=2500$ and $\ab = 4^9$}
    \end{subfigure}
    
    \vspace{1em} %
    
    \begin{subfigure}[t]{\textwidth}
        \centering
        \begin{minipage}{0.48\textwidth}
            \centering
            \includegraphics[width=1\linewidth]{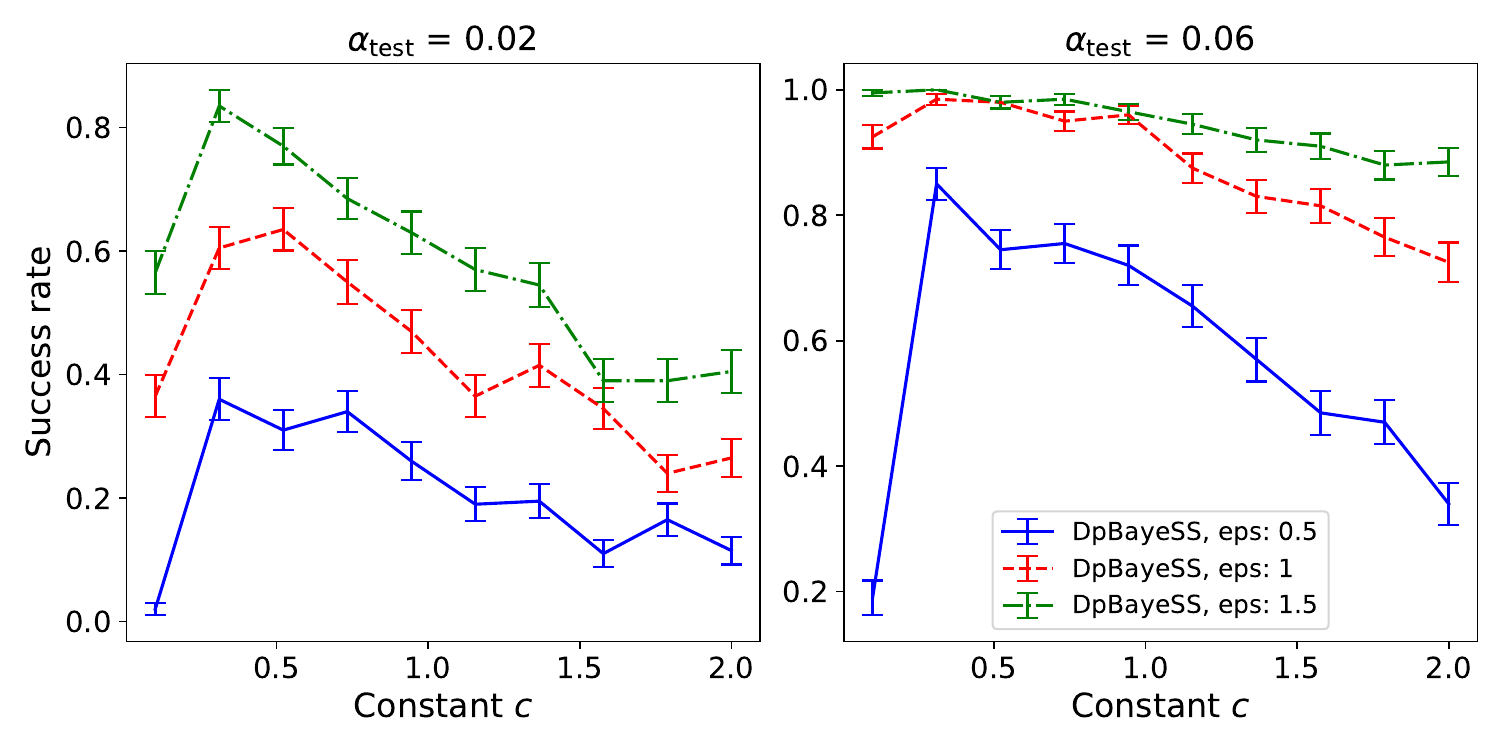}
            \subcaption*{Success rate vs. constant $c$}
        \end{minipage}%
        \hspace{0.02\textwidth}
        \begin{minipage}{0.48\textwidth}
            \centering
            \includegraphics[width=1\linewidth]{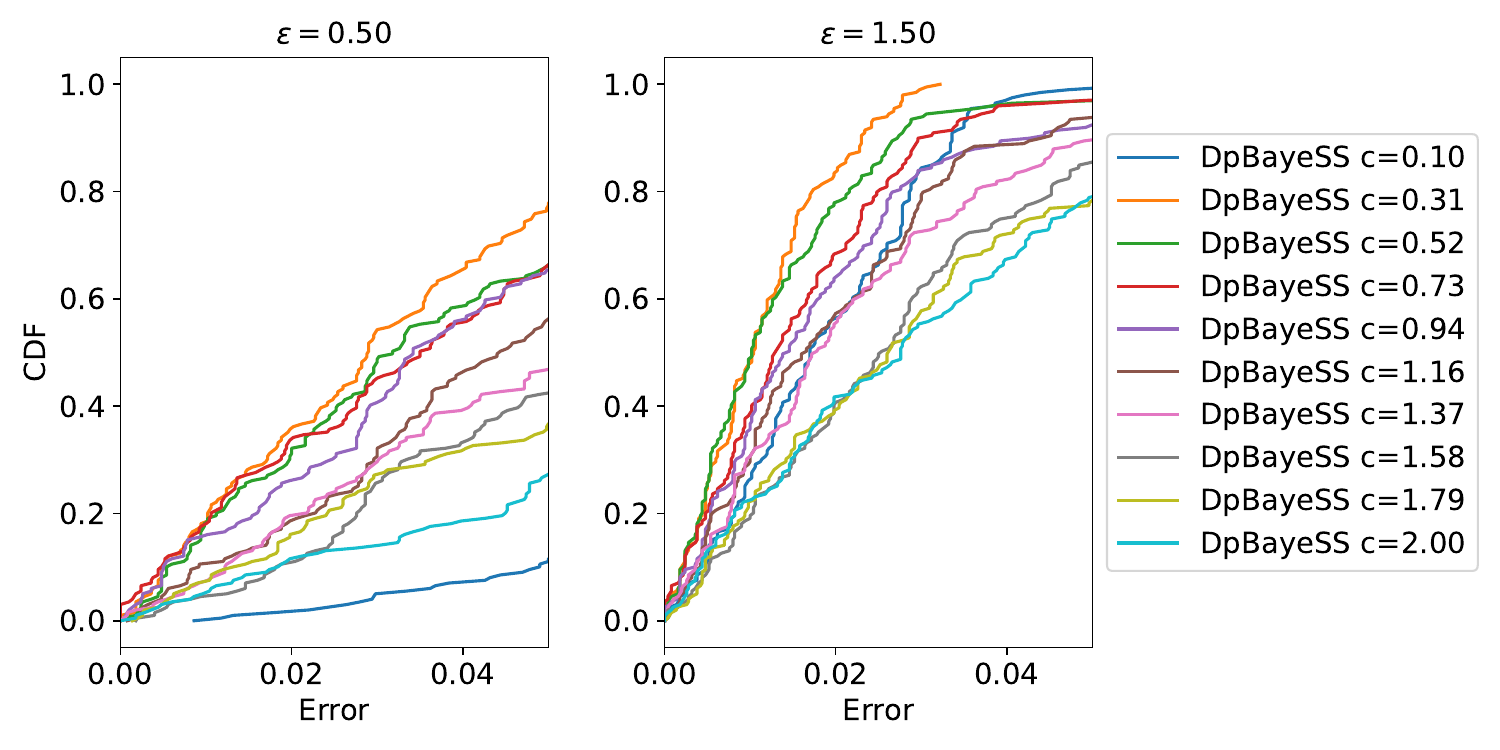}
            \subcaption*{Cumulative distribution of absolute error}
        \end{minipage}
        \caption{Experiments for $n=5000$ and $\ab = 4^8$}
    \end{subfigure}

    \caption{\small Experiments to estimate the best constant $c$ to compute $\alpha_{\text{update}}=c\sqrt{\frac{\log \ab}{n}}$.}
    \label{fig: find-alpha}
\end{figure*}
\newpage
\subsection{Hyper-parameter selection}\label{sec:hyper-parameter}
To determine the optimal parameter for updating \texttt{DpBayesLearn} given a fixed number of users $n$, coins $B$, and varying privacy budgets $\varepsilon \in \{0.5, 1, 1.5\}$, we conducted experiments using \texttt{DpBayeSS} with different update parameters $\alpha_{\text{update}} = c \sqrt{\frac{\log B}{n}}$. These experiments were performed on two distinct datasets generated by sampling from a Pareto distribution, the outcomes are illustrated in \autoref{fig: find-alpha}. By analyzing different $\alpha_{\text{test}}$ values and the error distribution across various privacy budgets, we observed that the algorithm performs poorly at $c=0.1$. However, within the range $c \in [0.3, 0.7]$, the performance stabilizes, and accuracy decreases for $c > 0.7$. Therefore, an effective range for the parameter $c$ is $[0.3, 0.7]$. Based on this analysis, we chose to use $c = 0.6$. We note that our analysis is tailored to the high-privacy regime; however, in practice, this constant also yields a well-performing algorithm for $\varepsilon < 5$.
\newpage
\subsection{Comparison analysis} \label{sec: comprehensive analysis}
\begin{figure*}[t]
    \centering
    
    \begin{subfigure}[t]{\textwidth}
        \centering
        \begin{minipage}{0.48\textwidth}
            \centering
            \includegraphics[width=1\linewidth]{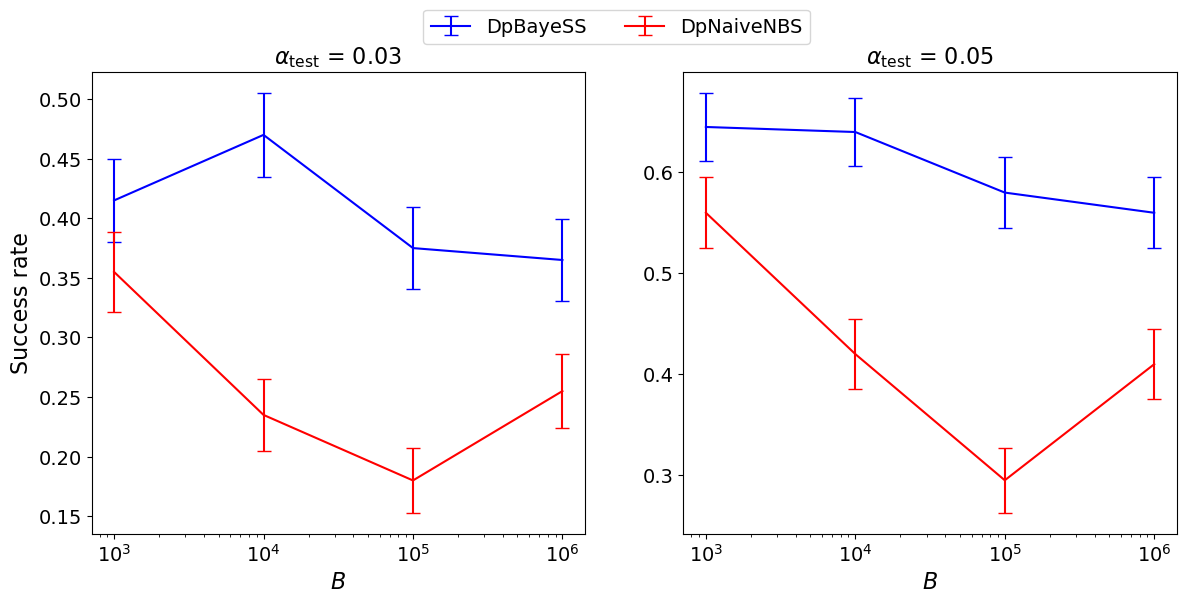}
            \subcaption*{Success rate vs. coin domain size $B$}
        \end{minipage}%
        \hspace{0.02\textwidth}
        \begin{minipage}{0.48\textwidth}
            \centering
            \includegraphics[width=1\linewidth]{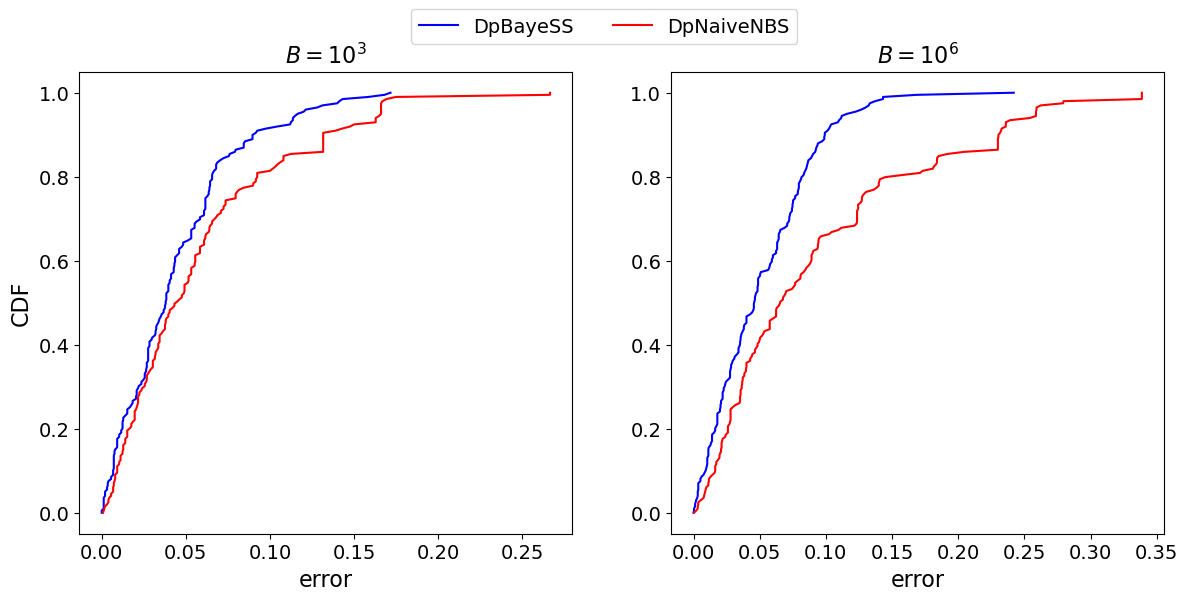}
            \subcaption*{Cumulative distribution of absolute error}
        \end{minipage}
        \caption{Experiments for $n=2500$ and $\varepsilon = 0.5$}
    \end{subfigure}
    
    \vspace{1em} %
    
    \begin{subfigure}[t]{\textwidth}
        \centering
        \begin{minipage}{0.48\textwidth}
            \centering
            \includegraphics[width=1\linewidth]{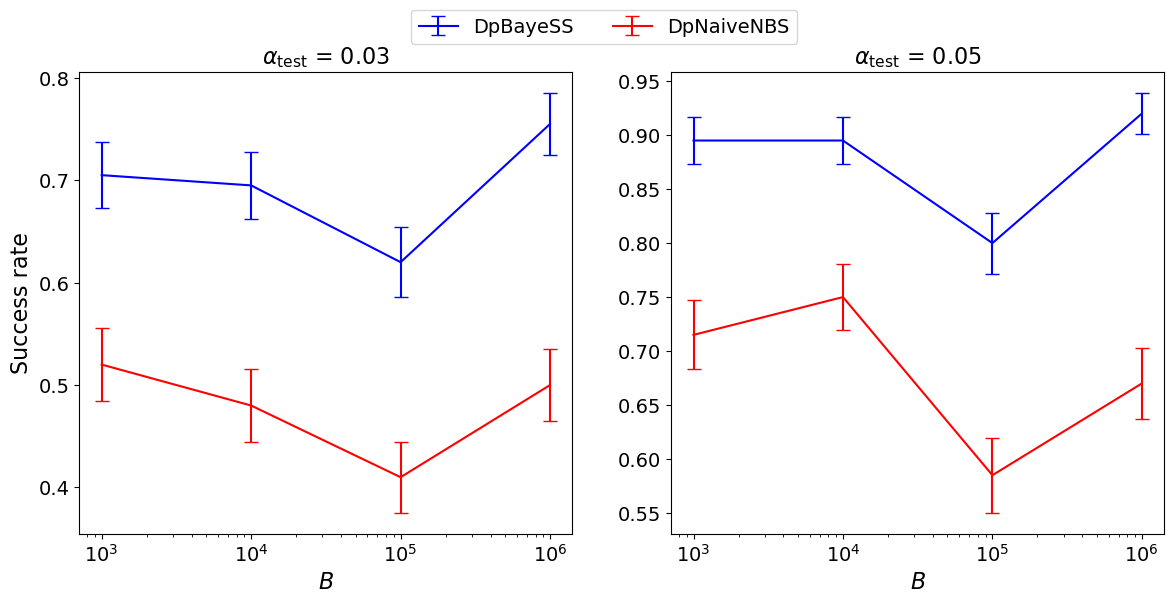}
            \subcaption*{Success rate vs. coin domain size $B$}
        \end{minipage}%
        \hspace{0.02\textwidth}
        \begin{minipage}{0.48\textwidth}
            \centering
            \includegraphics[width=1\linewidth]{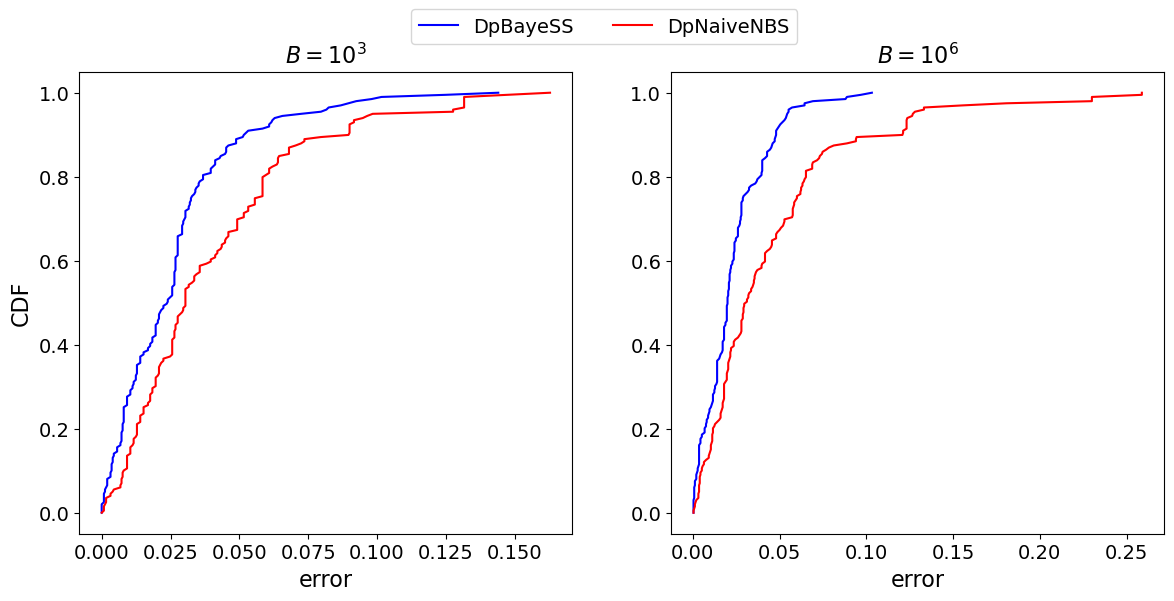}
            \subcaption*{Cumulative distribution of absolute error}
        \end{minipage}
        \caption{Experiments for $n=2500$ and $\varepsilon = 1$}
    \end{subfigure}

    \caption{\small Experiments run over a dataset obtained by sampling $n$ random integers over a random subset of $[B]$.}
    \label{fig: different B}
\end{figure*}
In \autoref{fig: comparison} we run the three algorithms two Pareto like dataset with $n=\{2500, 7500\}$ and $B = \{4^9, 4^8\}$ with various privacy budgets $\varepsilon \in [0.1, 5]$. We observed how the adaptive mechanisms \texttt{DpBayeSS} and \texttt{DpNaiveNBS} outperform a non adaptive mechanism such as \texttt{Hierarchical Mechanism}. This superior performance is not surprising as the former algorithms are tailored specifically for median estimation. In contrast, \texttt{Hierarchical Mechanism} constructs a differential private data structure capable of answering any range queries with an error of $\text{polylog}(B)$. From our results it is clear the \texttt{DpBayeSS} is more likely to return a coin with low quantile error than \texttt{DpNaiveNBS}, both for $\varepsilon < 1$ and $\varepsilon > 1$. This result aligns with the findings in \cite{gretta2023sharp}, where the authors conducted experiments demonstrating that \texttt{BayeSS} can achieve the same error rate as \texttt{NaiveNBS} (algorithms without randomized response) using fewer coin flips and, consequently, fewer user samples.

We conducted further experiments to evaluate the behavior of \texttt{DpBayeSS} and \texttt{DpNaiveNBS} for different coin domains $[B]$. The dataset is obtained by sampling $n=2500$ integers uniformly from a random interval in $[B]$, for any $B \in \{10^3, 10^4, 10^5, 10^6\}$. The main results are listed in \autoref{fig: different B} for two different privacy budget $\varepsilon\in \{0.5, 1\}$. We observed that \texttt{DpBayeSS} is more stable than \texttt{DpNaiveNBS} for different coin domains, and offers good utility for realistic privacy budget and error (e.g. for $\alpha_{\text{test}} = 0.05$, $\varepsilon=1$, and $n=2500$, \texttt{DpBayeSS} returns a $\alpha_{\text{test}}$-good median with probability higher than $0.8$).

\newpage
\begin{figure*}[h]
    \centering
    \begin{minipage}{0.48\textwidth}
        \centering
        \includegraphics[width=0.80\linewidth]{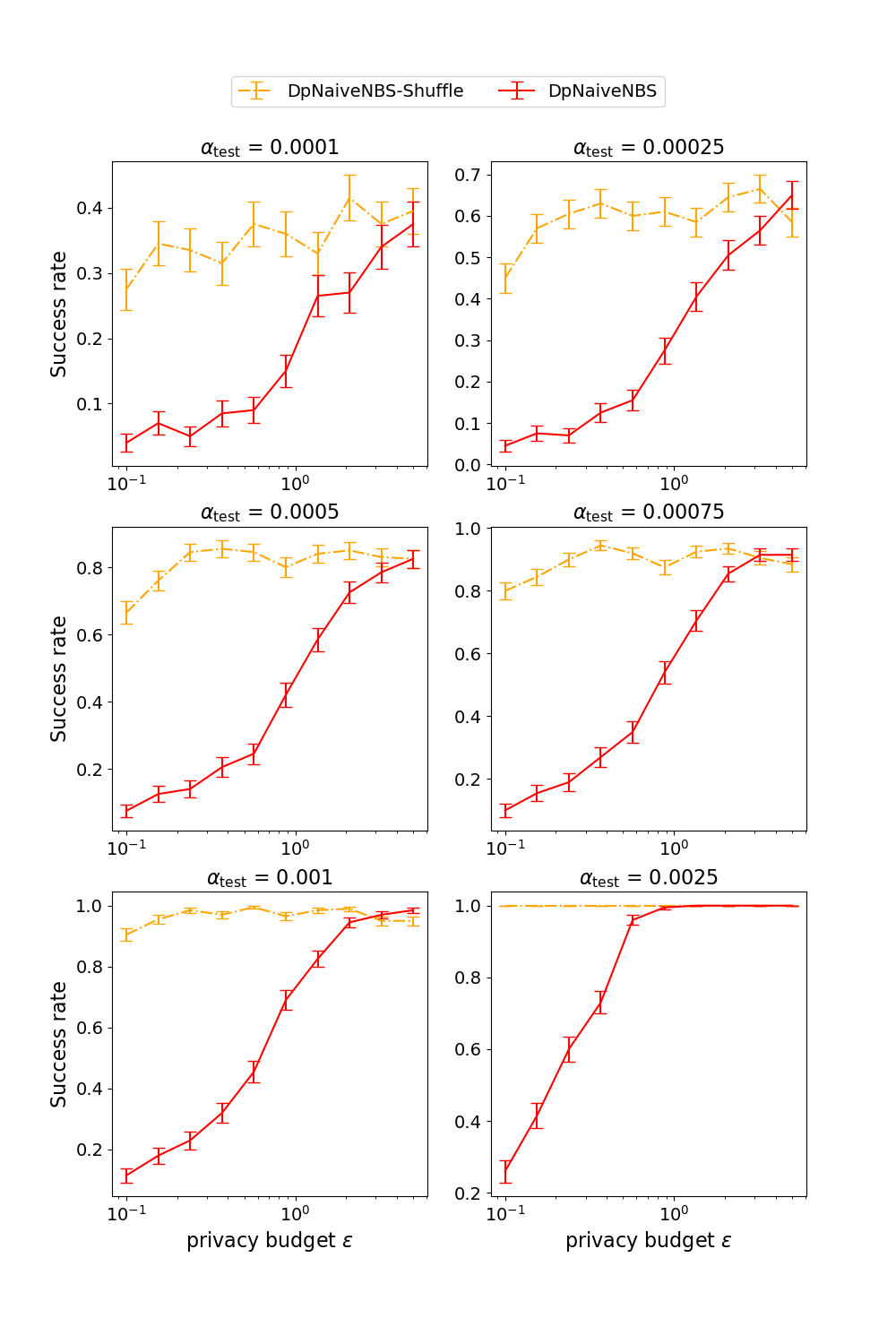}
        \subcaption*{Success rate vs. privacy budget $\varepsilon$}
\end{minipage}%
    \hspace{0.02\textwidth}
    \begin{minipage}{0.48\textwidth}
        \centering
        \includegraphics[width=0.80 \linewidth]{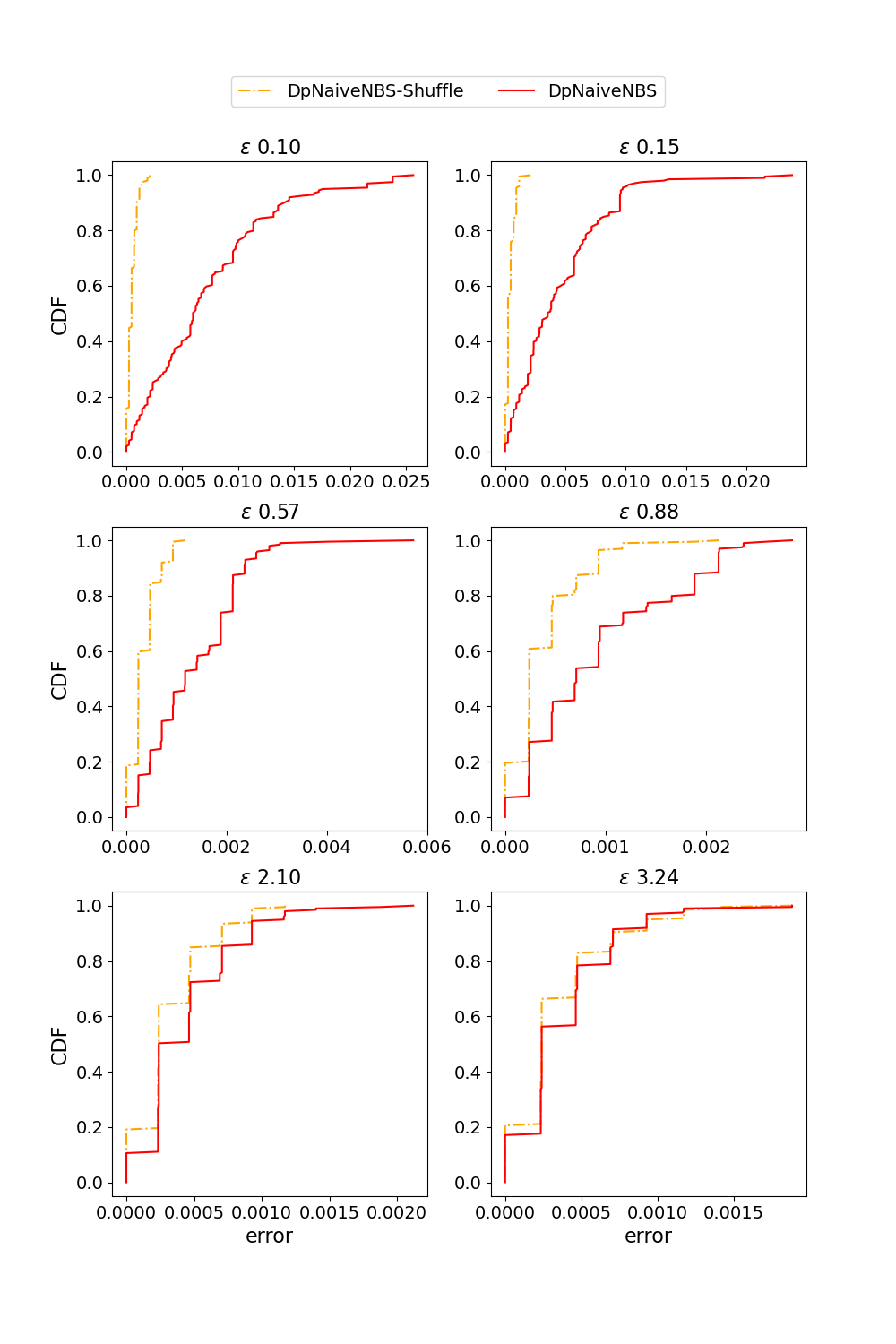}
        \subcaption*{Cumulative distribution of absolute error}
    \end{minipage}
    \caption{Experiments for $n=10^7$ and $B = 4^8$, with $\delta = 10^{-8}$ for shuffle DP}
    \label{fig: comparison large}
\end{figure*}
\subsection{Noisy Binary Search with Shuffling} 
When the number of users $n$ is sufficiently large, noisy binary search with shuffling, as described in Section \ref{sec:naive-shuffle}, can be implemented. The implementation mirrors that of \texttt{DpNaiveNBS},  but the privacy budget $\varepsilon_{\texttt{RR}}$ for randomizing the coin flip is determined as $\varepsilon_{\texttt{RR}} = \log\left(\frac{\varepsilon^2}{80 \log(4/\delta)} \left(\left\lfloor \frac{n}{\lceil \log_2\ab\rceil }\right\rfloor+1\right)\right)$ to achieve $(\varepsilon,\delta)$-differential privacy (DP) under shuffling, as established in \autoref{lemma: amplification by shuffling}.  Since $\varepsilon_{\texttt{RR}} \geq 0$ is required, the user population $n$ must be sufficiently large to enable this amplification technique. In particular, for $\delta =10^{-8}$, $\varepsilon\in \{0.1, 0.5, 1\}$, and $B =4^8$, it is necessary to have $n$ greater than $2.5\cdot 10^{7}, 7.8\cdot 10^4$ and $1.3\cdot 10^{4}$, respectively, making a direct comparison with the previous experiments impossible.

We generated a Pareto-like dataset (as described in the Data Generation section) with $n = 10^7$ and conducted experiments using $\delta = 10^{-8}$ and various privacy budgets $\varepsilon \in [0.1, 5]$. Due to limited computational resources and the non optimal implementation of \texttt{DpBayesSS} and \texttt{Hierarchical Mechanism}, we restricted our comparison to \texttt{DpNaiveNBS} and its variant implemented with privacy amplification through shuffling. The results are shown in \autoref{fig: comparison large}. For small privacy budgets, shuffling-based amplification provides higher utility, whereas for $\varepsilon > 3$, the performance of the algorithms converges and becomes comparable.

\begin{figure*}[t!]
    \centering
    
    \begin{subfigure}[t]{\textwidth}
        \centering
        \begin{minipage}{0.48\textwidth}
            \centering
            \includegraphics[width=0.80\linewidth]{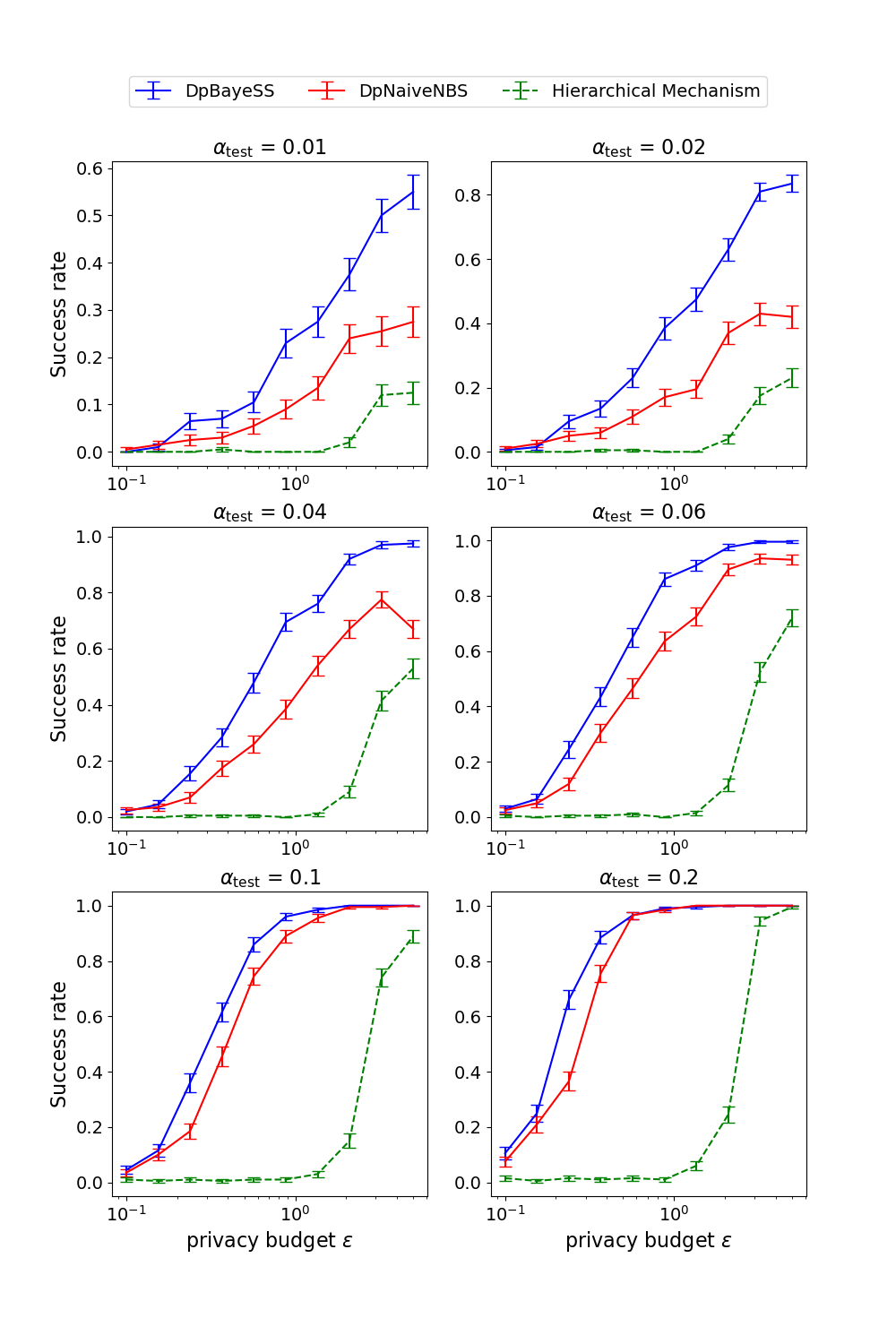}
            \subcaption*{Success rate vs. privacy budget $\varepsilon$}
    \end{minipage}%
        \hspace{0.02\textwidth}
        \begin{minipage}{0.48\textwidth}
            \centering
            \includegraphics[width=0.80 \linewidth]{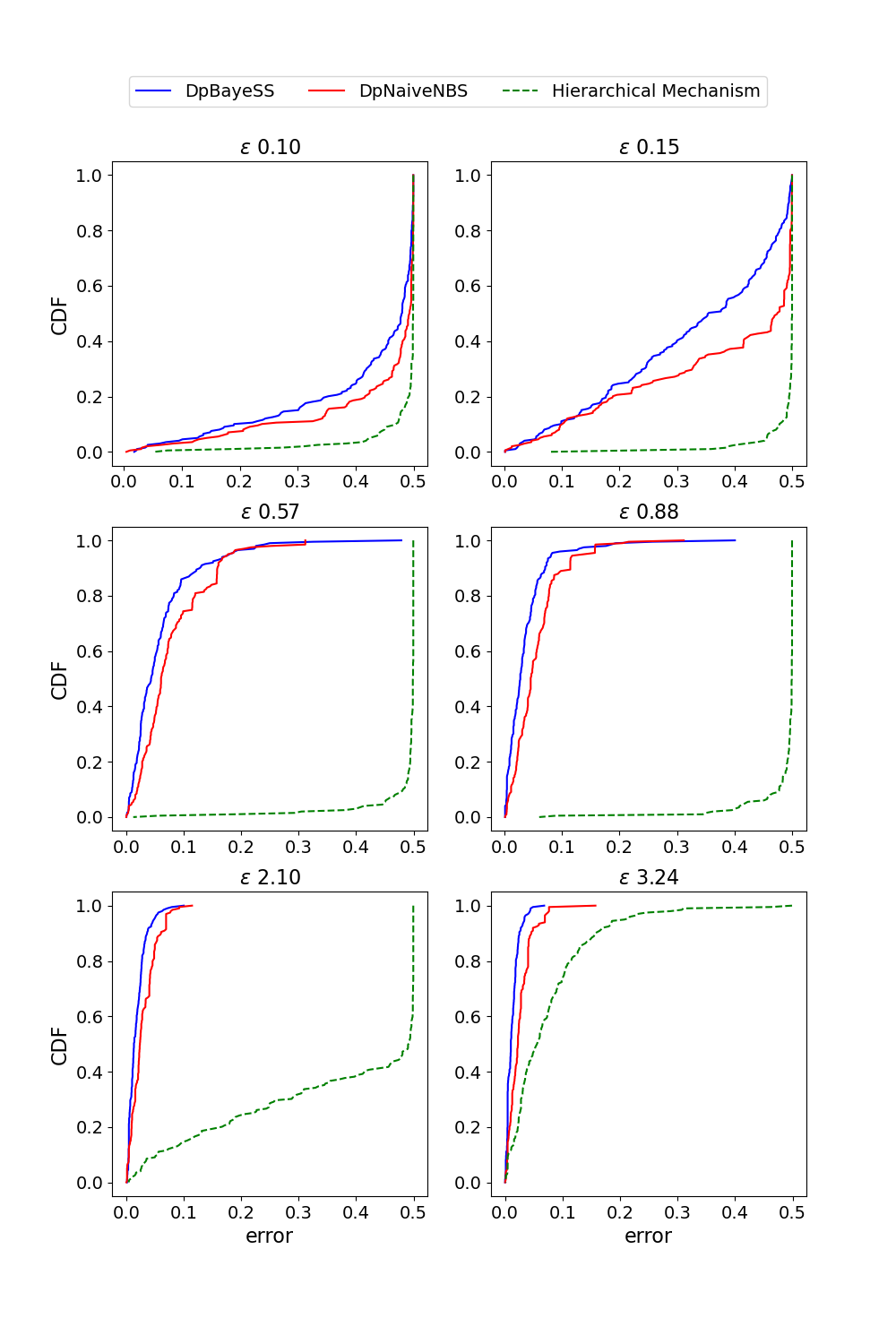}
            \subcaption*{Cumulative distribution of absolute error}
        \end{minipage}
        \caption{Experiments for $n=2500$ and $B = 4^9$}
    \end{subfigure}
    
    \begin{subfigure}[t]{\textwidth}
        \centering
        \begin{minipage}{0.48\textwidth}
            \centering
            \includegraphics[width=0.80\linewidth]{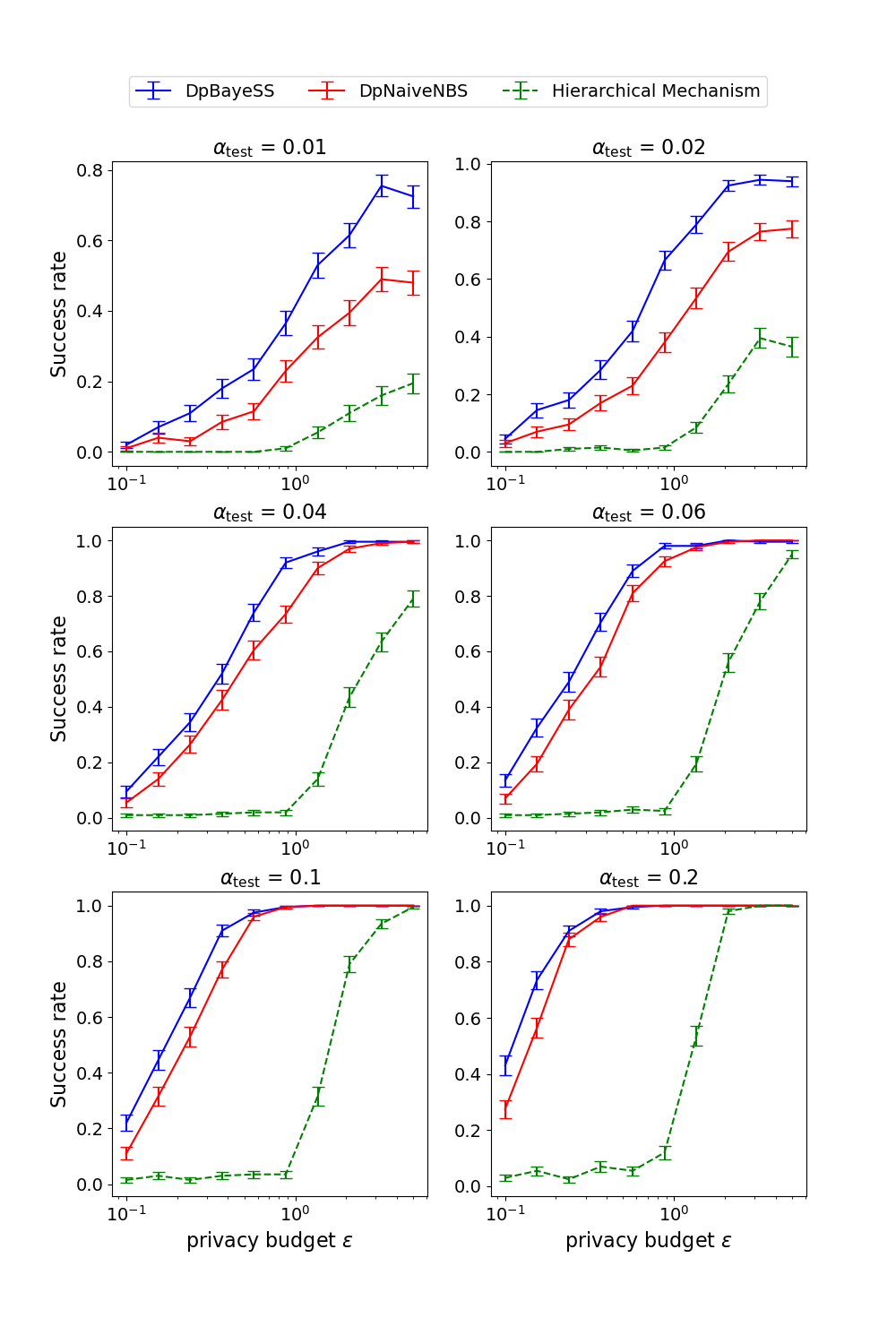}
            \subcaption*{Success rate vs. privacy budget $\varepsilon$}
        \end{minipage}%
        \hspace{0.02\textwidth}
        \begin{minipage}{0.48\textwidth}
            \centering
            \includegraphics[width=0.80 \linewidth]{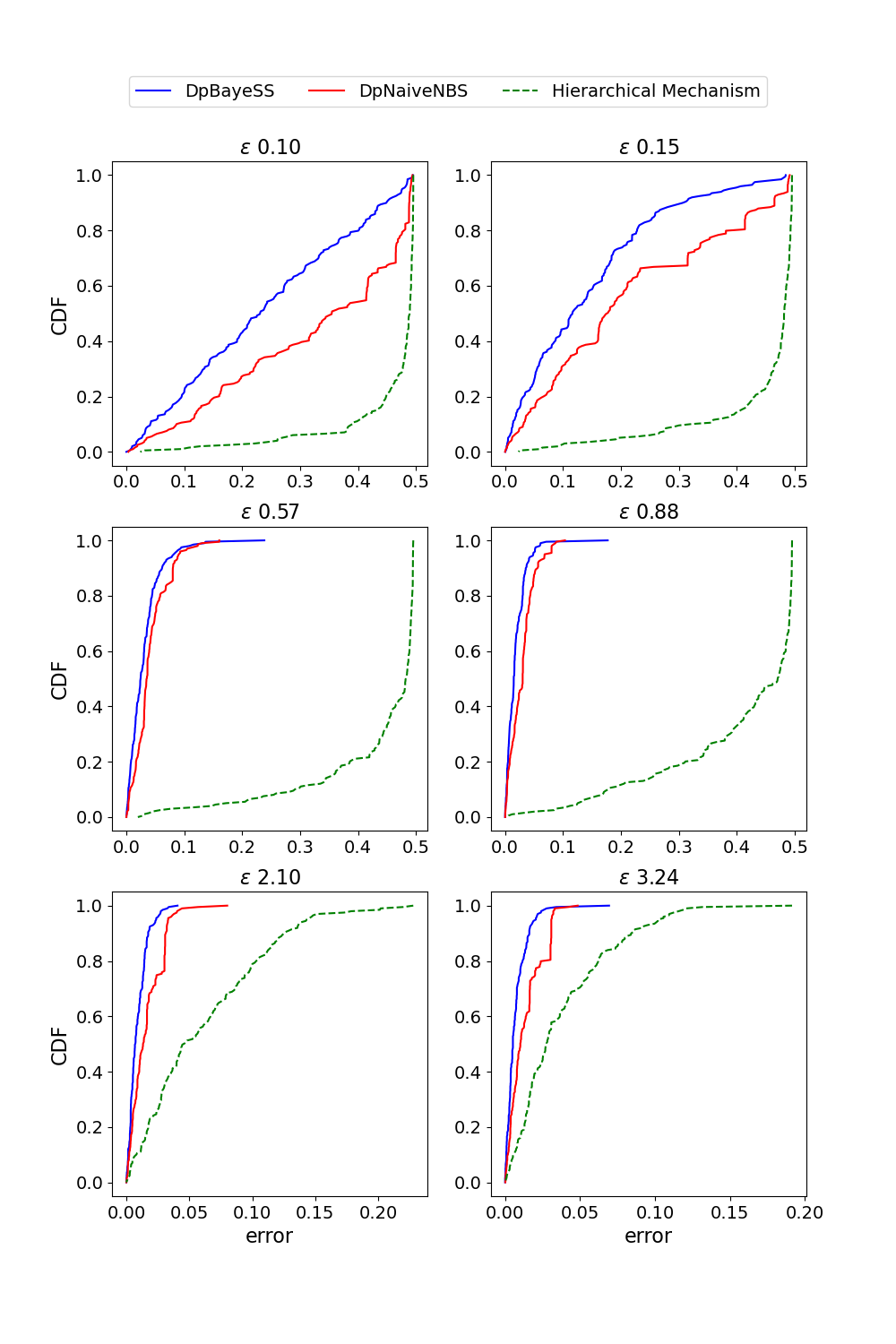}
            \subcaption*{Cumulative distribution of absolute error}
        \end{minipage}
    \caption{Experiments for $n=7500$ and $B = 4^8$}
    \end{subfigure}
    \caption{Comparison analysis.}
    \label{fig: comparison}
\end{figure*}

\end{document}